\definecolor{darkblue}{RGB}{0,0,127} 
\definecolor{darkgreen}{RGB}{0,150,0}
\newtheorem{theorem}{Theorem}
\renewenvironment{proof}[1][Proof]{\noindent\textbf{#1.} }{\ $\Box$}
\def\Z{\mathbb{Z}}
\def\C{\mathbb{C}}
\newcommand{\Eref}[1]{Eq.~(\ref{#1})}
\newcommand{\Sref}[1]{Sec.~\ref{#1}}
\newcommand{\Fref}[1]{Fig.~\ref{#1}}
\newcommand{\Aref}[1]{Appendix~\ref{#1}}
\newcommand{\ket}[1]{|{#1}\rangle}
\newcommand{\ketbra}[2]{|{#1}\rangle\!\langle{#2}|}
\newcommand{\braket}[2]{\langle{#1}|{#2}\rangle}
\newcommand{\proj}[1]{\ketbra{#1}{#1}}
\newcommand{\subfont}[1]{\mathsf{#1}}
\newcommand{\bond}[3][b]{
	\begin{tikzpicture}[scale=.7,baseline=-1mm,auto]
		\node (Lprime) at  (0.45,0) [label=left:$#2$]{}; 
		\node (Rprime) at (1.2,0) [label=right:$#3$]{}; 
		\node (L) at  (0,0){}; 
		\node (R) at (1.6,0){}; 
		\draw[decorate, decoration=
			{snake,amplitude=.6mm,segment length=1.6mm}]
			(L) to node {{\scriptsize $b$}} (R);
	\end{tikzpicture}
}
\newcommand{\siteT}[4]{
	\begin{tikzpicture}[thick,scale=.7,baseline=-1pt]
		\draw (0,0) circle (30pt); 
		\path node at (0,0) [anchor=south east] {$#1$}
			node at (0,0) [anchor=south west] {$#2$}
			node at (0,0) [anchor=north west] {$#3$}
			node at (0,0) [anchor=north east] {$#4$};
		\draw (0,-1.4) -- (0,1.4);
		\node at (0,-30pt) [anchor=north west] {$e$};
	\end{tikzpicture}
}
\newcommand{\siteTnoe}[4]{
	\begin{tikzpicture}[thick,scale=.7,baseline=-1pt]
		\draw (0,0) circle (30pt); 
		\path node at (0,0) [anchor=south east] {$#1$}
			node at (0,0) [anchor=south west] {$#2$}
			node at (0,0) [anchor=north west] {$#3$}
			node at (0,0) [anchor=north east] {$#4$};
		\draw (0,-1.4) -- (0,1.4);
	\end{tikzpicture}
}
\newcommand{\siteTnoeB}[4]{
	\begin{tikzpicture}[thick,scale=.7,baseline=-1pt]
		\draw (0,0) circle (32pt); 
		\path node at (0,0) [anchor=south east] {$#1$}
			node at (0,0) [anchor=south west] {$#2$}
			node at (0,0) [anchor=north west] {$#3$}
			node at (0,0) [anchor=north east] {$#4$};
		\draw (0,-1.42) -- (0,1.42);
	\end{tikzpicture}
}
\newcommand{\siteA}[4]{
	\;
	\begin{tikzpicture}[thick,scale=.7,baseline=-1pt]
		\draw (0,0) circle (33pt); 
		\path node at (0,0) [anchor=south east] {$#1$}
			node at (0,0) [anchor=south west] {$#2$}
			node at (0,0) [anchor=north west] {$#3$}
			node at (0,0) [anchor=north east] {$#4$};
		\draw[-stealth'] (0,-1.5) -- (0,0.1);
		\draw (0,0) -- (0,1.5);
		\node at (0,-33pt) [anchor=north west] {$e$};
	\end{tikzpicture}
}
\newcommand{\siteAb}[4]{
	\begin{tikzpicture}[thick,scale=.7,baseline=-1pt]
		\draw (0,0) circle (38pt); 
		\path node at (0,0) [anchor=south east] {$#1$}
			node at (0,0) [anchor=south west] {$#2$}
			node at (0,0) [anchor=north west] {$#3$}
			node at (0,0) [anchor=north east] {$#4$};
		\draw[-stealth'] (0,-1.5) -- (0,0.1);
		\draw (0,0) -- (0,1.5);
		\node at (0,-36pt) [anchor=north west] {$e$};
	\end{tikzpicture}
}
\newcommand{\siteN}[4]{
	\left(\begin{array}{c}#1\\#4\end{array}
		\begin{tikzpicture}[baseline=-0.55]
		\draw[-stealth'] (0,-0.55) -- (0,0.1);
		\draw (0,0) -- (0,0.7);
		\end{tikzpicture}
	\begin{array}{c}#2\\#3\end{array}\right)
}
\newcommand{\siteNXL}[4]{
\siteN{#1}{#2}{#3}{#4}
}
\newcommand{\siteGIGANTOR}[4]{
\siteN{#1}{#2}{#3}{#4}
}
\newcommand{\vertexNA}[9]{\;
	\begin{tikzpicture}
		[scale=4, baseline=19mm]
		\node at (.5,.5) [anchor = south east] {$v$};

		\def\off{0.45}
		\foreach \a in {0,90,180,270}{
			\draw[->,>=stealth',draw=red,line width=2.5,rotate around={\a:(.5,.5)}](0.5,\off+.55+#9*0.05)-- (0.5,0.55+\off-#9*0.05);
		}

		\def\g{.4}
		\draw[red,ultra thick,shift={(.5,.5)}] (-1+\g,-1+\g) grid (1-\g,1-\g);
	
		\def\tick{0.14}
		\def\compsize{1.25mm}
		\def\identsize{0.75mm}
		\filldraw[draw=red, ultra thick, fill=white, rotate around={-90:(.5,.5)}]
			(\tick,0.5-\tick) circle (\compsize) node at (\tick,0.5-\tick) {$#1$};
		\filldraw[draw=red, ultra thick, fill=white, rotate around={-90:(.5,.5)}]
			(\tick,0.5+\tick) circle (\compsize) node at (\tick,0.5+\tick) {$#2$};
		\filldraw[draw=red, ultra thick, fill=white, rotate around={180:(.5,.5)}]
			(\tick,0.5-\tick) circle (\compsize) node at (\tick,0.5-\tick) {$#3$};
		\filldraw[draw=red, ultra thick, fill=white, rotate around={180:(.5,.5)}]
			(\tick,0.5+\tick) circle (\compsize) node at (\tick,0.5+\tick) {$#4$};
		\filldraw[draw=red, ultra thick, fill=white, rotate around={-270:(.5,.5)}]
			(\tick,0.5-\tick) circle (\compsize) node at (\tick,0.5-\tick) {$#5$};
		\filldraw[draw=red, ultra thick, fill=white, rotate around={-270:(.5,.5)}]
			(\tick,0.5+\tick) circle (\compsize) node at (\tick,0.5+\tick) {$#6$};
		\filldraw[draw=red, ultra thick, fill=white, rotate around={0:(.5,.5)}]
			(\tick,0.5-\tick) circle (\compsize) node at (\tick,0.5-\tick) {$#7$};
		\filldraw[draw=red, ultra thick, fill=white, rotate around={0:(.5,.5)}]
			(\tick,0.5+\tick) circle (\compsize) node at (\tick,0.5+\tick) {$#8$};
	\end{tikzpicture}\;
}
\newcommand{\plaquetteNA}[9]{\;
	\begin{tikzpicture}
		[scale=4, baseline=18mm]
		\node at (0.5,0.5) [anchor = center] {$p$};

		\def\bigc{4.5mm}
		\def\off{0.45}
		\def\arroff{0.12}
		\foreach \a in{0,180}
			\draw[->,>=stealth',draw=blue,line width=2.5,rotate around={\a:(.5,.5)}] (-\arroff,0.05+\off -#9*0.05)--(-\arroff,\off+.05+#9*0.05);
		\foreach \a in{90,270}	
			\draw[->,>=stealth',draw=blue,line width=2.5,rotate around={\a:(.5,.5)}] (-\arroff,.05+#9*0.05+\off)--(-\arroff,\off+.05-#9*0.05);

		\def\g{.1}
		\def\offs{1.69mm}
		\draw[blue,ultra thick,shift={(.5,.5)}] (-\offs-\bigc,-\offs-\bigc) rectangle (\offs+\bigc,\offs+\bigc);
	
		\def\identsize{0.75mm}
		\def\tick{0.06}
		\def\ticka{0.18}
		\def\compsize{1.7mm}
		\def\identoff{-0.145}
		\filldraw[draw=blue, ultra thick, fill=white, rotate around={-90:(.5,.5)}]
			(\tick,0.5-\ticka) circle (\compsize) node at (\tick,0.5-\ticka) {$#1$};
		\filldraw[draw=blue, ultra thick, fill=white, rotate around={-90:(.5,.5)}]
			(\tick,0.5+\ticka) circle (\compsize) node at (\tick,0.5+\ticka) {$#2$};
		\filldraw[draw=blue, ultra thick, fill=white, rotate around={180:(.5,.5)}]
			(\tick,0.5-\ticka) circle (\compsize) node at (\tick,0.5-\ticka) {$#3$};
		\filldraw[draw=blue, ultra thick, fill=white, rotate around={180:(.5,.5)}]
			(\tick,0.5+\ticka) circle (\compsize) node at (\tick,0.5+\ticka) {$#4$};
		\filldraw[draw=blue, ultra thick, fill=white, rotate around={-270:(.5,.5)}]
			(\tick,0.5-\ticka) circle (\compsize) node at (\tick,0.5-\ticka) {$#5$};
		\filldraw[draw=blue, ultra thick, fill=white, rotate around={-270:(.5,.5)}]
			(\tick,0.5+\ticka) circle (\compsize) node at (\tick,0.5+\ticka) {$#6$};
		\filldraw[draw=blue, ultra thick, fill=white, rotate around={0:(.5,.5)}]
			(\tick,0.5-\ticka) circle (\compsize) node at (\tick,0.5-\ticka) {$#7$};
		\filldraw[draw=blue, ultra thick, fill=white, rotate around={0:(.5,.5)}]
			(\tick,0.5+\ticka) circle (\compsize) node at (\tick,0.5+\ticka) {$#8$};
	\end{tikzpicture}\;
}
\newcommand{\vertexNAenc}[5]{
\begin{tikzpicture}[scale=2.2, baseline=10mm]
	\node at (.48,.52) [anchor = south east] {$v$};

	\def\g{.3}
	\draw[red,ultra thick,shift={(.5,.5)}] (-1+\g,-1+\g) grid (1-\g,1-\g);

	\def\off{0.15}
	\foreach \a in {0,90,180,270}{
		\draw[->,>=stealth',draw=red,line width=2,rotate around={\a:(.5,.5)}] (0.5+#5*0.1+\off,0.5)--(0.5+\off-#5*0.1,0.5);
	}

	\filldraw[draw=red, ultra thick, fill=white, rotate around={-90:(.5,.5)}]
		(0,.5) circle (2mm) node at (0,.5) {$#1$};
	\filldraw[draw=red, ultra thick, fill=white, rotate around={0:(.5,.5)}]
		(0,.5) circle (2mm) node at (0,.5) {$#4$};
	\filldraw[draw=red, ultra thick, fill=white, rotate around={90:(.5,.5)}]
		(0,.5) circle (2mm) node at (0,.5) {$#3$};
	\filldraw[draw=red, ultra thick, fill=white, rotate around={180:(.5,.5)}]
		(0,.5) circle (2mm) node at (0,.5) {$#2$};
\end{tikzpicture}
}
\newcommand{\plaquetteNAenc}[5]{
\begin{tikzpicture}[scale=2.2, baseline=10mm]
	\node at (.5,.5) {$p$};

	\def\g{.013}
	\draw[blue,ultra thick] (-\g,-\g) grid (1+\g,1+\g);

	\foreach \a in {0,90,180,270}
		\filldraw[draw=blue, ultra thick, fill=white, rotate around={\a:(.5,.5)}] 
			(0,.5) circle (2mm);

	\def\off{0.15}
	\foreach \a in {0,180}{
		\draw[->,>=stealth',draw=blue,line width=2,rotate around={\a:(.5,.5)}] (-#5*0.1+\off,0)--(\off+#5*0.1,0);
		\draw[->,>=stealth',draw=blue,line width=2,rotate around={\a:(.5,.5)}] (0,-#5*0.1+\off)--(0,\off+#5*0.1);
	}
			
	\path node at (0,.5) {$#4$} node at (1,.5) {$#2$} 
		node at (.5,0) {$#3$} node at (.5,1) {$#1$};
\end{tikzpicture}
}
\newcommand{\vertexNAencNoArrow}[4]{
\begin{tikzpicture}[scale=2.2, baseline=10mm]
	\node at (.48,.52) [anchor = south east] {$v$};

	\def\g{.3}
	\draw[red,ultra thick,shift={(.5,.5)}] (-1+\g,-1+\g) grid (1-\g,1-\g);

	\filldraw[draw=red, ultra thick, fill=white, rotate around={-90:(.5,.5)}]
		(0,.5) circle (2mm) node at (0,.5) {$#1$};
	\filldraw[draw=red, ultra thick, fill=white, rotate around={0:(.5,.5)}]
		(0,.5) circle (2mm) node at (0,.5) {$#4$};
	\filldraw[draw=red, ultra thick, fill=white, rotate around={90:(.5,.5)}]
		(0,.5) circle (2mm) node at (0,.5) {$#3$};
	\filldraw[draw=red, ultra thick, fill=white, rotate around={180:(.5,.5)}]
		(0,.5) circle (2mm) node at (0,.5) {$#2$};
\end{tikzpicture}
}
\newcommand{\plaquetteNAencNoArrow}[4]{
\begin{tikzpicture}[scale=2.2, baseline=10mm]
	\node at (.5,.5) {$p$};

	\def\g{.013}
	\draw[blue,ultra thick] (-\g,-\g) grid (1+\g,1+\g);

	\foreach \a in {0,90,180,270}
		\filldraw[draw=blue, ultra thick, fill=white, rotate around={\a:(.5,.5)}] 
			(0,.5) circle (2mm);
			
	\path node at (0,.5) {$#4$} node at (1,.5) {$#2$} 
		node at (.5,0) {$#3$} node at (.5,1) {$#1$};
\end{tikzpicture}
}
\begin{document}

\title{Toric codes and quantum doubles from two-body Hamiltonians}

\author{Courtney G.\ Brell}
\affiliation{Centre for Engineered Quantum Systems, School of Physics, The University of Sydney, Sydney, Australia}

\author{Steven T.\ Flammia}
\affiliation{Perimeter Institute for Theoretical Physics, 
	Waterloo, 
	Canada}
\affiliation{California Institute of Technology, 
	Institute for Quantum Information, 
	Pasadena, 
	USA}

\author{Stephen D.\ Bartlett}
\affiliation{Centre for Engineered Quantum Systems, School of Physics, The University of Sydney, Sydney, Australia}
\author{Andrew C.\ Doherty}
\affiliation{Centre for Engineered Quantum Systems, School of Physics, The University of Sydney, Sydney, Australia}
\affiliation{School of Mathematics and Physics, The University of Queensland, St Lucia, Australia}


\begin{abstract}
We present a procedure to obtain the Hamiltonians of the toric code and Kitaev quantum double models as the low-energy limits of entirely two-body Hamiltonians. Our construction makes use of a new type of perturbation gadget based on error-detecting subsystem codes. The procedure is motivated by a PEPS description of the target models, and reproduces the target models' behavior using only couplings which are natural in terms of the original Hamiltonians. This allows our construction to capture the symmetries of the target models.
\end{abstract}

\maketitle

\vspace{-25pt} 
\tableofcontents
\clearpage 



%
%

\section{Introduction}\label{S:intro}

There has been a surge of interest recently in spin lattice models because of their strange and wonderful properties.  From the perspective of condensed matter physics and quantum many-body theory, they have recently led to major advances in our understanding of the nature of quantum phase transitions and topological order in two-dimentional systems. The topological properties of these models are also of interest in quantum computing and quantum error correction. A system with topological order can possess intrinsic error correction or protection capabilities. These are exploited for quantum data storage~\cite{Dennis2002,Wang2003,Duclos-Cianci2010} or quantum information processing~\cite{Raussendorf98,Raussendorf2006,Raussendorf2007} with high error thresholds.
The encoded logical operations in topological models are associated with non-trivial homology cycles on a lattice of spins.  A lattice which has a non-trivial topology (such as a torus or punctured disk) can encode quantum information into its ground states which is robust to small local perturbations of the Hamiltonian.

Many models which are relatively simple (from a theoretical point of view) contain topologically ordered ground states. The toric code and its generalization to the quantum double models~\cite{KitaevTC97} are a significant class of exactly solvable models containing a range of different topological orders. Quantum double models of non-Abelian groups can support non-Abelian anyons, quasiparticles whose exchange statistics transcend the traditional Bose-Einstein/Fermi-Dirac dichotomy that is ubiquitous in three dimensions.  Braiding such quasiparticles can be used for universal quantum computation~\cite{Mochon2003,Mochon2004,Preskill1997,Ogburn1999}. 

However, these models (and other more general topological models~\cite{LevinSN05}) consist of many-body interactions that are quite challenging to implement experimentally, as they usually involve interactions between four or more bodies.  By contrast, most natural couplings are only 2-body.  It is therefore of interest to find systems with only 2-body interactions which can realize topologically ordered phases. 

An example of a 2-body system with a topologically-ordered ground state is the Kitaev honeycomb model~\cite{Kitaev-06}. This well-studied model is being pursued experimentally in a number of systems, but unfortunately it cannot be used for universal quantum computation.

Aside from explicitly finding 2-body models which reproduce a particular desired type of topological order (certainly challenging), one can use perturbative techniques to reproduce an existing many-body model as the low energy effective behaviour of a 2-body system. The perturbative gadgets approach~\cite{Kempe2006,Bravyi2008a,Jordan2008,Oliveira2008} is the standard tool to achieve this, but it has a number of drawbacks. By tailoring the perturbation gadgets to specific classes of models, one might hope the result is a simpler construction circumventing many of these difficulties. 

Here we present a new type of perturbation gadget that works by encoding the logical qudits of the target models in quantum error-detecting codes. This allows us to reproduce the properties of topological models as the low-energy effective Hamiltonians of 2-body systems. Here we concentrate specifically on the quantum double models, but we anticipate that a similar mechanism could be tailored to other classes of models (e.g. string net models~\cite{LevinSN05}). Our construction is natural, in the sense that all of the interactions of our system are very closely related to the interactions of the target model, and because of this, an extensive number of symmetries of the target model are preserved exactly from the level of the physical lattice. 

Unlike Kitaev's honeycomb model, our constructions are not exactly solvable. However, our results are a significant extension of Kitaev's method in that they can yield any type of topological order (i.e.~different types of anyons) within the class spanned by the quantum double models, including those which are universal for quantum computation.

%
%

\section{Results and Methods}

In this section we give an overview of our results and the methods we use to obtain them. To avoid obscuring the essence of our work with unnecessary technical details, we will use the toric code model as a concrete example in many places.  However, we stress that our results immediately carry over to all of the cyclic ($\Z_d$) quantum double models, and extend to the general non-Abelian quantum double models with just minor adjustments.  

\subsection{The Quantum Double Models}

The quantum double models~\cite{KitaevTC97} are a class of spin-lattice models which exhibit topological order. They can be used as topological quantum error-correcting codes based on the algebra of the Drinfeld double $\mathcal{D}(G)$ of a group $G$. The simplest member, corresponding to the group $\Z_2$, is the well-studied toric code model.

For simplicity, we define the quantum double models on a square lattice (although they can be defined on any oriented graph) with qudits ($d$-level quantum systems) on the edges, as in Figure~\ref{F:ToricOps}.  The lattice can be embedded into any 2-dimensional orientable surface, such as a torus.  The Hamiltonian for the model takes the form
\begin{equation}\label{E:targetHam}
	{H}_{\rm QD} = -\sum_v A(v) - \sum_p B(p) \, ,
\end{equation}
where $v$ denotes a vertex of the lattice and $p$ denotes a plaquette.  In the simple case of the toric code, the vertex and plaquette operators are defined by
\begin{equation}\label{E:toricAB}
	A(v) \equiv \bigotimes_{e\in+(v)} X_e \quad\quad \mbox{ and } \quad\quad
	B(p) \equiv \bigotimes_{e\in\Box(p)} Z_e \quad ,
\end{equation}
where the form for the star of a vertex $+(v)$ and the boundary of a plaquette $\Box(p)$ can be seen in Figure~\ref{F:ToricOps}, and $X_e$ and $Z_e$ are Pauli matrices acting on the qubit located on the edge $e$.  Since we are working on a square lattice, each of these terms are clearly 4-body.  For more complicated quantum double models, the $A(v)$ and $B(p)$ operators take slightly different forms, but will always consist of operators acting on the star of a vertex or the boundary of a plaquette. The exact details of the quantum double construction are given in \Sref{secprelimqd}.

When the surface in which the lattice is embedded has genus $g$, then the ground space of the toric code is $4^g$-fold degenerate, and thus can encode $2 g$ qubits. This particular encoding has generated so much interest because it is in some ways naturally robust to local errors~\cite{KitaevTC97, Dennis2002, Bravyi2010, Bravyi2010a}. The dimension of the codespace for a non-Abelian quantum double model is more complicated, but on a torus the the degeneracy of the ground space is equal to the total number of particle types~\cite{Nayak2008,Aguado2010}.

\begin{figure}
\centering
\beginpgfgraphicnamed{Figures/toriclattice}
\begin{tikzpicture}[scale=1.75]
	\draw[step=1cm, thick] (-.3,-.3) grid (3.3,3.3);
	\draw[step=1cm, ultra thick, color=blue] (0,0) grid (1,1);
	\draw[step=1cm, ultra thick, color=red] (1.5,1.5) grid (2.5,2.5);
	\foreach \x in {0,1,2} \foreach \y in {0,1,2,3}
		\filldraw[xshift=\x cm+.5cm,yshift=\y cm,color=white,draw=black]
		(0,0) circle (1.5mm);
	\foreach \x in {0,1,2,3} \foreach \y in {0,1,2}
		\filldraw[xshift=\x cm,yshift=\y cm+.5cm,color=white,draw=black]
		(0,0) circle (1.5mm);
	\foreach \a in {0,90,180,270}{
		\draw[ultra thick,color=blue,xshift=.5cm,yshift=.5cm] (\a:.5cm) circle (1.5mm);
		\draw[xshift=.5cm,yshift=.5cm] (\a:.5cm) node{{\small $Z$}};
		}
	\foreach \a in {0,90,180,270}{
		\draw[ultra thick,color=red,xshift=2cm,yshift=2cm] (\a:.5cm) circle (1.5mm);
		\draw[xshift=2cm,yshift=2cm] (\a:.5cm) node{{\small $X$}};
		}
	\draw (.5,.5) node{$p$};
	\draw (2,2) node[anchor=south east] {$v$};
\end{tikzpicture}
\endpgfgraphicnamed
\caption{The toric code defined on a square lattice with qubits on the edges. Each colored region represents one of the two types of terms in the Hamiltonian. The star terms (in red) act around a vertex $v$ with a Pauli $X$ on each qubit and the plaquette terms (in blue) act on the qubits around the boundary of $p$ with a Pauli $Z$ operator.}\label{F:ToricOps}
\end{figure}
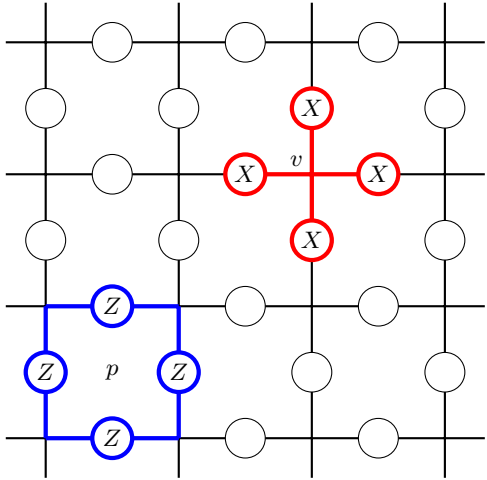

\subsection{Methods}

\subsubsection{Overview of Our Construction}

Our main result is the construction of a wholly 2-body Hamiltonian that reproduces the quantum double Hamiltonian of \Eref{E:targetHam} as its low-energy limit.  Our procedure uses only qudits of the same dimension as in the target model.   Furthermore, the couplings that we use have the same form as in the original model.  As an example, notice how each $A(v)$ and $B(p)$ term from \Eref{E:toricAB} consists of the tensor product of four Pauli $X$ terms or 4 Pauli $Z$ terms, respectively.  Our construction for the toric code will involve products of only two Pauli $X$ terms or $Z$ terms.

Each qudit on the edges of the original model is encoded in four physical qudits, shown in \Fref{F:toricPEPS}. The 2-body interactions among these four qudits will give us an effective single-qudit degree of freedom in the low-energy limit. This gives a 4-fold increase in the number of qudits required to construct our model as compared with the target model. We will couple neighbouring encoded qudits perturbatively, and the perturbation expansion will yield the desired Hamiltonian at $4^{\mathrm{th}}$ order.  This order is related to the coordination number of the underlying lattice and the number of edges bordering each plaquette, both of which are four for a square lattice. In contrast, on a honeycomb lattice terms will arise at $3^{\mathrm{rd}}$ and $6^{\mathrm{th}}$ order in perturbation theory (though there would still only be a 4-fold increase in the number of physical qudits required).


We achieve these results with a blend of techniques from condensed matter physics and quantum information theory.  Before sketching how we use these techniques, we briefly mention each one in turn.

\subsubsection{Projected Entangled Pair States.}

Our construction is inspired by {\it projected entangled pair states\/} (PEPS), a class of quantum states particularly well suited for describing the ground states of interacting quantum many-body systems~\cite{Affleck1988, Fannes1992, Hastings2006, Perez-Garcia2007a, Verstraete2006, Vidal2003a}.  Indeed, for the case of 1-dimensional systems this is provably the case~\cite{Hastings2007a}.   The basic idea of these states is to use virtual pairs of entangled systems to simulate correlations. For every coupling between neighboring systems on a lattice, a maximally entangled state (of some chosen dimension) is introduced between the systems. These virtual entangled pairs are then projected down to a ``physical'' subspace with a dimension equal to the that of the spins in the original model. An illustrative construction is depicted in Figure~\ref{F:toricPEPS} for the special case of a square lattice.

One would expect the kinds of models we are studying to have an efficient PEPS representations of their ground spaces because of the facts that they obey an area law, possess a spectral gap above the ground state, and contain finite correlations. In fact, the PEPS representation of the toric code ground space has been studied by Verstraete {\it et al.\/}~\cite{Verstraete-2006}, and a PEPS representation of general quantum double models is also known~\cite{Schuch2010}. PEPS descriptions have also been developed for all of the string-net ground spaces~\cite{Gu-09, Buerschaper-09} and the symmetries of these PEPS descriptions have been explored~\cite{Schuch2010}. The quantum double models on trivalent lattices can be mapped to string net models~\cite{Buerschaper2009}, and generalisations of the quantum double models are also being interpreted as extended string net models~\cite{Buerschaper2010}. In our construction we have an implicit PEPS representation for the quantum double models which is presumably equivalent to those already known (this is provably the case for our construction of the toric code).

\begin{figure}
\centering
\beginpgfgraphicnamed{Figures/toricPEPS}
\begin{tikzpicture}
	[a/.style={circle, fill=green!85!blue, draw=green!70!black, scale=.4},
	b/.style={fill=none, decorate, 
	decoration={snake,amplitude=.6mm,segment length=2mm, 
	pre length=1.8, post length=.8}, green!85!blue},scale=2.2,
	c/.style={fill=blue,fill opacity = .2}]

	\def\c{.01}
	\clip (\c,\c) rectangle (3-\c,3-\c);
	\draw[color=gray] (\c,\c) rectangle (3-\c,3-\c);

	\draw[step=1cm, thick] (0,0) grid (3,3);
	\foreach \x in {0,1,2} \foreach \y in {0,1,2,3}{
		\filldraw[c, shift={(\x+.5,\y)}] (0,0) circle (2mm);
		\filldraw[c, shift={(\y,\x+.5)}] (0,0) circle (2mm);
	}

	\def\tick{.075}
	\foreach \x in {0,1,2}{ 
	\foreach \y in {0,1,2}{
	\foreach \a in {0,90,180,270}{
		\filldraw[b, rotate around={\a:(\x+.5,\y+.5)}] 
		(\x+.5+\tick,\y+\tick) node [a] {} -- (\x+1-\tick,\y+.5-\tick) node [a] {};
	}}}
\end{tikzpicture}
\endpgfgraphicnamed
\caption{PEPS description on a square lattice.  Each qubit (for the toric code, or qudit in general) on an edge in the original model is replaced by four qubits (or qudits in general).  Qubits connected by a wavy line are in a maximally entangled state. Each blue circle represents a projection down to a single encoded qubit.  The quantum states in the support of these projectors are encoded qubits, entangled with each other.  The global state space contains exactly the states in the ground space of the original toric code model.  With toroidal boundary conditions, the ground space is four-fold degenerate.  Our construction proceeds by simulating these local projections and the entangling interactions with 2-body Hamiltonians.}
\label{F:toricPEPS}
\end{figure}
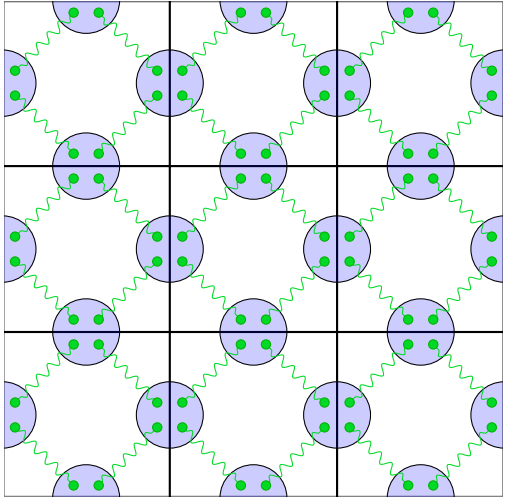

\subsubsection{Perturbation Gadgets.}

The next technique that we use is that of {\it perturbation gadgets\/}~\cite{Kempe2006, Oliveira2008, Bravyi2008a, Jordan2008}.  Perturbation gadgets are a method for systematically reducing the complexity of a many-body coupling between a large number of quantum systems.  
The gadgets generally consist of introducing some ancilla qudits which act as {\it conductors\/}, in the sense of the conductor of an orchestra.  By having a strong coupling to the conductor, $n$ separate primary qudits can synchronize their behavior in a way which mimics an $n$-body coupling at low energies, but by using couplings having only a fraction of the ``body-ness''.  By recursively applying these general constructions, one can arrive at a strictly 2-body Hamiltonian for an arbitrary $n$-body coupling.  The cost is that this coupling only occurs at a higher order in perturbation theory.

Our construction is a new variant on the perturbation gadgets approach. The concept is very similar, but instead of beginning from the original lattice and adding ancilla qubits to break up many-body interactions, we begin by encoding the qudits of the original lattice into a 4-qudit system. These systems are then coupled via relatively weak 2-body interactions, which enable us to treat the entire model perturbatively and show that it reproduces the target model in the low-energy limit. We will refer to these 4-qudit encoded systems as ``code gadgets''. Apart from the intrinsic interest of a new approach, we also manage to bypass a number of pitfalls that naive application of the perturbation gadgets can encounter. In particular, the resource cost of perturbative gadget schemes scales poorly with the system complexity, and a naive application of the technique can lead to the energy gap scaling with the system size or the fidelity of the topologically ordered states~\cite{VandenNest-08,Oliveira2008}. While this can be avoided, it remains a problem with applying the method in general. Additionally, while the couplings can be reduced to only 2-body, the nature of these 2-body couplings is in general vastly different from the couplings of the original model.  It is plausible that by taking advantage of structure a much simpler construction could be devised that is specifically tailored to the model, using couplings that exploit this structure.  This is the approach taken by Koenig~\cite{Koenig2010}, who showed that a simple ``clock'' gadget could reduce the complexity of the quantum double models to only 3-body terms. Our construction also follows a similar strategy in some sense, but uses perturbative couplings that are direct analogues of the original models' terms on simple surfaces.



The type of gadgets that we use are adapted for use with states that are ground states of local Hamiltonians and have a simple PEPS description~\cite{Bartlett-2006}.  The virtual entangled pairs of the PEPS description are promoted into real physical systems with a coupling such that their ground state is a maximally entangled state.  Then the PEPS projection can be done by using strong interactions between the systems within a code gadget which energetically favors the subspace defined by the PEPS projection.  When this coupling within a site is much stronger than the entangling coupling between sites, the resulting Hamiltonian at low energy approximates the desired many-body Hamiltonian, with the same ground space up to perturbative corrections.  This technique was originally used~\cite{Bartlett-2006, Griffin-08} to find 2-body Hamiltonians whose ground state encodes the cluster state~\cite{Raussendorf2001}, a state which is universal for measurement-based quantum computing. Our technique is very similar, in that the target models are reproduced in an encoded manner, i.e. the 4-qudit code gadgets of our construction serve as the logical spins of the target model.

\subsubsection{Subsystem Quantum Error Detecting Codes.}

Finally, we make use of quantum error-detecting codes~\cite{Shor1995, Steane1996, Gottesman1997} to ensure that all of the undesirable terms in the perturbative expansion do not couple to the low-energy sector of our model.  Recall that a quantum error-detecting code consists of a subspace of a larger Hilbert space, which is protected from some set of errors on the large Hilbert space. This protected subspace, the codespace, is used to store encoded logical information. The mapping between the codespace and the physical Hilbert space defines the encoded logical operators.  Detectable errors move the system out of the codespace, and so can be detected by a suitable measurement. 

Specifically, we use a particular type of quantum code known as a subsystem code~\cite{Kribs2005, Kribs2006}.  Compared to stabilizer codes, subsystem codes are no more powerful in terms of the number of errors they can detect, rather, their power lies in a simplification of the recovery operations required to fix the errors. The physical Hilbert space is partitioned into distinct subsystems: the logical subsystem and a ``gauge'' subsystem. As in a stabilizer code~\cite{Gottesman1997}, stabilizer operators are defined such that the logical codespace is in the mutual $+1$ eigenspace of these operators. If an error moves the system out of this $+1$ eigenspace, it can be detected by measurement of the stabilizer, and must actively be corrected. In contrast, the logical state is taken to be invariant under a transformation on the gauge system, and so any errors that occur on this gauge space are passively avoided.

The code we use on our gadget construction is designed so that any operations contributing to unwanted terms in the perturbative expansion will be detected as errors, and map the ground space to another energy eigenspace with higher energy. In this way, the code ensures that the low energy behavior of our system will remain error free, in the sense that only desirable terms (i.e. those of the target model) will remain. Errors which move the system out of the codespace need not explicitly be corrected because they will be suppressed heavily by an energy penalty for doing so. Note that the ``errors'' that occur in the gadget code are unrelated to errors appearing in the target model we are trying to replicate. Instead, they mix the protected low-energy sector with higher energy (unprotected) sectors, thus preventing our system from mimicking the target model perfectly. 


Our models are similar in many respects to the topological subsystem codes of Bombin~\cite{Bombin2010}.  These models, based on the color codes~\cite{Bombin2009,Kargarian2010} and their qudit generalizations~\cite{Sarvepalli2010}, yield a subsystem code using local 2-body gauge generators.  We note that the construction of \cite{Bombin2010} requires a 3-colorable lattice, whereas our method applies to any lattice for which the desired model possesses a PEPS description. Additionally, Bombin's models possess an exactly degenerate ground space while our models' ground spaces are only approximately degenerate (as in the Kitaev honeycomb model~\cite{Kitaev-06}), but where the degeneracy is only broken at very high order (roughly the linear size of the system).  One consequence of this exact degeneracy in the Bombin model is that it is straightforward to define quantum error correction in the ground space~\cite{Suchara2010} where perfect recovery operations will yield perfect recovery from correctable errors.  By contrast, in our model even perfect recovery operations will yield some small error due to the splitting of the ground space.  While the small error in recovery seems unavoidable in our model, it might still be the case that our model yields higher thresholds due to the substantially simpler stabilizer measurements that are required.  It is an open problem to define error correction and exactly quantify the errors incurred by the splitting for our model.  Aside from both of our constructions being viewed as a generalization of the honeycomb model, we are not aware of any deeper connection between them.  

\subsection{Discussion}

Our results allow us to replicate in the low-energy limit the Hamiltonians of certain topological models, but a more ambitious goal is to reproduce the topological order in the ground state wavefunctions of these models, as well as in their low-lying excited states.  To demonstrate that our constructions reproduce \emph{all} the topological properties of the original models, we would need to show additional properties.  First, we would need to show that these topological orders are stable under the kinds of perturbative corrections that our procedure introduces. There are two separate types of corrections which could threaten the stability of the topological properties of our models. The first are the very high-order corrections where perturbation terms can form non-trivial homology cycles on the surface of the model.  These will occur at order $2L$ in perturbation theory (with $L$ the smallest linear dimension of the surface) and split the ground space degeneracy of the encoded model. They can be heavily suppressed by increasing $L$ or by increasing the bare energy gap of the system. The second kinds of corrections to consider are those which leave the protected ground space of our code gadgets. These allow transitions into higher energy subspaces where our encodings fail. These kinds of corrections are suppressed energetically by the energy gap.  Thus, we would also need to demonstrate stability of the energy gap in the thermodynamic limit.  It would also be quite interesting to compute the topological entanglement entropy in the ground state~\cite{Hamma2005,Kitaev2006a,Levin2006,Flammia2009b}. Like the toric code model in two dimensions~\cite{Dennis2002,Nussinov2008,Castelnovo2007}, the topological order of Kitaev's non-Abelian quantum double model is not expected to persist at finite temperature. We expect that our models will have similar behaviour to the original quantum double model at finite temperature, and in particular, that they will suffer from the same ``thermal fragility'' of the topological order.

Even more ambitiously, one might hope to show that our models remain gapped even in the presence of arbitrary local perturbations, with only very small splitting of the ground state degeneracy and the degeneracy of the excited states.  For topological models whose Hamiltonians consist of a sum of commuting projectors, just such a result was shown by Bravyi, Hastings and Michalakis~\cite{Bravyi2010, Bravyi2010a}.  Unfortunately, their techniques cannot be directly applied to our models since our Hamiltonians are not sums of commuting projectors.  In fact, our models do not have frustration-free ground states either (meaning the ground states are not minimum-energy eigenstates of each term separately in the Hamiltonian) and hence other results on frustration-free systems also do not apply.

As one might expect from a perturbative construction, our effective Hamiltonian can be thought of as the target Hamiltonian plus perturbative corrections. Generically in this type of construction, the symmetries of the target model will be recovered approximately, up to these corrections. In our model, the encoded string operators of the target models will not commute with the perturbative terms in the Hamiltonian, and the corresponding ground space degeneracy will also be split. This splitting will be exponentially suppressed by the size of the lattice (as noted above), and so these symmetries will be recovered approximately as one might expect. In the quantum double models, there are also an extensive number of vertex and plaquette operators which commute with the Hamiltonian and form a quantum double algebra. It is possible to construct encoded counterparts of these operators on our model which also commute with our full Hamiltonian and form an equivalent algebra. In contrast to a generic perturbative construction, this large symmetry group is reproduced exactly by our model. This symmetry group severely constrains the arbitrarily high order terms arising from the perturbation expansion, and prevents undesirable terms from appearing. In fact, higher order terms in our effective Hamiltonian will act on the logical codespace as products of (commuting) lower order terms until the perturbative order is sufficiently high to form non-trivial loops over the lattice and break the ground space degeneracy.

As well as capturing symmetries of the target models, our construction is also natural in the sense that it is built from miniature quantum double models overlaid on each other. In the case of Abelian quantum double models (including the toric code), there is an exact correspondence between our constructions and quantum double models on simple surfaces. For these models, the ground space of our code gadget is chosen to coincide with a subspace of the quantum double ground space on a 4-qudit torus (see \Sref{subsecqdham}). That is, our codespace is stabilized by the same operators as the ground space of the quantum double Hamiltonian on this torus. These stabilizers would normally be 4-body, so in order for our codespace to obtain this property from a 2-body Hamiltonian we must sacrifice the degeneracy of one of the two qudits a torus can typically encode. In addition, the perturbative bond terms we introduce can be interpreted as quantum double models on a 2-qudit sphere (see \Sref{subsecqdcouple}). All the terms present in our model reflect the construction of the quantum double models on small surfaces. In non-Abelian models, the correspondence in the Hamiltonian is not as precise due to the distinction between left and right regular representations. However it will still be true that the ground spaces of our code gadgets will correspond to a subspace of the quantum double ground space on a 4-qudit torus, and the ground space of the perturbative bond terms will correspond to the ground space of the relevant quantum double model on a 2-qudit sphere. 


Our work is inspired by the construction of Bartlett and Rudolph~\cite{Bartlett-2006}, who used encoded qubits to reproduce the cluster state as the ground state of a 2-body Hamiltonian. As with the model we present, the encoding is closely related to the PEPS description of the target state. Our work generalises this type of construction by using a subsystem code (as opposed to a subspace code). In order to reproduce the PEPS space as the ground space of a 2-body Hamiltonian, we have had to sacrifice the extra gauge degrees of freedom in our model. This would not have been possible if we had used a subspace code, where these gauge degrees of freedom are not available.

The procedure we present reproduces the quantum double Hamiltonian in the perturbative coupling limit. If we consider the opposite (strong coupling) limit in our model, the system will act as a disconnected set of maximally entangled pairs. In this limit, the lattice can be thought of as a set of disconnected quantum double models on spheres (up to the caveats noted above). There must be some phase transition(s) between these states, and the respective topologies in the two limits are suggestive of the kind of behavior studied by Gils {\it et al.\/}~\cite{Gils2009}. The phase transition would be between different topologies in the sense that the quantum double model would act on the entire lattice together in one limit, and at some critical coupling strength break down to act on disconnected portions of it.

A robust topologically ordered system would also have anyonic low-energy excitations which, ideally, would be similar to those of the desired quantum double model.  One would need to consider the effect of the perturbative corrections in our model on these anyons and any other excitations, as for example in~\cite{Dusuel2008, Schmidt2008, Vidal2008}.

It would also be interesting to explicitly and rigorously show that cooling our Hamiltonians by coupling to a local bath can bring them to the ground space quickly.  (Cooling to a particular ground \emph{state} of the degenerate ground space is more difficult, but also quite interesting.)  Because of the frustration in the model, it isn't immediately obvious that this can be achieved efficiently, i.e.\ in an amount of time polynomial in the size of the system.  But because the low-energy effective theory is frustration-free, it certainly seems plausible that the cooling can be done efficiently.

It is noteworthy that the codes we utilise in our construction for general quantum double models are examples of extensions of the stabilizer formalism to non-Abelian groups. This has not been closely studied previously, and in that sense the quantum codes we use may be of interest in their own right.


We believe that the kind of approach we employ here to reproduce topological orders may be more generally applicable to other systems with efficient PEPS representations. Some work towards extending this treatment to the class of string net models~\cite{LevinSN05} supports this belief, with some caveats, and will be presented in a future publication.

Finally, we thank Miguel Aguado, Sergey Bravyi, Robert Koenig, Spyridon Michalakis, David Poulin, and Guifre Vidal for interesting and fruitful discussions.  SDB and ACD acknowledge support from the ARC via the Centre of Excellence in Engineered Quantum Systems (EQuS), project number CE110001013. 
Research at Perimeter Institute is supported by the Government of Canada through Industry Canada and by the Province of Ontario through the Ministry of Research and
Innovation.  
This work has been supported in part by NSA/ARO under Grant No.\ \mbox{W911NF-09-1-0442} and by NSF under Grant No.\ \mbox{PHY-0803371}.

%
%

\section{Example: The Toric Code}

As a simple and illustrative example of our scheme, we now demonstrate how to construct a two-body Hamiltonian for which the low-energy behavior reproduces the standard toric code Hamiltonian on a square lattice.  This simple example possesses all of the key features of our general construction for the quantum double models.

\subsection{PEPS Representation of the Toric Code}

We begin our construction by replacing each qubit on the edges of the toric code lattice with four qubits, as in Fig.~\ref{F:toricPEPS}.  We use the term {\it bond\/} to refer to the wavy lines in Fig.~\ref{F:toricPEPS} that connect the maximally entangled pairs of physical qubits.  By contrast, we use the word {\it edge\/} to denote the edges of the original lattice.  In the PEPS description, the projection operators acting on each edge will entangle these four qubits into a single qubit.  We achieve this projection in the ground space of a Hamiltonian defined on each edge $e$.  
\begin{align}\label{E:localH}
        H(e) = - \siteT{X}{I}{I}{X} - \siteT{I}{X}{X}{I} - \siteT{Z}{Z}{I}{I} - \siteT{I}{I}{Z}{Z} \, .
\end{align}

This notation is a convenient visual shorthand for the tensor product of the operators acting on the given physical qubits. For each edge, the collection of 4 qubits forms our ``code gadget''. The Hamiltonian contains only two-body terms acting within the gadget itself.

Instead of the explicit projection mechanism of the PEPS scheme to reduce the Hilbert space, our model simply supresses by energy penalty states which lie outside the desired PEPS projection. It can be shown that the projectors to the ground space of our edge Hamiltonian are equivalent to the projectors in Ref.~\cite{Verstraete-2006} since they are equal modulo the gauge freedom in choosing the PEPS description; however, explicitly demonstrating this equivalence is tedious (though straightforward) so we omit this.


Next, we introduce entanglement across the bonds by coupling sites on different edges.  Thus for each bond $b$, define the perturbation term
\begin{align}\label{E:localV}
        V(b) = - \bond{X}{X} - \bond{Z}{Z} \, ,
\end{align}
chosen because it possesses a maximally entangled state as its ground state\footnote{An alternative choice would be $V(b) = - \bond{Y}{Y}$. This choice also approximately realizes the toric code Hamiltonian in the small $\lambda$ limit. The resulting Hamiltonian \Eref{E:toricfullHamil} is precisely equivalent to Kitaev's exactly solvable honeycomb model on a mosaic tiling~\cite{Yang2007}. We do not consider this possibility further since it is unclear how to generalize it to more complicated quantum double models. It is interesting to note, however, that this alternative choice results in a model with topological order in the large $\lambda$ limit, in contrast to our models which become valence bond solids.}.
Our unpertubed Hamiltonian is summed over all edges $e$:
\begin{equation}
H_0 = \sum_e H(e) \,,
\end{equation}
and our perturbation term is summed over all bonds $b$:
\begin{equation}
V =  \sum_b V(b)\,.
\end{equation}

For those readers familiar with PEPS, it may seem counterintuitive to treat the bond term as small compared to the code gadget Hamiltonian (which is simulating the PEPS projection). We will see that this is in fact the correct approach to recover the target model. We introduce a coupling strength $\lambda$ which is a small parameter compared to the strength of the main terms in our Hamiltonian (which we have taken to have unit norm). The full Hamiltonian describing our lattice is then given by
\begin{align}\label{E:toricfullHamil}   {H} = H_0 + \lambda  V \, .\end{align}

Now we need to compute the perturbative low-energy effective Hamiltonian to leading nontrivial order in $\lambda$.  We will find the exact ground space of $H(e)$ in the next section, and then show that the perturbations $\lambda V$ will generate operators which reproduce an encoded toric code Hamiltonian (Eq.~\ref{E:targetHam}-\ref{E:toricAB}) at fourth order in $\lambda$.

\subsection{Solving the Code Gadget Hamiltonians}

We must first demonstrate that $H(e)$ of an edge $e$ has a two-dimensional degenerate ground space. We will show that this ground space is in fact the codespace of a subsystem quantum error-detecting code, which will greatly assist the perturbative analysis in the next section. Our analysis of this Hamiltonian for the toric code construction follows Bacon~\cite{BaconThesis}. Because we are always working on a particular arbitrary edge (within a particular code gadget), we will suppress the label $e$ in this section. 

The code gadget Hamiltonian $H$ (Eq.~\ref{E:localH}) posesses a number of constants of motion.  That is, we can define operators that commute with each other, and with the Hamiltonian.  In fact some of these operators are the {\it stabilizers\/} of a quantum code, so we label them $S$.  They form a commutative group, and are generated by
\begin{align}\label{eqsx}
        S_X \equiv \siteTnoe{X}{X}{X}{X} \quad \mbox{ and } \quad
        S_Z \equiv \siteTnoe{Z}{Z}{Z}{Z} \, .
\end{align}

We can also define other joint operators to complete the algebra of our code gadget.  We will call these operators {\it gauge\/} operators and {\it logical\/} operators, and denote them with appropriate subscripts.
\begin{align}\label{eqxl}
        X_\subfont{G} \equiv \siteTnoe{I}{X}{X}{I} \, , \quad
        Z_\subfont{G} \equiv \siteTnoe{Z}{Z}{I}{I} \, , \quad \mbox{ and } \quad
        X_\subfont{L} \equiv \siteTnoe{X}{X}{I}{I} \, ,\quad
        Z_\subfont{L} \equiv \siteTnoe{I}{Z}{Z}{I} \, .
\end{align}

We see immediately that these operators encode two orthogonal copies of the Pauli algebra, so they define two encoded qubits. In terms of these new operators, we can rewrite the gadget Hamiltonian
\begin{align}\label{eqTCh0}     
        H(e) = -X_\subfont{G}(1+S_X)-Z_\subfont{G}(1+S_Z) \, .
\end{align}

The protected subspace of our code, also corresponding to the ground space of the Hamiltonian, is a subspace of the $+1$ eigenspace of the stabilizers $S_X$ and $S_Z$. Any single qubit error will anticommute with at least one of these, and so could be detected (though not unambiguously) by measurement of the stabilizers. This means that any single qubit error will neccesarily move the system out of its ground space, and will be suppressed by the energy penalty for doing so.


We can easily check that the logical operators $X_\subfont{L}$ and $Z_\subfont{L}$ also commute with $H$.  However, neither $X_\subfont{G}$ nor $Z_\subfont{G}$ commutes with $H$.  Given all these facts, $H$ decomposes into a direct sum of four copies of $\subfont{L} \otimes \subfont{G}$, each labeled by the pairs of eigenvalues $(\pm 1, \pm 1)$ of $S_X$ and $S_Z$, and furthermore the energies in the logical space are degenerate.  Thus, we only have to solve the Hamiltonian on the gauge subspace of each stabilizer eigenvalue to find the ground space.  

It turns out the ground space is contained in the $(+1,+1)$ block, and it is exactly two-fold degenerate.  We provide a proof of this statement and its generalization to all the quantum double models in \Aref{provethmnags}. It can be shown that within this codespace the encoded computational basis states take the (unnormalised) form:
\begin{align}
\ket{0_\subfont{L}} &= \left(1+\sqrt{2}\right)\left[\siteTnoeB{\ket{0}}{\ket{0}}{\ket{0}}{\ket{0}} + \siteTnoeB{\ket{1}}{\ket{1}}{\ket{1}}{\ket{1}}\right] + \siteTnoeB{\ket{0}}{\ket{1}}{\ket{1}}{\ket{0}} + \siteTnoeB{\ket{1}}{\ket{0}}{\ket{0}}{\ket{1}} \\
\ket{1_\subfont{L}} &=  \left(1+\sqrt{2}\right)\left[\siteTnoeB{\ket{1}}{\ket{1}}{\ket{0}}{\ket{0}} + \siteTnoeB{\ket{0}}{\ket{0}}{\ket{1}}{\ket{1}}\right] + \siteTnoeB{\ket{1}}{\ket{0}}{\ket{1}}{\ket{0}} + \siteTnoeB{\ket{0}}{\ket{1}}{\ket{0}}{\ket{1}}
\end{align}

Now that we have determined that the ground space encodes a qubit in an error-detecting code, we can perform the perturbative analysis to compute the low-energy effective Hamiltonian.

\subsection{Perturbation Analysis}\label{S:perttoric}

We now introduce the perturbative coupling of Eq.~\ref{E:localV} between our encoded qubits on the lattice.  We use the Green's function perturbation method, following Kitaev~\cite{Kitaev-06} (see also \cite{Bergman2007}) to calculate the leading non-trivial order in the effective Hamiltonian, defined as $H_{\mathrm{eff}} = E_0+\Sigma(E_0)$, with the unperturbed ground state energy of the lattice $E_0$ and $\Sigma$ the self-energy. In this, we have approximated the self energy as being independent of $E$, for $E\approx E_0$. More details of our perturbation formalism are in Appendix~\ref{secpertform}.

The key to our analysis is the fact that quantum codes map detectable errors to orthogonal states.  In terms of the gadget Hamiltonians $H(e)$, this means that any single-qubit Pauli operator anti-commutes with either (or both) of $S_X$ or $S_Z$, and hence maps ground states to orthogonal states, since these states must lie in some block of $H(e)$ other than the $(+1,+1)$ block.

The consequence is that the perturbation analysis greatly simplifies.  It immediately gives the result that all odd-order perturbation terms will vanish, as they will necessarily leave two code gadgets in excited states.  The terms at second order only contribute an energy shift, since those that don't vanish act twice on the same qubits, hence they act proportionally to the identity.  The first non-trivial terms appear at fourth order, and we can write the 4th-order effective Hamiltonian as follows:
\begin{equation}\label{E:toric4eff}
        H_{\mathrm{eff}}^{(4)} = \lambda^4\Upsilon V \bigr(G_0(E_0)V\bigr)^3 \Upsilon \, ,
\end{equation}
where $\Upsilon$ is the projector to the ground space of the unperturbed system and $G_0(E)=(E-H_0)^{-1}(1-\Upsilon)$ is the Green's function (resolvent) projected to vanish on ground states. The $\Upsilon$ will project the stabilizer and gauge degrees of freedom down to a single state, but will act identically on the logical degree of freedom.

The non-trivial fourth order terms arise by constructing joint operators around a plaquette or around a vertex which leave all the code gadgets in the ground space. By expanding Eq.~\ref{E:toric4eff} and ignoring constant energy shifts, we can express the effective Hamiltonian to $4^{\mathrm{th}}$ order as
\begin{equation}
         H_{\mathrm{eff}} \propto - \lambda^4 \sum_v \Upsilon \hat{A}(v)\Upsilon
          - \lambda^4 \sum_p \Upsilon\hat{B}(p)\Upsilon \, ,
\end{equation}
where $v$ and $p$ sum over all vertices and plaquettes, respectively.  The operators $\hat{A}(v)$ and $\hat{B}(p)$ are each a sum of two terms,
\begin{align}   \hat{A}(v) = \hat{A}(X,v) + \hat{A}(Z,v) \quad \mbox{ and } \quad
        \hat{B}(p) = \hat{B}(X,p) + \hat{B}(Z,p) \, ,\end{align}where these terms are quite cumbersome to express algebraically, so we define them pictorially in Figure~\ref{F:ToricEncodedOps}.  Basically, these operators act on the pairs of qubits in an edge that are nearest to the center of a given plaquette or vertex.  

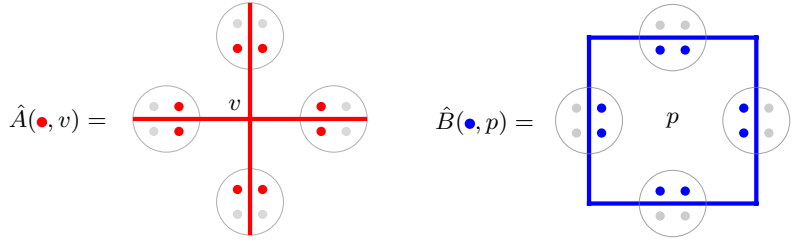
\begin{figure}
\centering
\beginpgfgraphicnamed{Figures/encodedAtoric}
\begin{tikzpicture}
        [a/.style={circle, fill=red, scale=.4},
        b/.style={circle, fill=gray, opacity=.3, scale=.4},
        scale=2.2]

        \node at (.5,.5) [anchor = south east] {$v$};
        \node at (-.6,.5) {$\hat{A}(\tikz[scale=.5] \fill[red] (1ex,1ex) circle (1ex);,v) = \ \ $};

        \foreach \a in {0,90,180,270}
                \draw[gray,opacity=.6,rotate around={\a:(.5,.5)}] (0,.5) circle (2mm);

        \def\g{.3}
        \draw[red,ultra thick,shift={(.5,.5)}] (-1+\g,-1+\g) grid (1-\g,1-\g);

        \def\tick{.075}
        \foreach \a in {0,90,180,270}{
                \draw[rotate around={\a:(.5,.5)}] 
                        (.5+\tick,\tick) node [a] {} (.5-\tick,\tick) node [a] {};
                \draw[rotate around={\a:(.5,.5)}] 
                        (.5+\tick,-\tick) node [b] {} (.5-\tick,-\tick) node [b] {};
        }
\end{tikzpicture}
\endpgfgraphicnamed
\quad \quad
\beginpgfgraphicnamed{Figures/encodedBtoric}
\begin{tikzpicture}
        [a/.style={circle, fill=blue, scale=.4},
        b/.style={circle, fill=gray, opacity=.4, scale=.4},
        scale=2.2]

        \def\g{.013}
        \draw[blue,ultra thick] (-\g,-\g) grid (1+\g,1+\g);

        \node at (.5,.5) {$p$};
        \node at (-.6,.5) {$\hat{B}(\tikz[scale=.5] \fill[blue] (1ex,1ex) circle (1ex);,p) = \ $};

        \foreach \a in {0,90,180,270}
                \draw[gray,opacity=.7,rotate around={\a:(.5,.5)}] (0,.5) circle (2mm);

        \def\tick{.075}
        \foreach \a in {0,90,180,270}{
                \draw[rotate around={\a:(.5,.5)}] 
                        (.5+\tick,\tick) node [a] {} (.5-\tick,\tick) node [a] {};
                \draw[rotate around={\a:(.5,.5)}] 
                        (.5+\tick,-\tick) node [b] {} (.5-\tick,-\tick) node [b] {};
        }
\end{tikzpicture}
\endpgfgraphicnamed
\caption{Encoded operators.  For each operator, the non-trivial operator acts on the colored qubits.  For example, $\hat{A}(X,v)$ is a tensor product of $X$ operators on each of the colored qubits surrounding the vertex $v$.\label{F:ToricEncodedOps}}
\end{figure}

We can derive from Fig.~\ref{F:ToricEncodedOps} that $\hat{A}(X,v)$ acts like a tensor product of logical $X_\subfont{L}$ operators on each of the edges surrounding $v$.  This is not immediately obvious; we must use the fact that in the ground space $X_\subfont{L} = S_X X_\subfont{L}$ to exchange the action of the logical operators between pairs of qubits at a particular edge.  Similarly, $\hat{B}(Z,p)$ acts like a tensor product of logical $Z_\subfont{L}$ operators around the plaquette $p$.  The other operators $\hat{A}(Z,v)$ and $\hat{B}(X,p)$ act like gauge operators in a similar fashion.  When these encoded gauge operators are mapped back to the ground space by $\Upsilon$, they contribute only a constant energy shift, which we can ignore.

We can think of these $\hat{A}$ and $\hat{B}$ operators as acting equivalently to some logical operators within the ground space. If we define
\begin{align}   \hat{A}_\subfont{L}(v)\equiv\bigotimes_{e\in +(v)}X_\subfont{L}^e\qquad \mbox{ and } \qquad \hat{B}_\subfont{L}(p) \equiv \bigotimes_{e\in\Box(p)} Z_\subfont{L}^e \, ,\end{align}
then we can see that within the codespace, they are equivalent (up to muliplicative and additive constants) to $\hat{A}$ and $\hat{B}$ respectively. That is,
\begin{align}   \Upsilon \hat{A}_\subfont{L}(v) \Upsilon \propto \Upsilon\hat{A}(v)\Upsilon-\mathrm{const} \qquad \mbox{ and } \qquad \Upsilon\hat{B}_\subfont{L}(p) \Upsilon\propto\Upsilon\hat{B}(p)\Upsilon-\mathrm{const} \, .\end{align}

We can think of the logical Hamiltonian acting within the codespace as being comprised of these operators, such that
\begin{equation}\label{E:toricEffHam}
        H_\subfont{L}  =  -\lambda^4 \sum_v\hat{A}_\subfont{L}(v) - \lambda^4 \sum_p\hat{B}_\subfont{L}(p) \, .
\end{equation}

When restricted to the codespace, this is exactly the effective Hamiltonian we previously derived (again, up to multiplicative constants and energy shifts), so that
\begin{equation}
H_{\mathrm{eff}} \propto \Upsilon H_\subfont{L}\Upsilon+\mathrm{const}
\end{equation}

Noting that the logical operators $\hat{A}_\subfont{L}$ and $\hat{B}_\subfont{L}$ act on the logical state exactly like the toric code vertex and plaquette terms, we can see that on the logical space, our effective Hamiltonian is the toric code Hamiltonian of Eq.~\ref{E:toricAB} up to constants, as claimed. 

The higher order terms in the expansion for the self-energy (see Appendix A) will generally act on the logical space like products of the terms appearing in \Eref{E:toricEffHam}. Moreover, for low energies all these terms will be negative, since the perturbation term and the Green's function will both be non-positive and so each term in the expansion will be negative. All these terms in the self-energy expansion will commute, and so the ground space should remain the $+1$ eigenspace of the terms in \Eref{E:toricEffHam} as desired. There will be some corrections to the excited spectrum of the effective Hamiltonian due to these higher order corrections and due to the energy dependence of the self-energy as discussed in Appendix A But as we are mainly interested in the topologically ordered ground space, this does not concern us especially. At very high order it will be possible to construct terms which run all the way around the torus. These errors will  corrupt the logical state. If the linear size of the torus is $N$, these terms will appear at $2N^{\mathrm{th}}$ order, and will be suppressed by a factor of $\lambda^{2N}$.

This result allows us to take a system of only two-body couplings, and in the low energy limit reproduce the Hamiltonian of the toric code. This means we might expect to be able to use the topological properties of the toric code to protect quantum information without the requirement for experimentally problematic many-body couplings. A similar result of obtaining the toric code in a limit was observed in Kitaev's honeycomb model~\cite{Kitaev-06}. In contrast to our construction, this honeycomb model is exactly solvable. Although we can only solve our model perturbatively, we can generalize it relatively easily to more complicated quantum double models (and lattices other than square), as will be seen in the following sections.

%
%

\section{Review of Quantum Double Models}\label{secprelimqd}

The quantum double models consist of coupled finite-dimensional quantum systems on the edges of a lattice, and their ground states exhibit topological order \cite{KitaevTC97}.  In this section, we will define the quantum double Hamiltonian that will become the target model of the perturbative two-body systems that we will work with in the subsequent sections.  

As with the toric code, we will work with a square lattice for concreteness.  The lattice can be embedded into any orientable 2-dimensional surface.  To each edge $e$ of the lattice we will associate an orientation, as in Figure~\ref{figdirlattice}.  Although this orientation could be arbitrary, we chose the orientation in Figure~\ref{figdirlattice} because it has the convenient feature that each vertex can be labeled with either a ``$+$'' or a ``$-$'' sign.  In fact, any bipartite lattice can be partitioned in such a way.

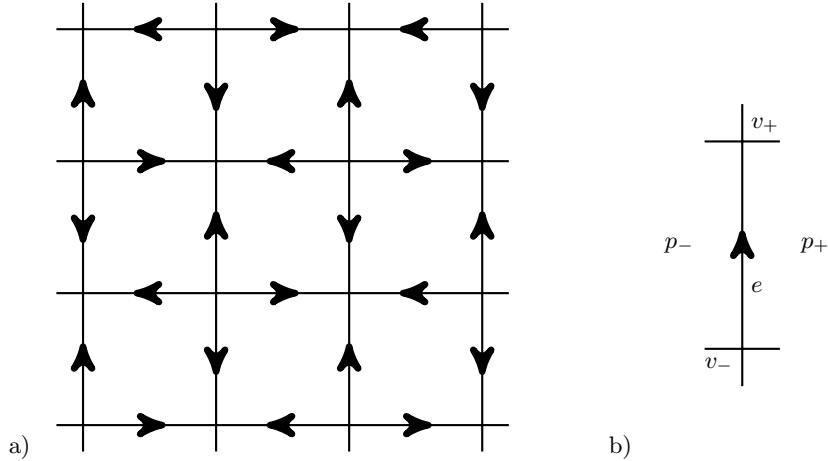
\begin{figure}
\centering
a)
\subfloat{
\beginpgfgraphicnamed{Figures/directedlatticeA}
\begin{tikzpicture}[>=stealth',line width=2.5,scale=1.75]
	\def\g{.2} 
	\draw[step=1cm,thick] (-\g,-\g) grid (3+\g,3+\g);
	\clip (-\g,-\g) rectangle (3+\g,3+\g); 

	\def\len{.15}
	\def\off{.47}
	\foreach \x in {0,2}{ \foreach \y in {0,2}{ 
	\foreach \a in {0,90,180,270}{
		\draw[->,rotate around={\a:(\x,\y)}] (\off+\x,\y)--(\off+\len+\x,\y);
		\draw[->,shift={(1,1)},rotate around={\a:(\x,\y)}] (\off+\x,\y)--(\off+\len+\x,\y);
	}}}
\end{tikzpicture}
\endpgfgraphicnamed
}
\qquad\qquad
b)
\subfloat{
\beginpgfgraphicnamed{Figures/directedlatticeB}
\begin{tikzpicture}[scale=2.75]
	\def\g{.18} 
	\draw[step=1cm,thick] (-\g,-\g) grid (0+\g,1+\g);
	
	\draw (0,1) node[anchor=south west] {$v_+$}
		(0,0) node[anchor=north east] {$v_-$}
		(-.3,.5) node {$p_-$} (.35,.5) node {$p_+$}
		(0,.3) node[anchor=west] {$e$};

	\def\off{.47}
	\draw[->,>=stealth',line width=2.5] (0,\off)--(0,\off+.1);
	
	\fill[opacity=0] (0,-.4) circle (.1); 
\end{tikzpicture} 
\endpgfgraphicnamed
}
\caption{a) A directed square lattice, and the orientation of $+$ and $-$ vertices and plaquettes relative to edge direction. Each vertex consists either of all ``inward'' edges or all ``outward'' edges.  Plaquettes consist of alternately directed edges as you traverse their boundary. b) If an edge spans the pair of vertices $(v_+,v_-)$, then the edge is oriented toward $v_+$.  The plaquettes are labeled with signs, where $p_+$ is on the right of the edge, following the given orientation.\label{figdirlattice}}
\end{figure}

We associate each quantum double model with a finite group $G$, and a local Hilbert space for each edge $\C^{|G|}$, with $|G|$ the order of the group.  On each edge $e$ of the lattice, there exists a natural orthonormal basis $\{ \ket{g}_e$, $g \in G \}$, for these degrees of freedom.  The total Hilbert space is then the tensor product of the local Hilbert spaces over all the edges.  

We now define a number of operators that act on these edge degrees of freedom.  For each edge $e$, define operators associated with left ($+$) and right ($-$) group multiplication and group projectors as follows:
\begin{align}
	L_+^g(e) &\equiv \sum_h\ketbra{gh}{h}_e \, , & L_-^g(e) &\equiv \sum_h\ketbra{hg^{-1}}{h}_e \, , \\
	T_+^g(e) &\equiv \proj{g}_e \, , &  T_-^g(e) &\equiv \proj{g^{-1}}_e \,.
\end{align}
The operators on a single edge form an algebra defined by commutation relations
\begin{equation}
	L_{\pm}^g T_{\pm}^h = T_{\pm}^{gh} L_{\pm}^g
	\quad , \quad
	L_{\pm}^g T_{\mp}^h = T_{\mp}^{hg^{-1}} L_{\pm}^g \,.
\end{equation}
Clearly, operators acting on different edges commute.  

We also associate a sign $\pm$ to each vertex and plaquette relative to their incident edge.  This is illustrated in Figure~\ref{figdirlattice}.  If an edge spans the vertices $(v_-,v_+)$, then the arrow along the edge points away from $v_-$ and toward $v_+$.  Plaquettes to the left of an edge when looking along its direction are labelled $p_-$, while those to the right are labelled $p_+$.

It is convenient to associate a particular operator with a (plaquette, edge) pair or a (vertex, edge) pair. That is, depending on the sign of the vertex or plaquette under consideration at the time ($v_\pm$ or $p_\pm$) respective to the edge under consideration, the sign of the group multiplication and projection operators can be inferred. Explicitly, we define
\begin{equation}\label{eqlvert}
	L^g(e,v_\pm) \equiv L^g_\pm(e) \quad , \quad T^g(e,p_\pm) \equiv T^g_\pm(e) \, .
\end{equation}

The $L$ operators play the role of a generalized Pauli $X$ operator in the group element basis, insofar as they move a particular state $\ket{g}$ through the space of elements. To make the analogy with the toric code more apparent, we can also construct some operators which act as generalized Pauli $Z$'s for an arbitrary group algebra. Let $\pi$ be a unitary irreducible representation of $G$.  Then we can define a type of Fourier transform by
\begin{equation}
	Z^{\pi_{ij}}_{\pm} \equiv \sum_g[\pi(g)]_{ij}T_{\pm}^g \, ,
\end{equation}
where $[\pi(g)]_{ij}$ is the $(i,j)^{\mathrm{th}}$ element of the representation matrix for group element $g$ in representation $\pi$. Equivalently, we can invert this expression to obtain
\begin{equation}
	T^g_{\pm} = \frac{1}{|G|}\sum_{\pi}d_{\pi}\sum_{ij}[\pi(g)]_{ij}Z^{\pi_{ij}}_{\pm} \, ,
\end{equation}
where the sum is over the complete set of unitarily inequivalent irreps of $G$ and $d_\pi$ is the dimension of the irrep $\pi$.  Although the quantum doubles are typically defined in terms of $T$ operators, the algebra of these generalized $Z$ operators gives the most convenient form for a particular calculation later on.

Given these preliminary operators, for each vertex $v$ we can define operators
\begin{equation}
	A^g(v) \equiv \bigotimes_{e \in +(v)} L^g(e,v) \, ,
\end{equation}
where $+(v)$ is the set of edges incident on the vertex $v$ (recall Figure~\ref{F:ToricOps}).  We can average these operators over the group to give the projector
\begin{equation}
	A(v) \equiv \frac{1}{|G|} \sum_g A^g(v) \, .
\end{equation}

Given a plaquette $p$ and a fiducial edge on its boundary that we label $e_1$, we can define an operator
\begin{equation}
	B^g(p) \equiv \sum_{g_k \cdots g_1 = g}\, \bigotimes_{e_i \in \Box(p)} T^{g_i}(e_i,p) \, ,
\end{equation}
where $e_i$ are the boundary edges taken as the plaquette is traversed clockwise starting with $e_1$, and there are $k$ total edges on the boundary of $p$.  These operators are all orthogonal projectors, but note that this definition depends on the choice of the fiducial edge $e_1$.  However, if we consider the operator
\begin{align}\label{E:BpDefinition}
	B(p) \equiv B^1(p) \, ,
\end{align}
where $1$ is the identity element of the group $G$, then it is easy to see that $B$ no longer depends on the choice of this fiducial edge.


For all $p$ and all $v$, the $B(p)$ operators and the $A(v)$ commute pairwise amongst themselves and each other.  Finally, the following Hamiltonian defines the quantum double model
\begin{equation}\label{qdtargetham}
	H = -\sum_v A(v) - \sum_p B(p) \, ,
\end{equation}
in a fashion which directly generalizes the toric code.

\subsection{Simplifications in the Case of Cyclic Groups}

It will be instructive to treat the cyclic groups $\Z_d$ before moving on to the general case.  For that reason, we revisit the above discussion specialized to this setting.  Because each of the $|G|$ representations of the (Abelian) groups $\Z_d$ are 1-dimensional, we can relate the $L$ and $Z$ operators by a simple discrete Fourier transform for these groups. In cyclic groups (with $d=|G|$), the group multiplication operation is addition (modulo $d$), and so the left and right multiplication operations are equivalent. Following this, the convention for the identity element in cyclic groups is $0$ as opposed to $1$ for general groups.
With no need for unique left and right multiplication operators, we define the cyclic $L$ operator by
\begin{equation}
	L \equiv \sum_{h=0}^{d-1}\ketbra{h+1}{h} \, .
\end{equation}

The addition within the ket is performed modulo $d$. Note that the group action of any other element can be achieved in these models by taking powers of this (unitary) operator. We also define a set of group projection operators
\begin{equation}
	T^g \equiv \proj{g} \,,
\end{equation}

Following the general case, we also define a generalized Pauli $Z$ operator. As representations of the cyclic groups are 1-dimensional, we can define a primitive $Z$ corresponding to a representation $\omega$
\begin{equation}
	Z \equiv \sum_{h=0}^{d-1}\omega^hT^h \, ,
\end{equation}
where $\omega$ is a primitive $d^{\mathrm{th}}$ root. Other representations of the group correspond to powers of $\omega$, so to obtain the operators corresponding to these representations, we need only take powers of this $Z$ operator. Thus we regard the powers of $Z$ as being labelled by representations of $\mathbb{Z}_d$.

If we take a discrete fourier transform on the basis $\left|g\right>$, we obtain:
\begin{equation}
	\ket{\gamma} = \frac{1}{\sqrt{d}}\sum_j\omega^{\gamma j}\ket{j} \, 
\end{equation}
where we label the Fourier basis with Greek letters (corresponding to irreps).  This transformation diagonalizes the $L$ operators:
\begin{equation}
	L = \sum_{\gamma}\omega^{-\gamma}\proj{\gamma} \, ,
\end{equation}
and so we can see that the $Z$ and $L$ operators are simply a basis change away from each other in these cyclic models, as was the case with the Pauli matrices for the toric code. As they are both unitary, we can also say
\begin{eqnarray}
	L^{\dagger}&=&L^{-1}=L^{d-1} \, ,\\
	Z^{\dagger}&=&Z^{-1}=Z^{d-1} \, .
\end{eqnarray}

In terms of the definition of the quantum double model for cyclic
 groups, the only changes we need make to become consistent with this simplified set of operators is to slightly redefine the associations of $L$ and $T$ operators with $\pm$ vertices and plaquettes, i.e.
\begin{equation}
	L^g(e,v_\pm) \equiv L^{\pm g}(e) \quad , \quad T^g(e,p_\pm) \equiv T^{\pm g}(e) \, .
\end{equation}

With this in mind, the quantum double Hamiltonian is defined exactly as in the general case.

%
%

\section{Our Construction for the Cyclic Quantum Double Models}\label{secabqd}


In this section we will show how our construction on the toric code generalizes naturally to the quantum doubles of cyclic groups. The toric code model corresponds to the quantum double of the group $\Z_2$; here we extend this treatment to the quantum double of $G = \Z_d$, where $|G| = d$ is the order of the group. This analysis could be extended to general Abelian groups.  However, for simplicity, and because the fully general case is considered in the next section, we restrict our attention to cyclic groups in this section.

In order to reproduce the cyclic quantum double models, two features must be added to the simple toric code construction. To begin with, qubits at each site must be replaced by $d$-dimensional qudits, with appropriate generalized Pauli operators defined on them, as introduced in \Sref{secprelimqd}. The group multiplication operator $L$ plays the role of the $X$ operator in the toric code, and the newly generalized $Z$ operator plays the role of the Pauli $Z$. 
These operators obey the commutation relation
\begin{equation}
Z^aL^b = \omega^{ab}L^bZ^a
\end{equation}

The other feature we will add to our construction at this juncture is the notion of directed edges, as discussed in \Sref{secprelimqd}. As in the toric code ($\mathbb{Z}_2$) case, we will now proceed with our construction explicitly on the square lattice.

\subsection{Code Gadgets on Lattice Edges}

We use a very similar construction to the toric code to encode our qudits for cyclic quantum double models. Each logical qudit is encoded using a subsystem code constructed from 4 physical qudits (Figure~\ref{F:toricPEPS} shows the scheme). The $d^4$-dimensional space of these edges is partitioned as
\begin{equation}
        \mathcal{H}(e) = \bigoplus_S\mathcal{H}_\subfont{L}\otimes\mathcal{H}_\subfont{G}
\end{equation}
where the direct sum is over eigenvalues of stabiliers $S$. Our codespace is in the $+1$ eigenspace of two stabilizer operators $S_L$ and $S_Z$ defined below. The remaining $d$-dimensional degrees of freedom are encoded as qudits, one of which (the gauge qudit $\mathcal{H}_\subfont{G}$) we fix in a single state in the codespace. The other qudit is used as our logical space $\mathcal{H}_\subfont{L}$. If an error occurs, it will flip a stabilizer operator, move the code gadget out of the codespace, and incur an energy penalty.

Physically, we provide the codespace with these properties as the ground space of a 2-body Hamiltonian. Before we write it, we will first introduce the gauge, logical, and stabilizer joint operators as we did for the toric code.
\begin{align}\label{eqabop1}
S_L \equiv&\siteA{L}{L}{L^{\dagger}}{L^{\dagger}} \ , & 
L_\subfont{G} \equiv&\siteA{I}{L^{\dagger}}{L}{I} \ , & 
L_\subfont{L} \equiv&\siteA{L}{L}{I}{I} \ , \nonumber \\
S_Z \equiv&\siteA{Z}{Z^{\dagger}}{Z^{\dagger}}{Z} \ , &
Z_\subfont{G} \equiv&\siteA{Z^{\dagger}}{Z}{I}{I} \ , &
Z_\subfont{L} \equiv&\siteA{I}{Z}{Z}{I} \, .
\end{align}

To avoid confusion as much as possible, we will distinguish typographically $\subfont{G}$ for gauge and $G$ for group, similarly $\subfont{L}$ for logical and $L$ for group multiplication. It is simple to verify that the operators defining each seperate degree of freedom commute with each other (i.e., stabilizers commute with gauge and logical operators, and gauge operators commute with logical operators). It can also be seen that the logical operators satisfy the desired algebra of the cyclic quantum double models:
\begin{equation}
        Z_\subfont{L}^aL_\subfont{L}^b = \omega^{ab}L_\subfont{L}^bZ_\subfont{L}^a
\end{equation}
and the gauge operators satisfy an equivalent algebra:
\begin{equation}
        Z_\subfont{G}^aL_\subfont{G}^{-b} = \omega^{ab}L_\subfont{G}^{-b}Z_ \subfont{G}^a \,.
\end{equation}

We define the Hamiltonian on a single code gadget (associated with edge $e$) as
\begin{equation}\label{eqabham1}
        H(e)=-\frac{1}{d}\sum_k\left[L_\subfont{G}^k+(L_\subfont{G}^kS_L^k)^{\dagger}+Z_\subfont{G}^k+(Z_\subfont{G}^kS_Z^k)^{\dagger}\right]_e \,.
\end{equation}
This equation can be represented diagrammatically as
\begin{equation}
H(e)=-\frac{1}{d}\sum_k\left[\siteAb{I}{L^{-k}}{L^{k}}{I}+\siteAb{L^{-k}}{I}{I}{L^{k}}+\siteAb{Z^{-k}}{Z^k}{I}{I}+\siteAb{I}{I}{Z^k}{Z^{-k}}\right] \,.
\end{equation}
In this form, it is easy to see that each term in the Hamiltonian acts only on two qudits.

Multiplication by the stabilizers (as in $L_\subfont{G}S_L$) effectively moves a gauge operator from two qudits onto the opposite two. In the $d=2$ case, this Hamiltonian does not directly reduce to the one quoted for the toric code earlier (Eq.~(\ref{eqTCh0})) because of the inclusion of the identity ($k=0$) term. However this term only induces a constant energy shift, and so can be disregarded for our purposes.

We now turn to the properties of the ground space of this Hamiltonian. It is clear that this Hamiltonian commutes with the logical operators, but less obvious that its ground space possesses the other properties we require. In Appendix~\ref{secProveThmAbGs}, we prove the following theorem: 
\begin{theorem}\label{thmabhamgs}
The Hamiltonian \Eref{eqabham1} has a $d$-fold degenerate ground space that is in the common $+1$ eigenspace of ${S}_L$ and ${S}_Z$.
\end{theorem}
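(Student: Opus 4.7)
My plan is to block-diagonalize $H(e)$ using the commuting stabilizers $S_L,S_Z$ and then compare the minimum energies of the various blocks. First I would note that since $L_\subfont{G}^d=S_L^d=I$ and $L_\subfont{G}$ commutes with $S_L$, the product $L_\subfont{G}S_L$ is also an order-$d$ unitary, so the identity $\tfrac{1}{d}\sum_k U^k=P_{U=1}$ rewrites the four group-averaged terms of \Eref{eqabham1} as orthogonal projectors, giving
\[
H(e)=-P_{L_\subfont{G}=1}-P_{L_\subfont{G}S_L=1}-P_{Z_\subfont{G}=1}-P_{Z_\subfont{G}S_Z=1}.
\]
Checking $[H(e),S_L]=[H(e),S_Z]=0$ is routine: the only nontrivial phases come from applying $ZL=\omega LZ$ on the two qudits of a stabilizer that carry both letter types, and these phases cancel between the two sites. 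This decomposes the Hilbert space into blocks labeled by eigenvalues $(\omega^a,\omega^b)$ of $(S_L,S_Z)$, inside which $L_\subfont{G}S_L$ and $Z_\subfont{G}S_Z$ act as $\omega^a L_\subfont{G}$ and $\omega^b Z_\subfont{G}$, so the Hamiltonian becomes $-(P_{L_\subfont{G}=1}+P_{L_\subfont{G}=\omega^{-a}})-(P_{Z_\subfont{G}=1}+P_{Z_\subfont{G}=\omega^{-b}})$.

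In the codespace block $(a,b)=(0,0)$ the two $L$-projectors coincide and so do the two $Z$-projectors, collapsing $H(e)$ to $-2P_{L_\subfont{G}=1}-2P_{Z_\subfont{G}=1}$, which is trivial on the logical factor $\mathcal{H}_\subfont{L}$. On $\mathcal{H}_\subfont{G}\simeq\mathbb{C}^d$ the clock-shift algebra $L_\subfont{G}Z_\subfont{G}=\omega Z_\subfont{G}L_\subfont{G}$ forces $P_{L_\subfont{G}=1}$ and $P_{Z_\subfont{G}=1}$ to be rank-one projectors onto a mutually unbiased pair of states with overlap $1/\sqrt{d}$; a $2\times 2$ diagonalization in the span of those two states then yields eigenvalues $-2(1\pm 1/\sqrt d)$ together with a $(d-2)$-dimensional zero eigenspace. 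So the codespace block contributes a unique-in-$\mathcal{H}_\subfont{G}$ ground state of energy $E_0=-2(1+1/\sqrt d)$, which becomes $d$-fold degenerate after tensoring with $\mathcal{H}_\subfont{L}$.

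The remaining step, which I expect to be the main obstacle, is to show that every other $(a,b)$ block has minimum energy strictly above $E_0$. When both $a$ and $b$ are nonzero, each of the two summed pairs is a genuine rank-two projector of norm one, so $H(e)\geq-2>E_0$ follows immediately by the triangle inequality. The mixed case $a=0$, $b\neq 0$ (and its transpose) is the subtle one, because the $L$-part still has norm $2$ and the naive bound $H(e)\geq-3$ fails to beat $E_0$ once $d\geq 4$. Here I would restrict to the (at most) three-dimensional joint support spanned by $\ket{+}$, $\ket{0}$, $\ket{-b}$, exploit the reflection symmetry $\ket{0}\leftrightarrow\ket{-b}$ to peel off a one-dimensional antisymmetric piece with eigenvalue $1$, and diagonalize the remaining $2\times 2$ symmetric block; its top eigenvalue works out to $(3+\sqrt{1+16/d})/2$, and squaring the algebraic inequality $\sqrt{1+16/d}<1+4/\sqrt d$ shows this is strictly less than $2+2/\sqrt d=|E_0|$ for all $d\geq 2$. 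Combining the blocks gives the claimed $d$-fold degenerate ground space lying entirely in the $(+1,+1)$ eigenspace of $(S_L,S_Z)$.
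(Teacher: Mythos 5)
Your proof is correct, and it shares the paper's overall skeleton --- block-diagonalize by the joint eigenvalues of $S_L$ and $S_Z$, reduce each block to a negative sum of four rank-one projectors on the gauge qudit, and show the $(+1,+1)$ block has strictly lowest energy --- but the decisive block comparison is executed differently. The paper groups the four projectors as $\bigl(\proj{0}+\proj{I}\bigr)+\bigl(\proj{h}+\proj{\sigma}\bigr)$, pairing one $L$-type state with one $Z$-type state; since a group-element state and a representation state are always mutually unbiased, both pairs have norm exactly $1+1/\sqrt{d}$ in \emph{every} block, so a single triangle inequality plus the observation that the two top eigenvectors are parallel only when $(h,\sigma)=(0,I)$ handles all blocks uniformly. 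You instead pair the two $L$-projectors together and the two $Z$-projectors together, which (as you correctly diagnose) makes the naive triangle bound fail in the mixed blocks for $d\geq 4$ and forces an explicit diagonalization on the three-dimensional joint support, yielding the top eigenvalue $(3+\sqrt{1+16/d})/2$ and the strict inequality against $2+2/\sqrt{d}$. Both routes are valid: the paper's pairing is slicker and avoids case analysis, while your computation is more explicit and in particular furnishes concrete strict bounds where the paper's saturation-only-when-parallel claim is merely asserted. Your codespace-block spectrum $-2(1\pm 1/\sqrt{d})$ and the resulting $d$-fold degeneracy agree exactly with the paper's.
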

This result, combined with the fact that our logical operators commute with the Hamiltonian, gives us a ground-space to use as an encoded logical codespace $\mathcal{H}_\subfont{L}$.

\subsection{Coupling the Code Gadgets}

The lattice is connected exactly as was the case for the toric code (\Fref{F:toricPEPS}), with qudits from neighbouring edges linked via an entangling bond. We have an unperturbed Hamiltonian for each edge qudit given as in the previous section:
\begin{equation}\label{eqabunpetham}
        H_0=\sum_eH(e)
\end{equation}
where the index $e$ denotes a particular edge qudit.
We then introduce the bond term:
\begin{eqnarray}\label{eqabpert}
        V &=&\sum_b V(b)\\
&=& -\sum_b\sum_{k=0}^{d-1}\left[\bond{L^k}{L^k}+\bond{Z^k}{Z^{-k}}\right]
\end{eqnarray}
coupling the physical qudits connected by bond $b$. The ground state of this bond term is a maximally entangled state of dimension $d$ between the two qudits.

We are interested in reproducing the quantum double Hamiltonian in an encoded form, so to concisely state our objective, we define the encoded $A$ and $B$ operators
\begin{eqnarray}
\hat{A}(v) &\equiv& \frac{1}{|G|} \sum_g\bigotimes_{e \in +(v)} L^g_\subfont{L}(e,v)\\
\hat{B}(p) &\equiv& \sum_{g_k \ldots g_1 = 0}\, \bigotimes_{e_i \in \Box(p)} T_\subfont{L}^{g_i}(e_i,p)
\end{eqnarray}
with $L_\subfont{L}$ defined in \Eref{eqabop1} and $T_\subfont{L}=\frac{1}{d}\sum_{k}\omega^{k}Z^{k}_\subfont{L}$ is the encoded group projection operator. We can then state main result of this section as Theorem \ref{thmabpert}.


\begin{theorem}\label{thmabpert}
The Hamiltonian $H=H_0+\lambda V$ with $H_0$ and $V$ defined as in \Eref{eqabunpetham} and \Eref{eqabpert} on a square lattice has a low energy behaviour described by an effective Hamiltonian of the form
\begin{equation}\label{eqabeffhamthm}
H_{\mathrm{eff}} = c_I I-\left(c_A\lambda^4\right)\sum_v \hat{A}(v) - \left(c_B\lambda^4\right)\sum_p \hat{B}(p) + \mathcal{O}(\lambda^5)
\end{equation}
for some constants $c$ independent of $\lambda$ and $N$, where $N$ is the number of sites on the lattice.
\end{theorem}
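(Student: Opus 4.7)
The plan is to adapt the Green's function perturbation analysis of Section \ref{S:perttoric} to the cyclic case. I would write
\begin{equation*}
H_{\mathrm{eff}} = E_0 + \Sigma(E_0) = E_0 + \sum_{n\ge 1}\lambda^n\,\Upsilon V\bigl(G_0(E_0)V\bigr)^{n-1}\Upsilon,
\end{equation*}
with $\Upsilon$ the projector onto the ground space of $H_0$ and $G_0(E_0)=(E_0-H_0)^{-1}(1-\Upsilon)$, and then expand term by term. The crucial structural input, inherited from Theorem \ref{thmabhamgs}, is that the ground space of each $H(e)$ lies in the common $+1$ eigenspace of $S_L(e)$ and $S_Z(e)$, and any single-qudit $L^k$ or $Z^k$ with $k\not\equiv 0\pmod d$ fails to commute with one of these stabilizers, picking up a phase $\omega^{\pm k}$ from $Z^aL^b=\omega^{ab}L^bZ^a$. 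Such an operator therefore sends the ground space into a stabilizer sector orthogonal to $(+1,+1)$, so only products of bond terms that restore every visited gadget to the $(+1,+1)$ sector can survive the outer $\Upsilon$'s.

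I would then classify the surviving terms by perturbative order. Each term of $V(b)$ is $L^k\otimes L^k$ or $Z^k\otimes Z^{-k}$ on the two qudits joined by $b$, and the survival condition forces the $L$-powers and $Z$-powers touching each gadget to sum to $0\pmod d$ separately. All odd orders vanish because an odd total count of $L^k$ or $Z^k$ factors cannot meet this cancellation constraint. Second order contributes only a diagonal constant shift (the same bond acting twice). At fourth order the minimal configurations satisfying the cancellation condition on every visited gadget are the closed four-bond cycles encircling a single vertex (using the $L$-branch of $V$) or a single plaquette (using the $Z$-branch); cycles that mix $L$ and $Z$ bonds violate the cancellation condition and vanish, and open walks of length four likewise fail to return some gadget to $(+1,+1)$.

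Next I would identify each surviving loop with its target encoded operator. Around a vertex $v$, the four bond contributions combine on each incident edge $e\in +(v)$ as a two-qudit operator $L^{k_i}\otimes L^{k_j}$ on the pair of qudits nearest $v$; preservation of $S_Z(e)$ forces $k_i=k_j\equiv k$. Using \Eref{eqabop1}, this two-qudit factor decomposes into $L^k_\subfont{L}(e,v)$ times a gauge operator times a stabilizer operator. On the codespace the stabilizer is $+1$, the gauge factor is absorbed by $\Upsilon$ into an overall constant, and summing over $k$ (from the definition of $V(b)$) rebuilds $\sum_k\prod_{e\in +(v)}L^k_\subfont{L}(e,v) = |G|\,\hat{A}(v)$. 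The plaquette case proceeds analogously with $Z^k\otimes Z^{-k}$ bonds, producing $\sum_k\prod_{e\in\Box(p)}Z^k_\subfont{L}(e,p)$, which the Fourier relation $T^g_\subfont{L}=\tfrac{1}{d}\sum_k\omega^{kg}Z^k_\subfont{L}$ identifies with $\hat{B}(p)$ via its definition. Counting the orderings of the four bond operators in each loop and evaluating the fixed energy denominators from $G_0(E_0)$ on the uniformly gapped excited subspaces of $H_0$ collects into the $\lambda$- and $N$-independent constants $c_A$ and $c_B$; even-order diagonal contributions accumulate into $c_I$.

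The main obstacle will be the orientation and phase bookkeeping that ensures each vertex (resp.\ plaquette) loop deposits a correctly oriented $L^k_\subfont{L}(e,v)$ (resp.\ $Z^k_\subfont{L}(e,p)$) on every edge of the loop, coherently in $k$. Because of the convention $L^g(e,v_\pm)\equiv L^{\pm g}(e)$ and its analogue for $T^g$, the edge-direction assignments of Figure \ref{figdirlattice} have to conspire with the specific placement of daggers in \Eref{eqabop1} so that the same $k$ enters with matched signs on all four edges of a given loop. This is a finite case check once the edge orientation is fixed, but it is where the real substance of the argument lives. Bounding the $\mathcal{O}(\lambda^5)$ remainder — including the exponentially small splittings from homologically non-trivial wrapping loops that appear only at order $\sim\sqrt{N}$ — is a standard application of the Green's function series bounds reviewed in Appendix \ref{secpertform}.
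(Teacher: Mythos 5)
Your proposal follows essentially the same route as the paper: the Green's-function self-energy expansion, the observation that any bond operator failing to commute with the gadget stabilizers is a detectable error that leaves the codespace and is killed by the outer $\Upsilon$'s, the reduction to closed four-bond loops at fourth order, and the identification of those loops with $\hat{A}(v)$ and $\hat{B}(p)$ via the decomposition of two-qudit factors into logical, gauge, and stabilizer operators. Two points in your enumeration are imprecise, though neither changes the final answer. First, odd orders do not all vanish: because $V(b)$ contains the $k=0$ identity component, non-zero third-order terms survive, but they are proportional to the (constant) second-order terms and so only shift $c_I$. Second, your claim that only $L$-loops survive around vertices and only $Z$-loops around plaquettes is too strong: a pure $Z$-loop around a vertex (and a pure $L$-loop around a plaquette) also satisfies the stabilizer-commutation condition and survives the projection; the paper keeps these terms explicitly, e.g.\ as $\hat{A}(Z_\subfont{G}^{-k}S_Z^{-k},v)$, and then disposes of them by noting that gauge operators evaluate to constant expectation values on the fixed gauge state of the codespace. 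You would need that extra step (rather than a vanishing argument) to close the fourth-order enumeration; the mixed $L$--$Z$ loops and open walks do vanish as you say.
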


The consequence of this theorem is our system's low energy effective Hamiltonian replicating the low-energy sector of the quantum double model (for cyclic groups at this stage). 



\begin{figure}
\centering
\beginpgfgraphicnamed{Figures/encodedAabelian}
\begin{tikzpicture}
        [a/.style={circle, fill=red, scale=.4},
        ao/.style={circle, draw=red, thick, scale=.35},
        b/.style={circle, fill=gray, opacity=.3, scale=.4},
        scale=2.2]

        \newcommand{\dclosed}[1]{\tikz[scale=.5] \fill[#1] (1ex,1ex) circle (1ex);}
        \newcommand{\dopen}[1]{\tikz[scale=.45] \draw[thick,#1] (1ex,1ex) circle (1ex);}

        \node at (.5,.5) [anchor = south east] {$v$};
        \node at (-.6,.5) {$\hat{A}(\dopen{red}, \dclosed{red},v) = \ \ \ $};

        \foreach \a in {0,90,180,270}
                \draw[gray,opacity=.6,rotate around={\a:(.5,.5)}] (0,.5) circle (2mm);

        \def\g{.3}
        \draw[red,ultra thick,shift={(.5,.5)}] (-1+\g,-1+\g) grid (1-\g,1-\g);

        \def\tick{.075}
        \foreach \a in {0,90,180,270}{
                \draw[rotate around={\a:(.5,.5)}] 
                        (.5+\tick,-\tick) node [b] {} (.5-\tick,-\tick) node [b] {};
        }
        
        \foreach \a in {0,90,180,270}{
                \draw[rotate around={\a:(.5,.5)}] 
                        (.5+\tick,\tick) node [a] {} (.5-\tick,\tick) node [ao] {};
        }
\end{tikzpicture}
\endpgfgraphicnamed
\quad \quad
\beginpgfgraphicnamed{Figures/encodedBabelian}
\begin{tikzpicture}
        [a/.style={circle, fill=blue, scale=.4},
        ao/.style={circle, draw=blue, thick, scale=.35},
        yc/.style={circle, fill=red, scale=.4},
        yco/.style={circle, draw=red, thick, scale=.35},
        b/.style={circle, fill=gray, opacity=.4, scale=.4},
        scale=2.2]

        \def\g{.013}
        \draw[blue,ultra thick] (-\g,-\g) grid (1+\g,1+\g);

        \newcommand{\dclosed}[1]{\tikz[scale=.5] \fill[#1] (1ex,1ex) circle (1ex);}
        \newcommand{\dopen}[1]{\tikz[scale=.45] \draw[thick,#1] (1ex,1ex) circle (1ex);}

        \node at (.5,.5) {$p$};
        \node at (-.6,.5) {$\hat{B}(\dopen{blue},\dclosed{blue},p) = \ \ \ $};

        \foreach \a in {0,90,180,270}
                \draw[gray,opacity=.7,rotate around={\a:(.5,.5)}] (0,.5) circle (2mm);

        \def\tick{.075}
        \foreach \a in {0,90,180,270}{
                \draw[rotate around={\a:(.5,.5)}] 
                        (.5+\tick,-\tick) node [b] {} (.5-\tick,-\tick) node [b] {};
        }
        
        \foreach \a in {0,180}{
                \draw[rotate around={\a:(.5,.5)}] 
                        (.5+\tick,\tick) node [a] {} (.5-\tick,\tick) node [ao] {};

        \draw[rotate around={90:(.5,.5)}] 
                (.5+\tick,\tick+0.03) node [a] {} (.5-\tick,\tick+0.03) node [ao] {};
        \draw[rotate around={270:(.5,.5)}] 
                (.5+\tick,\tick-0.005) node [a] {} (.5-\tick,\tick-0.005) node [ao] {};
        \node at (.96,.56) {$^\dagger$};
        \node at (.96,.41) {$^\dagger$};
        \node at (.135,.56) {$^\dagger$};
        \node at (.135,.41) {$^\dagger$};
        }\end{tikzpicture}
\endpgfgraphicnamed
\caption{Physical operators for a cyclic quantum double model. Qudits at locations denoted by open circles will be acted upon by the same single-qudit operator (and similarly for those represented by full circles). The adjoint ($\dagger$) of a given operator is applied to the qudits so labelled. Note the similarity of this diagram with Figure~\ref{F:ToricEncodedOps}, however in contrast to the toric code case, the operators acting on adjacent qubits are now different. \label{F:ToricEncodedOpsAbelian}}
\end{figure}

\begin{figure}
\centering
\beginpgfgraphicnamed{Figures/logicalAabelian}
\begin{tikzpicture}[scale=2.2]
        \node at (.5,.5) [anchor = south east] {$v$};
        \node at (-.6,.5) {$\hat{A}(x,v) = \ \ \ $};

        \def\g{.3}
        \draw[red,ultra thick,shift={(.5,.5)}] (-1+\g,-1+\g) grid (1-\g,1-\g);

        \foreach \a in {0,180}
                \filldraw[draw=red, ultra thick, fill=white, rotate around={\a:(.5,.5)}]
                        (0,.5) circle (2mm) node at (0,.5) {$x$};
        \foreach \a in {90,270}
                \filldraw[draw=red, ultra thick, fill=white, rotate around={\a:(.5,.5)}]
                        (0,.5) circle (2mm) node at (0,.5) {$x$};
\end{tikzpicture}
\endpgfgraphicnamed
\quad \quad
\beginpgfgraphicnamed{Figures/logicalBabelian}
\begin{tikzpicture}[scale=2.2]
        \node at (.5,.5) {$p$};
        \node at (-.6,.5) {$\hat{B}(x,y,p) = \ \ \ $};

        \def\g{.013}
        \draw[blue,ultra thick] (-\g,-\g) grid (1+\g,1+\g);

        \foreach \a in {0,90,180,270}
                \filldraw[draw=blue, ultra thick, fill=white, rotate around={\a:(.5,.5)}] 
                        (0,.5) circle (2mm);
                        
        \path node at (0,.5) {$x$} node at (1,.5) {$x$} 
                node at (.5,0) {$y$} node at (.5,1) {$y$};
\end{tikzpicture}
\endpgfgraphicnamed
\caption{Encoded operators for a cyclic quantum double model. Here each of $x$ or $y$ represents a 4-qudit logical or gauge operator. We use an overloaded notation here for $\hat{A}$ and $\hat{B}$ such that when their arguments are encoded 4-qudit operators they take this form, while if the arguments are single qudit operators, then they take the form of \Fref{F:ToricEncodedOpsAbelian}. \label{F:EncodedOpsAbelian}}
\end{figure}
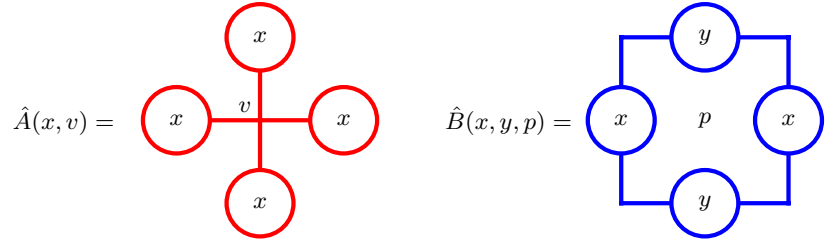

\begin{proof}[Proof of Theorem \ref{thmabpert}]

We again follow the perturbative analysis described in Appendix~\ref{secpertform}. As such, this will require evaluating terms in the perturbative expansion of the self-energy at order $n$:
\begin{equation}
\Sigma^{(n)}(E_0) = \lambda^n\Upsilon V(G_0(E_0)V)^{(n-1)}\Upsilon\\
\end{equation}
with $G_0$ the Green's function for the system (vanishing on ground states) and $\Upsilon$ is the projector to the mutual ground space of each of the code gadgets.

Before we begin our the proof in earnest, it is useful to comment on the kinds of terms which will be preserved and those that will vanish in the ground space. All the operators defined for cyclic groups have some commutation relation $MN = \alpha NM$ for some complex $\alpha$. It is then a simple result to show that for our set of stabilizers any operator which does not commute with each stabilizer will neccessarily excite a ground state to an orthogonal state. This means that an operator $M$ with $\alpha\neq 1$ for $N$ any stabilizer will become a detectable error on the gadget's quantum code, and will take the gadget to an orthogonal subspace. This implies that it will vanish in our perturbative treatment, as terms arising in our effective Hamiltonian are restricted to the ground space. In this way we need only consider error-free terms (i.e.~terms which commute with all stabilizers) in our effective Hamiltonian.

From this discussion, we immediately see that first order terms will vanish, as they will necessarily leave two code gadgets in excited states. The only non-vanishing second order terms will be proportional to identity. In contrast to the toric code as presented earlier, there will be non-vanishing third order terms due to the inclusion of an identity component in the perturbation term \Eref{eqabpert}. However, these terms will be proportional to the second order terms, and so will be trivial. At $4^{\mathrm{th}}$ order we find non-trivial vertex and plaquette terms which survive. To write each of these terms more explicitly, we must now distinguish between inwards directed vertices ($v_+$) and outward directed vertices ($v_-$) as well as the two kinds of plaquettes. We will label the plaquettes left ($p_l$) and right ($p_r$) depending on the orientation of the top edge. In this scheme, the effective Hamiltonian will take the form
\begin{equation}
H_{\mathrm{eff}}^{(4)} = \mathrm{const} - \sum_{k=0}^{d-1}\left[\sum_{v_+}H^k_{v_+}+\sum_{v_-}H^k_{v_-}+ \sum_{p_l}H^k_{p_l}+\sum_{p_r}H^k_{p_r}\right]
\end{equation}

We can write each of these parts individually using the notation of Figure \ref{F:ToricEncodedOpsAbelian}.
\begin{eqnarray}
H^k_{v_+}&=&\kappa_v^L\lambda^4\Upsilon \hat{A}(L^k,L^k,v)\Upsilon+\kappa_v^Z\lambda^4\Upsilon \hat{A}(Z^k,Z^{-k},v)\Upsilon\\
H^k_{v_-}&=&\kappa_v^L\lambda^4\Upsilon \hat{A}(L^{k},L^{k},v)\Upsilon+\kappa_v^Z\lambda^4\Upsilon \hat{A}(Z^{k},Z^{-k},v)\Upsilon\\
H^k_{p_l}&=&\kappa_p^L\lambda^4\Upsilon \hat{B}(L^k,L^{-k},p)\Upsilon+\kappa_p^Z\lambda^4\Upsilon \hat{B}(Z^k,Z^{k},p)\Upsilon\\
H^k_{p_r}&=&\kappa_p^L\lambda^4\Upsilon \hat{B}(L^k,L^{-k},p)\Upsilon+\kappa_p^Z\lambda^4\Upsilon \hat{B}(Z^k,Z^{k},p)\Upsilon
\end{eqnarray}

The $\kappa$ are constants which take into account the sum of products of the Green's functions in the perturbation. They can be calculated for given $d$ once the spectrum of $H_0$ has been found. They must be nonzero for each of these terms because they do not return to the ground state before the end of the perturbation, and so the Green's function will never vanish.

In terms of encoded logical, gauge and stabilizer operators, we can use a similar notation (as seen in \Fref{F:EncodedOpsAbelian}) to write these terms.
\begin{eqnarray}
H^k_{v_+}&=&\kappa_v^L\lambda^4\Upsilon \hat{A}(L^k_\subfont{L}S_L^{-k}, v)\Upsilon+\kappa_v^Z\lambda^4\Upsilon \hat{A}(Z_\subfont{G}^{-k}S_Z^{-k},v)\Upsilon\\
H^k_{v_-}&=&\kappa_v^L\lambda^4\Upsilon \hat{A}(L^{k}_\subfont{L}, v)\Upsilon+\kappa_v^Z\lambda^4\Upsilon \hat{A}(Z^{-k}_\subfont{G}, v)\Upsilon\\
H^k_{p_l}&=&\kappa_p^L\lambda^4\Upsilon \hat{B}(L^k_\subfont{G},L^{-k}_\subfont{G}S_L^{-k},p)\Upsilon+\kappa_p^Z\lambda^4\Upsilon \hat{B}(Z_\subfont{L}^{-k}S_Z^{-k},Z_\subfont{L}^k,p)\Upsilon\\
H^k_{p_r}&=&\kappa_p^L\lambda^4\Upsilon \hat{B}(L_\subfont{G}^{k}S_L^{k},L^{-k}_\subfont{G},p)\Upsilon+\kappa_p^Z\lambda^4\Upsilon \hat{B}(Z_\subfont{L}^{-k},Z_\subfont{L}^{k}S_Z^{k},p)\Upsilon
\end{eqnarray}

We now evaluate the $\Upsilon$ projectors. Stabilizers $S_Z$ \& $S_L\to I$ in the ground space by Theorem \ref{thmabhamgs}, and each gauge operator will simply evaluate as a constant (expectation value) in the ground space. We can again disregard these constant terms as irrelevant energy shifts. This leaves us with:
\begin{eqnarray}
H^k_{v_+} &=& \kappa_v^L\lambda^4\Upsilon \hat{A}(L_\subfont{L}^k,v)\Upsilon\label{eq-Hvab1}\\
H^k_{v_-} &=& \kappa_v^L\lambda^4\Upsilon \hat{A}(L_\subfont{L}^{k},v)\Upsilon\\
H^k_{p_l} &=& \kappa_p^Z\lambda^4\Upsilon \hat{B}(Z_\subfont{L}^{-k},Z_\subfont{L}^{k},p)\Upsilon\\
H^k_{p_r} &=& \kappa_p^Z\lambda^4\Upsilon \hat{B}(Z_\subfont{L}^{-k},Z_\subfont{L}^{k},p)\Upsilon\label{eq-Hpab2}
\end{eqnarray}

By subsitituting the definitions of the $Z$ operators in terms of $T$ projectors, and using the orthogonality of the characters $\omega$, it is simple to verify that these $B$ terms are identical to the $B$ terms of \Eref{E:BpDefinition}. Although the $\pm$ signs associated with the vertices (and edges in general) in these definitions may not immediately seem consistent with the quantum double Hamiltonian presented earlier (\ref{qdtargetham}) for cyclic groups, we have the freedom to rearrange the sums over k in the Hamiltonian. When we consider that we can take $k\to-k$ whenever we like, it becomes clear that these terms are indeed identical to those appearing in the quantum double Hamiltonian. This gives the result that the effective Hamiltonian will take the form \Eref{eqabeffhamthm}.  \qedhere \end{proof}

At all orders $<2L$ ($L$ the smallest linear dimension of the surface into which the model is embedded) the terms in the self-energy expansion (see Appendix A) will act on the logical state like products of the $4^{\mathrm{th}}$ order terms. These terms all commute and will not map the ground space of the quantum double model out of the  $+1$ eigenspace of the encoded vertex and plaquette terms (\ref{eq-Hvab1}-\ref{eq-Hpab2}).

%
%

\section{Our Construction for General Quantum Double Models}\label{secgenqd}

General quantum double models can have a much more complicated algebra than the simple cyclic ones studied previously (see \Sref{secprelimqd}). However, this class includes non-Abelian models which are able to perform universal quantum computation, and so they are of key interest in this study. 

\subsection{Code Gadget Operators}\label{sec-naencoding}

As in the previously studied $\Z_d$ quantum double models, we encode each logical qudit in a code gadget consisting of four strongly coupled physical qudits (the scheme is again as in \Fref{F:toricPEPS}). For a given group $G$, the code gadget on each edge then has a $|G|$-fold degenerate ground space which we use as a logical qudit. The difference in this scheme is that the generalizations of the gauge operators and stabilizers will not commute in general, and so these degrees of freedom are not separable. This is contrary to the normal use of the terms ``stabilizer'' and ``gauge'', but we will abuse the terminology and continue to use these terms in analogy to their cyclic counterparts. The operators we define here directly generalise those used in the previous sections. The differences arise from the non-commutativity of the group multiplication operations and the fact that the irreducible representations of these general groups can be multidimensional (as opposed to the cyclic groups, which have 1-dimensional irreps). With this in mind, we can define logical and ``gauge" operators on the code gadget as follows:
\begin{align}
L^g_{\subfont{L}+}\equiv& \siteN{L^g_+}{L^g_+}{I}{I} \ ,&
T^g_{\subfont{L}+}\equiv& \sum_{g_2g_3 = g}\siteN{I}{T^{g_2}_+}{T^{g_3}_+}{I}\nonumber \ ,\\
L^g_{\subfont{G}-} \equiv& \siteN{I}{L^g_-}{L^g_+}{I} \ ,&
T^g_{\subfont{G}+} \equiv& \sum_{g_1g_2 = g}\siteN{T^{g_1}_-}{T^{g_2}_+}{I}{I} \ .
\end{align}

We can also define operators to describe the ``stabilizer" degrees of freedom:
\begin{align}
S^g_L \equiv& \siteNXL{L^{g^{-1}}_{-}}{L^{g^{-1}}_{-}}{L^{g^{-1}}_{+}}{L^{g^{-1}}_{+}}  \ ,&
S_L \equiv& \frac{1}{|G|}\sum_g S^g_{L} \ , \\
S^g_{T} \equiv& \sum_{g_1g_2g_3g_4 = g} \siteN{T_{-}^{g_1}}{T_{+}^{g_2}}{T_{+}^{g_3}}{T_{-}^{g_4}} \ ,&
S_T \equiv& S^1_{T} \ .
\end{align}
where here the identity group element is denoted $1$. These operators are clearly defined in a very natural way with respect to the quantum double algebra. In these models, the only operators which we strictly require to stabilize (act identically within) the ground space of our edge qudits are the projectors $S_L$ and $S_T$. Despite this, we will extend our notational abuse and continue to refer to $S_L^g$ and $S_T^g$  as ``stabilizer operators''.

The operators we have defined now constitute a minimum operator basis for the degrees of freedom of our encoded qudit. We can also define addititonal operators:
\begin{align}
L^g_{\subfont{L}-}\equiv& \siteN{I}{I}{L^g_-}{L^g_-} \ , &
T^g_{\subfont{L}-} \equiv& \sum_{g_4g_1 = g}\siteN{T^{g_1}_-}{I}{I}{T^{g_4}_-} \ ,  \nonumber\\
L^g_{\subfont{G}+} \equiv& \siteN{L^g_-}{I}{I}{L^g_+} \ ,&
T^g_{\subfont{G}-} \equiv& \sum_{g_3g_4 = g}\siteN{I}{I}{T^{g_3}_+}{T^{g_4}_-} \ .
\end{align}

The $+$ and $-$ subscripts on the joint operators here refer to whether these operators possess the algebra of left or right multiplication (or projection) operators, i.e.
\begin{eqnarray}
L_{\pm}^gT_{\pm}^h &=& T_{\pm}^{gh}L_{\pm}^g\\
L_{\pm}^gT_{\mp}^h &=&T_{\mp}^{hg^{-1}}L_{\pm}^g
\end{eqnarray}

 The stabilizer operators themselves satisfy the quantum double algebra
\begin{equation}
S^g_LS^h_{T} = S^{g^{-1}hg}_{T}S^g_L
\end{equation}

The logical operators commute with both the gauge and stabilizer operators, but in this scheme the gauge and stabilizer operators do not commute with each other.

We can also derive the $Z$ operators corresponding to our $T$ operators (see \Sref{secprelimqd}):
\begin{align}
Z^{\pi_{ij}}_{\subfont{L}+} =& \sum_{m}\siteNXL{I}{Z^{\pi_{im}}_+}{Z^{\pi_{mj}}_+}{I} \ ,&
Z^{\pi_{ij}}_{\subfont{G}+} =& \sum_{m}\siteNXL{Z^{\pi_{im}}_-}{Z^{\pi_{mj}}_+}{I}{I} \nonumber \\
Z^{\pi_{ij}}_{\subfont{L}-}=& \sum_m  \siteNXL{Z_-^{\pi_{mj}}}{I}{I}{Z_-^{\pi_{im}}}  ,&
Z^{\pi_{ij}}_{\subfont{G}-} =& \sum_m \siteNXL{I}{I}{Z_+^{\pi_{im}}}{Z_-^{\pi_{mj}}}\ .\label{eq-zopnadefine}
\end{align}
By construction, these operators satisfy the relevant algebra for $Z$ operators.


We have now defined operators on the code gadget corresponding to both the left regular representation and the right regular representation of the group $G$ for the encoded gauge and logical qudits. These definitions are redundant, in that the gauge or logical state can be uniquely defined through the action of only one $L$ operator and only one $T$ operator (or equivalently $Z$ operator). This gives us some freedom in the definition of the logical state. We will choose to define it through the action of $L^{g}_{\subfont{L}-}$ and $T^{g}_{\subfont{L}+}$. The action of $L^{g}_{\subfont{L}+}$ and $T^{g}_{\subfont{L}-}$ will then be poorly defined in general with respect to the logical state, but we will choose the codespace such that they act appropriately within it.

In order to understand how these operators act on the logical state, it is useful at this point to explicitly identify the encoding scheme of our code gadget. We can label the state of each physical qudit within a given code gadget as follows:
\begin{equation}\label{eq-physicalbasis}
\left|h_a,h_b,k_{\subfont{G}},k_{\subfont{L}}\right> = \siteN{\left|h_a\right>}{\left|k_{\subfont{G}}h_a\right>}{\left|h_a^{-1}k^{-1}_{\subfont{G}}k_{\subfont{L}}\right>}{\left|h_b^{-1}h_a^{-1}k_{\subfont{L}}\right>}
\end{equation}

Here the logical and gauge states of the system are labelled by $k_\subfont{L}$ and $k_\subfont{G}$ respectively, and will transform under the the action of the logical or gauge operators. The remaining labels $h$ define how the stabilizers $S_{\subfont{L}}$ and $S_{\subfont{G}}$ act on the system.
We can directly see the action of our encoded operators on these states. For example,
\begin{eqnarray}
L^{g}_{\subfont{L}-}\ket{h_a,h_b,k_{\subfont{G}},k_{\subfont{L}}}&=&\ket{h_a,h_b,k_{\subfont{G}},k_{\subfont{L}}g^{-1}}\\
T^{g}_{\subfont{L}+}\ket{h_a,h_b,k_{\subfont{G}},k_{\subfont{L}}}&=& \delta_{gk_{\subfont{L}}}\ket{h_a,h_b,k_{\subfont{G}},k_{\subfont{L}}} .
\end{eqnarray}

We chose to define the logical state $k_\subfont{L}$ by the action of these two operators, and as such they act on $k_\subfont{L}$ as might be expected. The stabilizer operators act as
\begin{eqnarray}
S^g_L\left|h_a,h_b,k_{\subfont{G}},k_{\subfont{L}}\right>&=&\left|h_ag,g^{-1}h_bg,k_{\subfont{G}},k_{\subfont{L}}\right>\\
S^g_T\left|h_a,h_b,k_{\subfont{G}},k_{\subfont{L}}\right> &=& \delta_{gh_b}\left|h_a,h_b,k_{\subfont{G}},k_{\subfont{L}}\right> \, .
\end{eqnarray}

The action of the remaining encoded operators is not so simple. In general, these will mix the logical or gauge states with the $h_a$ or $h_b$. However, we will construct the codespace such that within it, logical operators will act only on the logical state, and gauge operators similarly act appropriately.

\subsection{Code Gadget Hamiltonian}\label{subsecqdham}

Now that we have defined a set of operators we consider the Hamiltonian of a code gadget. We require that this Hamiltonian consist only of 2-body terms and possess a ground space which can be used as a codespace for our logical qudit. For this to happen, the ground space must be stabilized by $S_L$ and $S_T$, and be exactly $|G|$-fold degenerate. The significance of these two stabilizers is that they are terms in the quantum double Hamiltonian defined on a small 4-qudit torus (as depicted in \Fref{F:SmallTorus}). In that sense, we are creating a code which is very similar to a miniature quantum double model. Of course, our code gadget will consist of only two-body terms, and to achieve this we must sacrifice the gauge degrees of freedom in our model. 

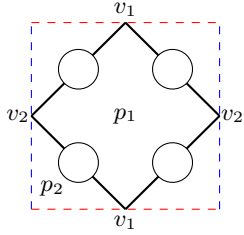
\begin{figure}
\beginpgfgraphicnamed{Figures/smalltorus}
\begin{tikzpicture}[scale=1.75]
        \draw[dashed,color=blue] (-0.20716,-0.20716) -- (-0.20716,1.20716);
        \draw[dashed,color=red] (-0.20716,-0.20716) -- (1.20716,-0.20716);
        \draw[dashed,color=red] (-0.20716,1.20716) -- (1.20716,1.20716);
        \draw[dashed,color=blue] (1.20716,-0.20716) -- (1.20716,1.20716);
        \draw[step=1cm,rotate around={45:(.5,.5)}, thick] (0,0) grid (1,1);
        \foreach \x in {0} \foreach \y in {0,1}
                \filldraw[rotate around={45:(.5,.5)},xshift=\x cm+.5cm,yshift=\y cm,color=white,draw=black]
                (0,0) circle (1.5mm);
        \foreach \x in {0,1} \foreach \y in {0}
                \filldraw[rotate around={45:(.5,.5)},xshift=\x cm,yshift=\y cm+.5cm,color=white,draw=black]
                (0,0) circle (1.5mm);
        \draw (.5,.5) node{$p_1$};
        \draw (-0.05,-0.05) node{$p_2$};
        \draw (.5,1.30716) node{$v_1$};
        \draw (.5,-0.30716) node{$v_1$};
        \draw (1.30716,.5) node{$v_2$};
        \draw (-0.30716,.5) node{$v_2$};
\end{tikzpicture}
\endpgfgraphicnamed
\caption{The 4-qudit torus is constructed by identifying the opposite vertices of a square. The dashed horizontal (red) lines are identified, and the dashed vertical (blue) lines are identified. There are two equivalent plaquette stabilizers corresponding to $p_1$ and $p_2$ and two (generally inequivalent) vertex stabilizers corresponding to $v_1$ and $v_2$ in this quantum double model. Our stabilizers $S_L$ and $S_T$ are obtained by choosing one of each. For Abelian models the two vertex stabilizers are also equivalent, and so $S_L$ and $S_T$ generate the full stabilizer group of the quantum double model on this surface for those groups.}\label{F:SmallTorus}
\end{figure}


To this end, we generalize the Hamiltonian we used for cyclic groups previously to give
\begin{eqnarray}\label{eqnahamunpet}
H(e) &=& -\frac{J_L}{|G|}\sum_g\left[L^g_{\subfont{G}+}+L^{g}_{\subfont{G}-}\right] 
-\frac{J_Z}{|G|} \sum_{\pi,i}d_\pi\left[Z^{\pi_{ii}}_{\subfont{G}+}+Z^{\pi_{ii}}_{\subfont{G}-}\right]\label{eqnaham}\\
&=& -\frac{J_L}{|G|}\sum_g\left[L^g_{\subfont{G}+}+L^{g}_{\subfont{G}-}\right] 
-J_Z\left[T^{1}_{\subfont{G}+}+T^{1}_{\subfont{G}-}\right]
\end{eqnarray}

These two forms can be seen to be equivalent using the definitions of the $Z_\subfont{G}$ operators (\Eref{eq-zopnadefine}). In the latter form, it is clear that our unperturbed Hamiltonian is very closely related to the quantum double Hamiltonian (\Eref{qdtargetham}). It is also clear that the logical operators will commute with the Hamiltonian (as each term is a gauge operator), and so we can expect at least the $|G|$-fold degeneracy of our logical qubit. The following theorem encapsulates the remaining requirements of our codespace.
\begin{theorem}\label{thmnags}
The Hamiltonian \Eref{eqnahamunpet} has a $|G|$-fold degenerate ground space that is in the common $+1$ eigenspace of $S_L$ and $S_T$.
\end{theorem}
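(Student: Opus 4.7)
The plan is to exploit the fact that both $S_L$ and $S_T$ commute with $H(e)$ and with each other, reducing the problem to analyzing $H(e)$ within each joint $(S_L, S_T)$-eigenspace. The argument proceeds in three main steps.

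First, I would verify the commutation relations. Using the quantum-double algebra $L^g_\pm T^h_\pm = T^{gh}_\pm L^g_\pm$ and $L^g_\pm T^h_\mp = T^{hg^{-1}}_\mp L^g_\pm$, direct computation yields $S^g_L L^h_{\subfont{G}\pm} = L^{g^{-1}hg}_{\subfont{G}\pm}S^g_L$ and $S^g_T L^h_{\subfont{G}\pm} = L^h_{\subfont{G}\pm}S^{h^{-1}gh}_T$, which do not vanish for generic $g, h$. However, the conjugation is absorbed by reindexing after averaging $S^g_L$ over $g$ (to form $S_L$), and is trivially handled by specializing $S^g_T$ to $g = 1$ (to form $S_T$). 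Analogous computations handle the $T^1_{\subfont{G}\pm}$ terms and yield $[S_L, S_T] = 0$. Hence the Hilbert space splits as a direct sum of joint $(S_L,S_T)$-eigenspaces, each preserved by $H(e)$.

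Second, I would identify the $(+1, +1)$ joint eigenspace explicitly and diagonalize $H(e)$ inside it. In the physical basis $\ket{h_a, h_b, k_\subfont{G}, k_\subfont{L}}$ of \Eref{eq-physicalbasis}, the condition $S_T = +1$ is equivalent to $h_b = 1$, while $S_L = +1$ is imposed by averaging over the shift $h_a \mapsto h_a g$ (which leaves $k_\subfont{G}$ and $k_\subfont{L}$ invariant). The sector is therefore spanned by $\ket{\psi(k_\subfont{G}, k_\subfont{L})} = |G|^{-1/2}\sum_{h}\ket{h, 1, k_\subfont{G}, k_\subfont{L}}$, giving dimension $|G|^2$. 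A direct calculation shows $L^g_{\subfont{G}+}\ket{h, 1, k_\subfont{G}, k_\subfont{L}} = \ket{hg^{-1}, 1, k_\subfont{G}(hgh^{-1}), k_\subfont{L}}$, and after averaging over $g$ both $P^L_\pm \equiv |G|^{-1}\sum_g L^g_{\subfont{G}\pm}$ reduce to the same rank-one projector, onto $\ket{\bar\psi(k_\subfont{L})} = |G|^{-1/2}\sum_{k_\subfont{G}}\ket{\psi(k_\subfont{G}, k_\subfont{L})}$, within each $k_\subfont{L}$ block. Similarly, both $T^1_{\subfont{G}\pm}$ reduce to the rank-one projector onto $\ket{\psi(1, k_\subfont{L})}$. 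These two rank-one projectors overlap with magnitude $|G|^{-1/2}$, so the restriction of $H(e)$ to each $k_\subfont{L}$ block is a $2 \times 2$ eigenvalue problem with a unique ground state; summing over $k_\subfont{L} \in G$ yields the claimed $|G|$-fold degeneracy, all within the $(+1, +1)$ sector. The logical operators $L^g_{\subfont{L}-}$ and $T^g_{\subfont{L}+}$ commute with $H(e)$ and act irreducibly on this degenerate manifold, identifying it as an encoded qudit.

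The main obstacle is the final step: ruling out that any other $(S_L, S_T)$-sector hosts a state of equal or lower energy. Because the individual summands of $H(e)$ do not commute with $S^g_L$ and $S^g_T$ for generic $g$, non-$(+1,+1)$ sectors still carry nontrivial weight of the $P^L_\pm$ and $T^1_{\subfont{G}\pm}$ projectors, and the exclusion requires a quantitative bound. I would approach this by decomposing each such sector under the residual $G$-action and bounding the operator norm of $J_L(P^L_+ + P^L_-) + J_Z(T^1_{\subfont{G}+} + T^1_{\subfont{G}-})$ strictly below $(J_L + J_Z) + \sqrt{(J_L - J_Z)^2 + 4J_L J_Z/|G|}$, the value attained in the $(+1,+1)$ sector. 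The analogous Abelian argument (Theorem~\ref{thmabhamgs}) provides a useful template, and the representation theory of $\mathbb{C}[G]^{\otimes 4}$ under the natural $G \times G$ action generated by $S^g_L$ and $S^g_T$ should make the sector-by-sector minimization tractable.
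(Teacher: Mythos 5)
Your setup and your analysis of the $(+1,+1)$ sector are correct and agree with what the paper ultimately finds: $S_L$ and $S_T$ are commuting projectors that commute with $H(e)$; the image of $S_LS_T$ is the $|G|^2$-dimensional span of the states $|G|^{-1/2}\sum_h\ket{h,1,k_\subfont{G},k_\subfont{L}}$; within each $k_\subfont{L}$ block the four Hamiltonian terms collapse to two rank-one projectors with overlap $|G|^{-1/2}$; and the resulting $2\times 2$ problem has a unique ground state, reproducing the paper's ground state $\ket{I_{11},1,1,k_\subfont{L}}+\ket{I_{11},1,I_{11},k_\subfont{L}}$. However, your third step --- showing that no state outside the $(+1,+1)$ sector attains an energy as low --- is the actual content of the theorem, and you have only described a strategy for it, not carried it out. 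Announcing that you ``would bound the operator norm strictly below'' the $(+1,+1)$ value is precisely the claim to be proved; without it your argument establishes only that the $(+1,+1)$ sector contains a $|G|$-fold degenerate local minimum, not that it is the global ground space.

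The paper closes this gap with a specific mechanism that your sketch does not supply. It introduces two auxiliary bases (a ``gauge representation basis'' $\ket{\sigma_{mn},h,\pi_{ij},k_\subfont{L}}$ and a ``gauge element basis'' $\ket{\sigma_{mn},h,k_\subfont{G},k_\subfont{L}}$) in which each of the four terms of $H(e)$ becomes an explicit sum of rank-one projectors: $|G|^{-1}\sum_g L^g_{\subfont{G}+}$ projects onto $\sigma=I$, $|G|^{-1}\sum_g L^g_{\subfont{G}-}$ onto $\pi=I$, while $T^1_{\subfont{G}+}$ projects onto $k_\subfont{G}=1$ and $T^1_{\subfont{G}-}$ onto $k_\subfont{G}=h$. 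After block-diagonalizing by $(h,k_\subfont{L},\sigma,n)$ it groups the four projectors into two sub-Hamiltonians, each pairing one $L$-type with one $T$-type projector; the crucial point is that this particular pairing makes the norm of each sub-Hamiltonian equal to $1+|G|^{-1/2}$ \emph{independently of the block}, because the relevant overlaps are all $|G|^{-1/2}$. The triangle inequality $\|H^{h,k_\subfont{L}}_{\sigma,n}\|\le\|H_A\|+\|H_B\|$ is then saturated only when the two sub-Hamiltonians are parallel, which forces $\sigma=I$, $n=1$, $h=1$. Your proposed grouping into $J_L(P^L_++P^L_-)$ versus $J_Z(T^1_{\subfont{G}+}+T^1_{\subfont{G}-})$ does not obviously enjoy this block-independence (outside the $(+1,+1)$ sector $P^L_+$ and $P^L_-$ need not coincide, so the norm of their sum varies from sector to sector), so even the skeleton of your intended bound would need to be reorganized along the paper's lines. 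Until the sector-by-sector exclusion is actually established, the proof is incomplete.
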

We provide a proof of this theorem in \Aref{provethmnags}. This theorem places restrictions on the form of the ground space which ensure that undesirable terms in the effective Hamiltonian will couple to higher energy sectors and vanish.  Now we have a suitable codespace for our logical qudit, we proceed to reproducing the target model.

\subsection{Coupling the Code Gadgets}\label{subsecqdcouple}

We place our code gadgets on the edges of the lattice and couple them using the same geometric scheme as the toric code and cyclic group cases (illustrated in \Fref{F:toricPEPS}), where each physical qudit is perturbatively coupled to a physical qudit from a neighbouring code gadget. Our uncoupled (unperturbed) Hamiltonian for each edge of the lattice is as given in \Eref{eqnahamunpet}.
\begin{eqnarray}\label{eqnaunpetham}
H_0 &=& \sum_eH(e)\\
&=&\sum_e\left[-\frac{J_L}{|G|}\sum_g\left[L^g_{\subfont{G}+}+L^{g}_{\subfont{G}-}\right] 
-\frac{J_Z}{|G|} \sum_{\pi,i}d_\pi\left[Z^{\pi_{ii}}_{\subfont{G}+}+Z^{\pi_{ii}}_{\subfont{G}-}\right]\right]_e
\end{eqnarray}

As before, the subscript $e$ refers to a particular edge (a particular code gadget). The perturbation term is generalized straightforwardly from the cyclic case to take the form:
\begin{eqnarray}\label{eqnapert}
V&=&\sum_bV(b)\\
&=& -\sum_b\left[\sum_k\bond{L^k}{L^k}+\sum_{\pi,k,m} d_\pi \bond{Z^{\pi_{km}}}{Z^{\pi_{mk}}}\right]
\end{eqnarray}
where the $L$ or $Z$ operator associated with a particular qudit has definite $\pm$ subscript depending on its location as follows
\begin{align}
L:&\siteN{+}{+}{-}{-} &
Z:&\siteN{-}{+}{+}{-} \nonumber
\end{align}
e.g.~an $L$ operator in the top right hand corner of a code gadget (with the edge taken running up the page) will take the form $L_+$. This is motivated by the definitions of $v_\pm$ and $p_\pm$ of the quantum double model in \Sref{secprelimqd}. As in the $\Z_d$ models, it is clear that the coupling terms are very closely related to the quantum double Hamiltonian. In fact, these bond terms could be considered a quantum double Hamiltonian on a small (2-qudit) sphere, with the corresponding non-degenerate ground state (see \Fref{F:SmallSphere}).

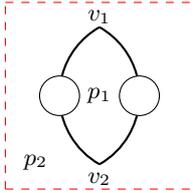
\begin{figure}
\beginpgfgraphicnamed{Figures/smallsphere}
\begin{tikzpicture}[scale=1.75]
        \draw[dashed,color=red] (-0.20716,-0.20716) -- (-0.20716,1.20716);
        \draw[dashed,color=red] (-0.20716,-0.20716) -- (1.20716,-0.20716);
        \draw[dashed,color=red] (-0.20716,1.20716) -- (1.20716,1.20716);
        \draw[dashed,color=red] (1.20716,-0.20716) -- (1.20716,1.20716);

        \draw (.5,.5) node{$p_1$};
        \draw (0.02,0.00) node{$p_2$};
\draw[thick] (0.5,1.02) arc (60:-60:0.6cm);
\draw[thick] (0.5,1.02) arc (120:240:0.6cm);
        \foreach \x in {0.2,0.8}
                \filldraw[xshift=\x cm,color=white,draw=black]
                (0,0.5) circle (1.5mm);
        \draw (.5,1.1) node{$v_1$};
        \draw (.5,-0.13) node{$v_2$};
\end{tikzpicture}
\endpgfgraphicnamed
\caption{The 2-qudit sphere is constructed by identifying every point on the boundary of a plane. This boundary is represented by dashed red lines. The solid lines correspond to edges on the lattice, with a qudit located on each edge. The loop formed by these edges is homologically equivalent to an equator around the sphere. There exists a plaquette stabilizer for each plaquette $p_1$ and $p_2$ and similarly a vertex stabilizer for each vertex $v_1$ and $v_2$ in this quantum double model. The terms in our bond Hamiltonian are obtained by choosing one plaquette and one vertex stabilizer from these. For Abelian models the two vertex stabilizers are also equivalent, as are the two plaquette stabilizers, and so the bond Hamiltonian consists of a full generating set of the stabilizer group of the quantum double model on this surface for these groups.}\label{F:SmallSphere}
\end{figure}

Again we must define the encoded $A$ and $B$ operators we will reproduce in our perturbative expansion (with $1$ the identity group element for general groups)
\begin{eqnarray}
\hat{A}(v) &\equiv& \frac{1}{|G|} \sum_g\bigotimes_{e \in +(v)} L^g_\subfont{L}(e,v)\\
\hat{B}(p) &\equiv& \sum_{g_k \ldots g_1 = 1}\, \bigotimes_{e_i \in \Box(p)} T_\subfont{L}^{g_i}(e_i,p)
\end{eqnarray}

The main result of this section is Theorem \ref{thmnapert}.


\begin{theorem}\label{thmnapert}
The Hamiltonian $H=H_0+\lambda V$ with $H_0$ and $V$ defined as in \Eref{eqnaunpetham} and \Eref{eqnapert} on a square lattice has a low energy behaviour described by an effective Hamiltonian of the form
\begin{equation}\label{eqnaeffhamthm}
H_{\mathrm{eff}} = c_I I-\left(c_A\lambda^4\right)\sum_v \hat{A}(v) - \left(c_B\lambda^4\right)\sum_p \hat{B}(p) + \mathcal{O}(\lambda^5)
\end{equation}
for constants $c$ independent of $\lambda$ and $N$, where $N$ is the number of sites on the lattice. The encoded low energy behavior of the system to this order is described by the quantum double Hamiltonian \Eref{qdtargetham} up to additive and multiplicative constants.
\end{theorem}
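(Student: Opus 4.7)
The plan is to mirror the proof of Theorem \ref{thmabpert} for the cyclic case, using the Green's function self-energy expansion $\Sigma^{(n)}(E_0) = \lambda^n \Upsilon V (G_0(E_0) V)^{n-1} \Upsilon$ laid out in Appendix A. First I would exploit the error-detection structure of the code gadget: Theorem \ref{thmnags} identifies the ground space with the common $+1$ eigenspace of $S_L$ and $S_T$, so any operator that fails to commute with these stabilizers creates a ``detectable error'' that maps $\Upsilon$ onto a mutually orthogonal subspace. In particular, a single-qudit $L_\pm$ or $Z_\pm$ on one of the four physical qudits of a gadget anticommutes appropriately with $S_L$ or $S_T$ (up to group conjugation in the non-Abelian case), so any product of bond operators $V(b)$ that leaves some gadget in an uncorrected error configuration contributes nothing to $\Sigma$.

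Using this, I would show that all orders $n<4$ produce at most additive constants. At first order, any $V(b)$ excites two neighbouring gadgets; second order can only return to the ground space if the same bond acts twice, which yields an identity-proportional shift; third order terms are either disallowed by the detection argument or reducible to the second-order shift via the identity ($k=0$ / trivial representation) components of $V$. This localises the first nontrivial contribution at fourth order and restricts the geometry of the surviving bond products.

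At fourth order, the only ``error-free'' configurations are four $L\otimes L$ bonds arranged around a single vertex $v$ (forming a closed loop through the code gadgets on the four edges incident at $v$) and four $Z\otimes Z$ bonds arranged around a single plaquette $p$. For the vertex term, multiplying the four single-qudit $L_\pm^k$ operators on each gadget and projecting onto the codespace yields, via the definitions in \Sref{sec-naencoding}, the encoded $L^k_{\subfont{L}\pm}(e,v)$ on each of the four edges adjacent to $v$; summing over $k\in G$ (with the $\pm$ assignments reflecting the vertex orientation, exactly as in \Sref{secprelimqd}) reconstructs $|G|\,\hat A(v)$ up to a constant. For the plaquette term, the product of four $Z^{\pi_{km}}$ operators with the representation indices contracted in a cycle around $p$ produces, after absorbing gauge/stabilizer operators into constants via Theorem \ref{thmnags}, a sum of encoded $Z_\subfont{L}^{\pi_{ij}}$'s whose Fourier inversion (the sum $\sum_\pi d_\pi \sum_{ij}[\pi(g)]_{ij} Z^{\pi_{ij}}$ from \Sref{secprelimqd}) reproduces the encoded $T^1_\subfont{L}$ projectors constituting $\hat B(p)$. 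The constants $c_A,c_B$ collect nonzero products of Green's function denominators, which are manifestly independent of $\lambda$ and $N$.

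The main obstacle, relative to the cyclic proof, is the non-commutativity of left and right multiplication together with the multi-dimensional irreps: one must carefully track operator ordering when four bond terms are composed, verify that the $\pm$ conventions on the perturbation (the labelling stipulated in Eq.~\eqref{eqnapert}) match the orientation conventions of $v_\pm$ and $p_\pm$ from \Sref{secprelimqd} so that the composition really is $L^g(e,v)$ and $T^g(e,p)$ with correct inverses, and that the contracted index sums $\sum_{k,m} d_\pi Z^{\pi_{km}} \otimes Z^{\pi_{mk}}$ on adjacent gadgets assemble into the desired $B^1(p)$ rather than some conjugated or ordering-shuffled variant. A secondary subtlety is confirming that all leftover gauge operators (of the form $L^g_{\subfont{G}\pm}$ or $Z^{\pi_{ij}}_{\subfont{G}\pm}$) that appear in intermediate steps act as scalars inside $\Upsilon\,\cdot\,\Upsilon$, which again follows from Theorem \ref{thmnags} once one verifies they map the ground space to itself. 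With these book-keeping points settled, the effective Hamiltonian takes the claimed form \eqref{eqnaeffhamthm}, and on the logical codespace coincides with the target quantum double Hamiltonian \eqref{qdtargetham} up to the stated additive and multiplicative constants.
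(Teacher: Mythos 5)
Your overall strategy matches the paper's: Green's function self-energy, vanishing of orders below four, and identification of the fourth-order vertex and plaquette loops with the encoded $\hat A(v)$ and $\hat B(p)$. However, there is a genuine gap in your central vanishing argument. You assert that a single-qudit $L_\pm$ or $Z_\pm$ ``anticommutes appropriately with $S_L$ or $S_T$ (up to group conjugation in the non-Abelian case)'' and hence that any uncorrected error configuration contributes nothing. The paper explicitly states that no such general argument has been found for non-Abelian groups --- the stabilizers and gauge operators no longer commute, the irreps are multi-dimensional, and the clean $MN=\alpha NM$ structure of the cyclic case is gone. In its place the paper performs an exhaustive case analysis (\Aref{secerrorops}) of every two-qudit operator that can appear on a gadget ($z_a,\dots,z_d$, $l_a,\dots,l_d$, diagonal terms, and mixed $L$/$Z$ terms), computing $\Upsilon \hat E \Upsilon$ for each by inserting $S_L$ or $S_T$ and invoking the grand orthogonality theorem. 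This analysis is not mere book-keeping: it is what produces the index constraints ($\pi=\sigma$, $j=k$, etc.) that contract the representation indices around a plaquette, the condition $g=g'$ for $l_a,l_b$, and --- crucially --- the condition $g = k_\subfont{L}\, g' \,k_\subfont{L}^{-1}$ for $l_c,l_d$, which means the surviving plaquette $L$-loops carry group elements conjugated by the \emph{logical state}. Your sketch has no mechanism to discover or handle this logical-state dependence, and without it one cannot verify that those terms really reduce to constants on the codespace.

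A secondary inaccuracy: the surviving fourth-order configurations are not only $L$-loops at vertices and $Z$-loops at plaquettes. $Z$-loops at vertices and $L$-loops at plaquettes also pass the error filter; they happen to act as (modified) gauge operators $L^g_{\subfont{G}'\pm}$, $Z^{\pi_{ij}}_{\subfont{G}'\pm}$ on the codespace and therefore contribute only constants --- but these modified operators are \emph{not} the gauge operators defined in \Sref{sec-naencoding}, and the paper must separately verify (using the explicit form of the ground state $\ket{I_{11},1,1,k_\subfont{L}}+\ket{I_{11},1,I_{11},k_\subfont{L}}$) that their expectation values are independent of $k_\subfont{L}$. You flag ``leftover gauge operators'' as a subtlety to check, which is the right instinct, but the check requires the explicit ground-state structure from the proof of Theorem~\ref{thmnags}, not just the statement that the ground space is stabilized by $S_L$ and $S_T$.
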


This theorem implies that our effective Hamiltonian correctly reproduces the low energy sector of the quantum double Hamiltonian for any group. Before we prove it, we make some comment on the operators that will arise in the perturbative treatment. For cyclic models, we were able to make some general arguments to exclude all unwanted operators from arising in the effective Hamiltonian. Unfortunately, no equivalent general argument has been found for the general quantum double models. For this reason, we have explicitly shown this property for the relevant operators in \Aref{secerrorops}. The results are an intuitive generalization of the cyclic case, in that the only non-vanishing operators will be those contributing encoded  operators which act on the gauge or logical subspaces. We use these results now to calculate the non-vanishing terms in the effective Hamiltonian.

\begin{proof}[Proof of Theorem \ref{thmnapert}]

We follow the perturbative treatment as used previously (see \Aref{secpertform}). The results of \Aref{secerrorops} effectively show that no $2^{\mathrm{nd}}$ order terms are able to survive except those proportional to identity (exactly as in the cyclic case). Similarly, all first order terms, and all non-trivial third order terms will vanish. It is clear then, that no non-trivial operators will appear below 4th order. The effective Hamiltonian at order $4$ is then given by
\begin{equation}
H_{\mathrm{eff}}^{(4)} = \lambda^n\Upsilon V(G_0(E_0)V)^{(n-1)}\Upsilon \, ,\\
\end{equation}
$G_0$ here the Green's function vanishing on ground states (taken at the unperturbed ground state energy $E_0$) and $\Upsilon$ again the projector to the ground space of the code gadgets.
At this order we will find non-trivial terms around plaquettes and vertices consisting of products of $Z$ or $L$ operators. As in the $\Z_d$ case, it is useful to distinguish the two different kinds of vertices and the two different kinds of plaquettes here (see \Fref{figdirlattice}). We have defined the lattice such that vertices can either consist of all inwardly directed edges ($v_+$) or all outwardly directed edges ($v_-$). Plaquettes can either have their top edge directed left ($p_l$) or right ($p_r$).

Disregarding constant energy shifts, we can then separate terms in the Hamiltonian by vertex or plaquette type:
\begin{eqnarray}
H_{\mathrm{eff}}^{(4)} &=& -\Upsilon\left[\sum_{v_+}H_{v_+}+\sum_{v_-}H_{v_-}+\sum_{p_l}H_{p_l}+\sum_{p_r}H_{p_r}\right]\Upsilon
\end{eqnarray}

As we did for the cyclic groups, we will write these operators pictorially to make the physical location of a particular operator obvious. For simplicity, we have neglected to draw those qudits on the outer edge of the vertex/plaquette under consideration, where these terms will act trivially. With this in mind, the individual terms in the effective Hamiltonian can be written as:
\begin{equation}
\label{eqnaopfirst}
H_{v_+}=\kappa_{v}^L\lambda^4
 \sum_g\vertexNA{L^g_+}{L^g_+}{L^g_+}{L^g_+}{L^g_+}{L^g_+}{L^g_+}{L^g_+}{1}
+\kappa_v^Zd_\pi^4\lambda^4\sum_{\pi,i,r,m,n}\sum_{j}
\vertexNA{Z_+^{\pi_{rj}}}{Z_-^{\pi_{ji}}}{Z_+^{\pi_{ij}}}{Z_-^{\pi_{jm}}}{Z_+^{\pi_{mj}}}{Z_-^{\pi_{jn}}}{Z_+^{\pi_{nj}}}{Z_-^{\pi_{jr}}}{1}
\end{equation}
\begin{equation}
H_{v_-}=\kappa_v^L\lambda^4
 \sum_g\vertexNA{L^g_-}{L^g_-}{L^g_-}{L^g_-}{L^g_-}{L^g_-}{L^g_-}{L^g_-}{-1}
+\kappa_v^Zd_\pi^4\lambda^4\sum_{\pi,i,r,m,n}\sum_{j}
\vertexNA{Z_-^{\pi_{rj}}}{Z_+^{\pi_{ji}}}{Z_-^{\pi_{ij}}}{Z_+^{\pi_{jm}}}{Z_-^{\pi_{mj}}}{Z_+^{\pi_{jn}}}{Z_-^{\pi_{nj}}}{Z_+^{\pi_{jr}}}{-1}
\end{equation}
\begin{equation}
H_{p_l}=\kappa_p^L\lambda^4
 \sum_g\plaquetteNA{L^g_+}{L^{k_\subfont{L}^{-1} gk_\subfont{L}}_-}{L^{k_\subfont{L}^{-1} gk_\subfont{L}}_-}{L^g_+}{L^g_+}{L^{k_\subfont{L}^{-1} gk_\subfont{L}}_-}{L^{k_\subfont{L}^{-1} gk_\subfont{L}}_-}{L^g_+}{1}
+\kappa_p^Zd_\pi^4\lambda^4\sum_{\pi,i,r,m,n}\sum_{j}
\plaquetteNA{Z_-^{\pi_{jr}}}{Z_-^{\pi_{ij}}}{Z_+^{\pi_{ji}}}{Z_+^{\pi_{mj}}}{Z_-^{\pi_{jm}}}{Z_-^{\pi_{nj}}}{Z_+^{\pi_{jn}}}{Z_+^{\pi_{rj}}}{1}
\end{equation}
\begin{equation}
H_{p_r}=\kappa_p^L\lambda^4
 \sum_g\plaquetteNA{L^{k_\subfont{L}^{-1} gk_\subfont{L}}_-}{L^g_+}{L^g_+}{L^{k_\subfont{L}^{-1} gk_\subfont{L}}_-}{L^{k_\subfont{L}^{-1} gk_\subfont{L}}_-}{L^g_+}{L^g_+}{L^{k_\subfont{L}^{-1} gk_\subfont{L}}_-}{-1}
+\kappa_p^Zd_\pi^4\lambda^4\sum_{\pi,i,r,m,n}\sum_{j}
\plaquetteNA{Z_+^{\pi_{jr}}}{Z_+^{\pi_{ij}}}{Z_-^{\pi_{ji}}}{Z_-^{\pi_{mj}}}{Z_+^{\pi_{jm}}}{Z_+^{\pi_{nj}}}{Z_-^{\pi_{jn}}}{Z_-^{\pi_{rj}}}{-1}\label{eqnaoplast}
\end{equation}

Here the $\kappa$'s are proportionality constants arising from the pertubation treatment. For a specific model, they can readily be calculated. It should be reasonably clear that no other non-trivial $4^{\mathrm{th}}$ order terms (or lower) than those shown above will survive (see \Aref{secerrorops}). The higher order terms will act in the logical space as products of the $4^{\mathrm{th}}$ order terms, until the order is sufficiently high to form non-contractible loops over the lattice.

The most illustrative of the $4^{\mathrm{th}}$ order terms are the plaquette $Z$ terms. As such, we will present a brief demonstration of how these operators arise. On our plaquette, at 4th order the most general $Z$ operator to be constructed has the form:
\[\Upsilon \;
\plaquetteNA{Z_-^{\pi_{sr}}}{Z_-^{\pi_{ij}}}{Z_+^{\pi_{ji}}}{Z_+^{\pi_{mn}}}{Z_-^{\pi_{nm}}}{Z_-^{\pi_{tq}}}{Z_+^{\pi_{qt}}}{Z_+^{\pi_{rs}}}{1}
\;\Upsilon
\]


Here we have already used the fact that $Z$ operators with different representations will move out of the ground space, and so we have only used one representation $\pi$. From Eq. (\ref{zerroreq1}, \ref{zerroreq2}) we find the condition $j=s=n=q$ for non-vanishing terms. From this, we immediately obtain the term appearing above in the effective Hamiltonian for this plaquette. Similar considerations give the other plaquette and vertex terms.

Given the operators (\ref{eqnaopfirst}-\ref{eqnaoplast}) in their current form, it is difficult to immediately see how many of them act on the logical state. In order to study this, we must revisit the encoding scheme defined in \Eref{eq-physicalbasis} and the ground space studied in \Aref{provethmnags}. We will study the action of each operator on the basis 
\begin{equation}
\ket{\sigma_{mn},h,k_\subfont{G},k_\subfont{L}} \equiv \frac{\sqrt{d_\sigma }}{\sqrt{|G|}}\sum_{g_\sigma  \in \mathcal{G}}[\sigma(g_\sigma )]_{mn}\ket{g_\sigma ,g_\sigma ^{-1}hg_\sigma ,k_\subfont{G},k_\subfont{L}}
\end{equation}
particularly in the ground space where $\sigma_{mn} = I_{11}$ and $h=1$. From this we can examine the effect on the logical state $k_\subfont{L}$ in the codespace. The operators in question act on these states as follows
\begin{eqnarray}\label{eqopstudyfirst}
\siteGIGANTOR{L^g_+}{I}{I}{L^{k_\subfont{L}^{-1} gk_\subfont{L}}_-}\ket{I_{11},1,k_\subfont{G},k_\subfont{L}} &=&
 \frac{1}{\sqrt{|G|}}\sum_{g_\sigma}\ket{gg_\sigma,1,k_\subfont{G}g^{-1},k_\subfont{L}}\\
&=& \frac{1}{\sqrt{|G|}}\sum_{\tilde{g}_\sigma}\ket{\tilde{g}_\sigma,1,k_\subfont{G}g^{-1},k_\subfont{L}}\\
&=&\ket{I_{11},1,k_\subfont{G}g^{-1},k_\subfont{L}}
\end{eqnarray} 
and
\begin{eqnarray}
\siteGIGANTOR{I}{L^g_+}{L^{k_\subfont{L}^{-1} gk_\subfont{L}}_-}{I}\ket{I_{11},1,k_\subfont{G},k_\subfont{L}} &=&
 \frac{1}{\sqrt{|G|}}\sum_{g_\sigma}\ket{g_\sigma,1,gk_\subfont{G},k_\subfont{L}}\\
&=&\ket{I_{11},1,gk_\subfont{G},k_\subfont{L}}
\end{eqnarray} 

In the codespace, these operators have the effective action equivalent to $L^g_\subfont{G-}$ or $L^g_\subfont{G+}$ respectively. They differ from the previously defined versions, but the key is that they act only on the gauge state in this subspace. We will denote these operators $L^g_\subfont{G'-}$ and $L^g_\subfont{G'+}$. In a similar way, we can look at the action of the $Z$-like operators:
\begin{eqnarray}
\sum_m\siteNXL{Z_-^{\pi_{mj}}}{Z_+^{\pi_{im}}}{I}{I}\ket{I_{11},1,k_\subfont{G},k_\subfont{L}} 
&=&[\pi(k_\subfont{G})]_{ij}\ket{I_{11},1,k_\subfont{G},k_\subfont{L}} 
\end{eqnarray}

This operator acts quite similarly (identical up to conjugacy) to $Z_{\subfont{G}+}^{\pi_{ij}}$. In the ground space, we will be able to ignore it as a constant expectation value (as the gauge state will be fixed). As such, we will denote this operator $Z_{\subfont{G}'+}^{\pi_{ij}}$. The second of the $Z$ operators is a little more complicated. To this end, we must consider the exact form of the ground states:
\begin{equation}
\ket{\psi_0} = \ket{I_{11},1,1,k_\subfont{L}} + \ket{I_{11},1,I_{11},k_\subfont{L}}
\end{equation}

This allows us to see the action of these operators.
\begin{eqnarray}
\sum_m\siteNXL{I}{I}{Z_+^{\pi_{mj}}}{Z_-^{\pi_{im}}}\ket{I_{11},1,1,k_\subfont{L}}
&=&[\pi(k_\subfont{L}^{-1}k_\subfont{L})]_{ij}\ket{I_{11},1,1,k_\subfont{L}}\\
&=&\delta_{ij}\ket{I_{11},1,1,k_\subfont{L}}\\
\label{eq-mixedgauge}\sum_m\siteNXL{I}{I}{Z_+^{\pi_{mj}}}{Z_-^{\pi_{im}}}\ket{I_{11},1,I_{11},k_\subfont{L}}
&=&\frac{1}{\sqrt{|G|}}\sum_{g_\pi}[\pi(k_\subfont{L}^{-1}g_\pi^{-1}k_\subfont{L})]_{ij}\ket{I_{11},1,g_\pi,k_\subfont{L}}\label{eqopstudylast}
\end{eqnarray}

On the ground space, this acts not dissimilarly to $Z^{\pi_{ij}}_\subfont{G-}$. The gauge state becomes mixed up somewhat, but because of the particular definite gauge state we have in the code space, an operator of this form can be evaluated as a constant. This is a result of the fact that the overlap of this state (\Eref{eq-mixedgauge}) with the ground space is independent of $k_\subfont{L}$ (as can be easily verified). With this in mind, we will write this operator as $Z^{\pi_{ij}}_\subfont{G'-}$ from now on.

Given these definitions, we can now write the terms in the effective Hamiltonian in a more succinct format:
\begin{equation}
\label{opform1}
H_{v_+}=\kappa_v^L\lambda^4\sum_{g}\vertexNAenc{L^g_{\subfont{L}+}}{L^g_{\subfont{L}+}}{L^g_{\subfont{L}+}}{L^g_{\subfont{L}+}}{1}
+\kappa_v^Z\lambda^4\sum_{\pi,i,r,m,n}d_\pi\vertexNAenc{Z_{\subfont{G}'+}^{\pi_{ri}}}{Z_{\subfont{G}'+}^{\pi_{im}}}{Z_{\subfont{G}'+}^{\pi_{mn}}}{Z_{\subfont{G}'+}^{\pi_{nr}}}{1}
\end{equation}
\begin{equation}
H_{v_-}=\kappa_v^L\lambda^4\sum_{g}\vertexNAenc{L_{\subfont{L}-}^g}{L_{\subfont{L}-}^g}{L_{\subfont{L}-}^g}{L_{\subfont{L}-}^g}{-1}
+\kappa_v^Z\lambda^4\sum_{\pi,i,r,m,n}d_\pi\vertexNAenc{Z_{\subfont{G}'-}^{\pi_{ri}}}{Z_{\subfont{G}'-}^{\pi_{im}}}{Z_{\subfont{G}'-}^{\pi_{mn}}}{Z_{\subfont{G}'-}^{\pi_{nr}}}{-1}
\end{equation}
\begin{equation}
H_{p_l}=\kappa_p^L\lambda^4\sum_{g}\plaquetteNAenc{L_{\subfont{G'}-}^g}{L_{\subfont{G'}+}^g}{L_{\subfont{G'}-}^g}{L_{\subfont{G'}+}^g}{1}
+\kappa_p^Z\lambda^4\sum_{\pi,i,r,m,n}d_\pi\plaquetteNAenc{Z_{\subfont{L}-}^{\pi_{ir}}}{Z_{\subfont{L}+}^{\pi_{rn}}}{Z_{\subfont{L}-}^{\pi_{nm}}}{Z_{\subfont{L}+}^{\pi_{mi}}}{1}
\end{equation}
\begin{equation}
H_{p_r}=\kappa_p^L\lambda^4\sum_{g}\plaquetteNAenc{L_{\subfont{G'}+}^g}{L_{\subfont{G'}-}^g}{L_{\subfont{G'}+}^g}{L_{\subfont{G'}-}^g}{-1}
+\kappa_p^Z\lambda^4\sum_{\pi,i,r,m,n}d_\pi\plaquetteNAenc{Z_{\subfont{L}+}^{\pi_{ir}}}{Z_{\subfont{L}-}^{\pi_{rn}}}{Z_{\subfont{L}+}^{\pi_{nm}}}{Z_{\subfont{L}-}^{\pi_{mi}}}{-1}\label{opform8}
\end{equation}

We have been able to treat each code gadget in the $Z$ terms seperately, even though in the previous definitions they shared a common index ($j$). The reason that we are able to separate them in this way is because in the ground space for any given $j$ we can use Eq. (\ref{zerroreq1} - \ref{zerroreq4}) to rewrite each term as an average over all values of $j$.  This allows us to rewrite each term as a summation over different indices, which gives us their quoted form. The factor of $d_\pi$ appearing in the terms (\ref{opform1}-\ref{opform8}) then comes from the remaining sum over $j$. As a simplified explicit example, consider:
\begin{equation}
\Upsilon \sum_j\siteNXL{Z_-^{\pi_{jm}}}{Z_+^{\pi_{ij}}}{I}{I}\otimes
\siteNXL{Z_-^{\pi_{jp}}}{Z_+^{\pi_{mj}}}{I}{I}\Upsilon = \Upsilon \sum_j \sum_{j'}\frac{1}{d_{\pi}}\siteNXL{Z_-^{\pi_{j'm}}}{Z_+^{\pi_{ij'}}}{I}{I}\otimes
\siteNXL{Z_-^{\pi_{jp}}}{Z_+^{\pi_{mj}}}{I}{I}\Upsilon
\end{equation}

We can disregard all the terms in the Hamiltonian which do not act on the logical subspace (as they will only introduce some constant energy shift for our purposes). We now also drop the $\pm$ subscript on logical operators, with the observation that they are consistent with the edge orientation conventions introduced in \Sref{secprelimqd}. With this in mind, our effective Hamiltonian reduces to:
\begin{equation}\label{eq-finalnaeffham}
H_{\mathrm{eff}} = -\lambda^4\Upsilon \left[\sum_{v}\kappa_v^L\sum_g\vertexNAencNoArrow{L^g_{\subfont{L}}}{L^g_{\subfont{L}}}{L^g_{\subfont{L}}}{L^g_{\subfont{L}}}+\sum_{p}\kappa_p^Z\sum_{\pi,i,r,m,n}d_\pi\plaquetteNAencNoArrow{Z_{\subfont{L}}^{\pi_{ir}}}{Z_{\subfont{L}}^{\pi_{rn}}}{Z_{\subfont{L}}^{\pi_{nm}}}{Z_{\subfont{L}}^{\pi_{mi}}}\right]\Upsilon
\end{equation}
with each operator defined as in Eq. (\ref{opform1} - \ref{opform8}), and the $\pm$ of the logical operators defined by the edge orientation. Using the definitions of the $Z$ operators in terms of $T$ operators, and the orthogonality of group characters, it is not difficult to show that these operators are indeed equivalent to those of the target quantum double Hamiltonian. This gives our Hamiltonian the form \Eref{eqnaeffhamthm} as claimed.\qedhere
\end{proof}

As in the Abelian case,  all orders $<2L$ ($L$ the smallest linear dimension of the surface into which the model is embedded) of the self-energy expansion (see Appendix A) will act on the logical state like products of the $4^{\mathrm{th}}$ order terms. These terms all commute and will not map the ground space of the quantum double model out of the  $+1$ eigenspace of the encoded vertex and plaquette terms (\ref{eq-Hvab1}-\ref{eq-Hpab2}). Beyond this order perturbative corrections to the self-energy will be able to form homologically non-trivial loops on the surface. These will be heavily suppressed for large $L$ and large energy gap. 
\appendix
%
%

\section{Perturbation Theory}\label{secpertform}
We will now give a brief introduction to the formalism we will use to perform perturbation calculations, such as to introduce lattice couplings between edge qudits. We follow the resolvent or Green's function approach in~\cite{Kitaev-06} and in general we are only interested in the leading non-constant order in the effective Hamiltonian. Given the Hamiltonian
\begin{equation}
H=H_0+\lambda V
\end{equation}
where $H_0$ has a subspace of degenerate eigenvectors with energy $E_0$. Let $\Upsilon$ be the projector onto the eigenspace of the eigenvalue $E_0$ of $H_0$. In our case we are interested in the situation where $E_0$ is the ground state energy of $H_0$. In degenerate perturbation theory one generally aims to find an effective Hamiltonian $H_{\mathrm{eff}}$ that acts on the subspace given by $\Upsilon$ and that has the same eigenvalues as $H$, in other words, an effective Hamiltonian that describes how the perturbation term $V$ acts within the ground space of the unperturbed Hamiltonian. We use the Green's function formalism of~\cite{Kitaev-06} for this calculation and find the self-energy $\Sigma(E)$ which is given by the perturbation expansion:
\begin{equation}
\Sigma(E)=\Upsilon\left(\lambda V+\lambda^2VG_0(E)V+\lambda^3VG_0(E)VG_0(E)V+\cdots\right)\Upsilon
\end{equation}
At the lowest non-trivial order of perturbation theory  we have
\begin{equation}
H_{\mathrm{eff}} \simeq E_0+\Sigma(E_0)
\end{equation}
The unperturbed Green's function for excited states of $H_0$ is denoted by $G_0(E)=(E-H_0)^{-1}(1-\Upsilon)$, such that this function vanishes in the ground space.
At higher orders in perturbation theory, one would need to take into account the $E$ dependence of the self-energy around $E\approx E_0$ in order to find the effective Hamiltonian; however, as we are interested only in the lowest non-trivial order in perturbation theory throughout this paper, we do not.  

We can expand the self-energy order by order as follows (neglecting constant energy shifts):
\begin{eqnarray}
\Sigma(E_0) &=& \sum_n \Sigma^{(n)}(E_0)\\
\Sigma^{(n)}(E_0) &=& \lambda^n\Upsilon V(G_0(E_0)V)^{(n-1)}\Upsilon
\end{eqnarray}
The Green's function $G_0$ is always evaluated at the unperturbed ground state energy $E_0$, and will be non-positive. 

The product of operators obtained between between $\Upsilon$'s will vanish unless it remains within (or at least overlaps in some nontrivial way) the ground-space of every code gadget on the lattice. It will be simple to eliminate many operators which will vanish or act trivially on the ground space. At $n^{\mathrm{th}}$ order, the self-energy will consist of a sum of terms
\begin{equation}
\Sigma^{(n)}(E_0) = \sum_j\lambda^n\Upsilon \kappa_jA_j\Upsilon
\end{equation}
with $\kappa_j$ constants. They take into account the Green's function terms appearing in the perturbative analysis. Summed over each possible ordering of the $V$ terms which will give the same $A_j$, $\kappa_j$ is the product of the $(n-1)$ Green's functions appearing in the $n^{\mathrm{th}}$ order terms. In our case, this sum will run over the $n!$ ways of ordering the perturbations which multiply together to give the operator $A_j$.
%
%

\section{Extension to Arbitrary Graph}

It is not difficult to extend our treatment from the explicit square lattice to an arbitrary directed graph. As each edge is associated with two plaquettes on any nonintersecting 2D graph (neglecting boundaries), each edge qudit is associated with exactly 4 nearest neighbours regardless of the form of the lattice. This allows us to retain our previous definitions for edge qubit gauge and logical operators as in \Sref{sec-naencoding}.

If we apply a perturbation to the uncoupled edge qudits exactly as before, we will again see plaquette and vertex terms arising in the effective Hamiltonian. Of course, they may not arise at the same order in this treatment (e.g. for a hexagonal lattice plaquettes arise at $6^{\mathrm{th}}$ and vertices arise at $3^{\mathrm{rd}}$ order, and on a general graph plaquette boundaries and vertex stars will not be uniform in size). It is possible that this may have some undesirable effect on excited states of the effective Hamiltonian, but the higher order terms that will be able to survive the perturbation will act in the logical space as products of existing (commuting) terms until the perturbative order is high enough to form non-trivial homology cycles over the lattice. Below this order, the ground space of the effective Hamiltonian will remain within the ground space of the encoded quantum double Hamiltonian.

\begin{figure}
\centering
\beginpgfgraphicnamed{Figures/simpleDirectedGraph}
\begin{tikzpicture}[scale=2.5]
	\def\off{0.5} 
	\draw (0,0)--(1,0);
	\draw[->,>=stealth',line width=2.5](\off,0)-- (0.1+\off,0); 
	\draw (1,0)--(1.87,0.5);
	\draw[->,>=stealth',line width=2.5](1+\off*0.87,\off*0.5)-- (1.087+\off*0.87,0.05+\off*0.5); 
	\draw (1,0)--(1.87,-0.5);
	\draw[->,>=stealth',line width=2.5](1+\off*0.87,-\off*0.5)-- (1.087+\off*0.87,-0.05-\off*0.5); 
	\draw (1.87,0.5)--(2.87,0.5);
	\draw[->,>=stealth',line width=2.5](1.87+\off,0.5)-- (1.97+\off,0.5); 
	\draw (1.87,-0.5)--(2.87,-0.5);
	\draw[->,>=stealth',line width=2.5](1.87+\off,-0.5)-- (1.97+\off,-0.5); 
	\draw (2.87,0.5)--(2.87,-0.5);
	\draw[->,>=stealth',line width=2.5](2.87,0.5-\off)-- (2.87,0.4-\off); 
	\node at (1,0) [anchor = north east] {$v_1$};
	\node at (2.2,0) [anchor = center] {$p_1$};
\end{tikzpicture}
\endpgfgraphicnamed
\caption{A section of a simple directed graph\label{F:ArbGraph}}
\end{figure}
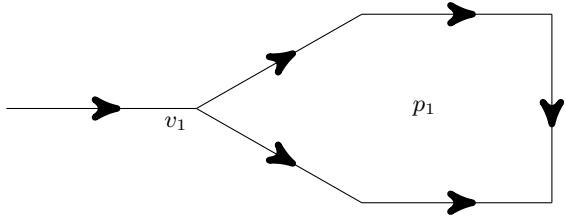

We can see in \Fref{F:ArbGraph} part of a simple directed graph. If we consider the perturbation term exactly as in the general group square lattice treatment (\Eref{eqnapert}), the vertex term will take the form (up to additive and multiplicative constants)
\begin{equation}
H(v) \sim \Upsilon\sum_g\prod_{e\rightarrow v}L^g_{\subfont{L}+}\prod_{e\leftarrow v}L^g_{\subfont{L}-} \Upsilon+ \Upsilon\sum\prod_{e\rightarrow v}Z_{\subfont{G}'+}\prod_{e\leftarrow v}Z_{\subfont{G}'-}\Upsilon
\end{equation}
where $e\leftarrow v$ denotes those edges which run out of the vertex $v$, and $e\rightarrow v$ denotes those that run towards the vertex. The exact form of the $Z$ term is not difficult to calculate but is only sketched in this notation for clarity. In any case, the vertex $Z$ term will evaluate as a constant in the ground space and so can be disregarded for our purposes. This leaves the products of terms for inwards directed edges and outwards directed edges. If we look back to the definitions of the quantum double model in \Sref{secprelimqd}, we can see that this will give a vertex term consistent with the target Hamiltonian.

Similarly, as we traverse a plaquette, two kinds of terms will arise,
\begin{equation}
H(v) \sim \Upsilon\sum_g\prod_{e\rightarrow p}L^g_{\subfont{G}-}\prod_{e\leftarrow p}L^g_{\subfont{G}+} \Upsilon+ \Upsilon\sum\prod_{e\rightarrow v}Z_{\subfont{L}+}\prod_{e\leftarrow v}Z_{\subfont{L}-}\Upsilon
\end{equation}

Here the notation $e\rightarrow p$ is taken to mean the edge is on the left of the plaquette, when looking along $e$, and similarly $e\leftarrow p$ has the edge on the right of $p$. The indices of the $Z$ terms are here suppressed for clarity, but they can easily be restored by comparing with \Eref{eq-finalnaeffham}, and can be verified to be consistent with the quantum double Hamiltonian. Of course, the gauge $L$ term will be evaluated as a constant and disregarded. This will leave us only with the desired logical operators on both plaquettes and vertices, and we will have successfully reproduced the quantum double Hamiltonian (with the caveat that the terms may arise at different order, and with different coefficients).

\section{Proof of Theorem \ref{thmabhamgs}}\label{secProveThmAbGs}
\begin{proof}
We will label eigenstates of $Z$ by group elements $\ket{h}$, and the corresponding eigenstates of $L$ by (unitary, irreducible) representations $\ket{\sigma} = \frac{1}{\sqrt{d}}\sum_h\omega^{\sigma h}\ket{h}$. Note that for these cyclic groups, the representations are all one dimensional. We begin by defining projectors
\begin{equation}
	P_i^k = \frac{1}{d}\sum_l\omega^{kl}S_i^l \, ,
\end{equation}
with inverses
\begin{equation}
	S_i^k = \sum_l\omega^{-kl}P^l_i \, .
\end{equation}
where $i \in\{L,Z\}$. We can then write our Hamiltonian as
\begin{equation}
	H=-\frac{1}{d}\sum_k \left[L_\subfont{G}^k 
	\left(1+\sum_h\omega^{-kh}P_L^h\right)
	+Z_\subfont{G}^k \left(1+\sum_{\sigma}\omega^{-k\sigma}P_Z^\sigma\right)\right] \, .
\end{equation}

If we block diagonalize with respect to both $h$ and $\sigma$, our proof will amount to showing that the unique ground state is in the $h=0$ and $\sigma=0$ block. In group theoretic terms, this corresponds to $h$ the identity group element and $\sigma$ the trivial representation (which we wil denote now as $\sigma=I$). Our Hamiltonian within the $(h,\sigma)$ block takes the form
\begin{equation}
	H^{h,\sigma}=-\frac{1}{d}\sum_k
	\left[L_\subfont{G}^k(1+\omega^{-kh})+Z_\subfont{G}^k(1+\omega^{-k\sigma})\right] \, .
\end{equation} 

We can rewrite the $\sum_kL_\subfont{G}^k$ and $\sum_kZ_\subfont{G}^k$ operators as projectors on the gauge subspace onto group elements or representations:
\begin{equation}
	\sum_k\omega^{-kh}L_\subfont{G}^k = d\proj{h} \quad \mbox{ and } \quad 
	\sum_k\omega^{-k\sigma}Z_\subfont{G}^k = d\proj{\sigma} \, ,
\end{equation}
with $\ket{\sigma}$ a representation state and $\ket{h}$ a group element state. The representation states are discrete Fourier transforms of the element states. Then we have
\begin{align}\label{E:blockHableian}
	H^{h,\sigma} = -\bigl[\proj{0}+\proj{h}+\proj{I}+\proj{\sigma}\bigr] \, .
\end{align}

Here $\ket{0}$ is the group identity element and $\ket{I}$ corresponds to the trivial representation of the group.  Since $H^{h,\sigma}$ is a negative sum of projectors, all its eigenvalues are non-positive.  We can then write a triangle inequality for the magnitude of least eigenvalue by taking the operator norm,
\begin{align}
	\bigl\|H^{h,\sigma}\bigr\| \leq \bigl\|\proj{0}+\proj{I}\bigr\|
		+\bigl\|\proj{h}+\proj{\sigma}\bigr\| \, .
\end{align}

We know that this inequality is saturated only if $\left(\proj{0}+\proj{I}\right)$ is parallel to $\left(\proj{h}+\proj{\sigma}\right)$ in their largest eigenspace.  But unless $(h,\sigma)=(0,I)$, these vectors are never parallel.  Then we have
\begin{align}
	\bigl\|H^{0,I}\bigr\| = 2\bigl\| \proj{0}+\proj{I} \bigr\| > \bigl\|H^{h,\sigma}\bigr\|
	\quad \mbox{ when } (h,\sigma) \neq (0,I) \, .
\end{align}

We can now say that the ground space must have $h=0$ and $\sigma=I$. Furthermore, when this is the case, we see from \Eref{E:blockHableian} that the only non-vanishing eigenvectors of this block are linear combinations of $\ket{0}$ and $\ket{I}$.  We find by inspection that the two independent eigenvectors are
\begin{align}
	\ket{\psi_\pm} = \ket{0}\pm\ket{I} \, ,
\end{align}
with eigenvalues
\begin{equation}
	\lambda_{\pm} = -2\left[1\pm\frac{1}{\sqrt{d}}\right] \, .
\end{equation}

This gives for the unique ground state of the gauge qudit
\begin{equation}
	\ket{\psi_0^\subfont{G}} = \ket{0}+\ket{I} \, .
\end{equation}

Given that the stabilizer and gauge degrees of freedom have a unique ground state, and the fact that the Hamiltonian commutes with the logical operators, we can also say that the Hamiltonian \Eref{eqabham1} has a $d$-fold degenerate ground space encoding the logical state.\qedhere
\end{proof}


%
%

\section{Proof of Theorem \ref{thmnags}}\label{provethmnags}

\begin{proof}
We wish to prove that our Hamiltonian \Eref{eqnahamunpet} is stabilized as claimed. The key to proving this theorem is choosing a suitable basis. We have previously defined the states as in the physical basis of \Eref{eq-physicalbasis}. Here we will need to look at some more sophisticated bases that make the actions of our encoded operators even more transparent.
\subsection{Alternative Bases}

Consider first the gauge representation basis:
\begin{equation}
\ket{\sigma_{mn},h,\pi_{ij},k_\subfont{L}} \equiv \frac{\sqrt{d_\pi d_\sigma}}{|G|}\sum_{g_\pi, g_\sigma \in G}[\pi(g_\pi g_\sigma)]_{ij}[\sigma(g_\sigma)]_{mn}\ket{g_\sigma,g_\sigma^{-1}hg_\sigma,g_\pi,k_\subfont{L}}
\end{equation}

Here $\pi$ and $\sigma$ are (unitary irreducible) representations of dimension $d_\pi$ and $d_\sigma$ respectively. The $\pi_{ij}$ variable defines the gauge state, which is why we call this the gauge representation basis. It is clear that there are a total of $|G|^4$ basis states, as needed, and orthonormality can be easily verified. We would like to know how the $\sum_g L^g_{\subfont{G}\pm}$ projectors in the Hamiltonian \Eref{eqnahamunpet} will act on this basis.

It can be easily verified that these operators act in the following way on the physical basis
\begin{eqnarray}
L^{\tilde{g}}_{\subfont{G}+}\left|h_a,h_b,k_{\subfont{G}},k_{\subfont{L}}\right>&=&\left|h_a\tilde{g}^{-1},\tilde{g}h_b\tilde{g}^{-1},k_{\subfont{G}}h_a\tilde{g}h_a^{-1},k_{\subfont{L}}\right>\\
L^{\tilde{g}}_{\subfont{G}-}\left|h_a,h_b,k_{\subfont{G}},k_{\subfont{L}}\right> &=& \left|h_a,h_b,k_{\subfont{G}}h_a\tilde{g}^{-1}h_a^{-1},k_{\subfont{L}}\right> .
\end{eqnarray}

This gives for the action of the $L$ projectors in our new basis:
\begin{equation}
\frac{1}{|G|}\sum_{\tilde{g}}L^{\tilde{g}}_{\subfont{G}+}\ket{\sigma_{mn},h,\pi_{ij},k_\subfont{L}}=\frac{\sqrt{d_\pi d_\sigma}}{|G|^2}\sum_{g_\pi, g_\sigma,\tilde{g}}[\pi(g_\pi g_\sigma)]_{ij}[\sigma(g_\sigma)]_{mn}\ket{g_\sigma\tilde{g}^{-1},\tilde{g}g_\sigma^{-1}hg_\sigma\tilde{g}^{-1},g_\pi g_\sigma\tilde{g}g_\sigma^{-1},k_\subfont{L}}
\end{equation}
introducing substitutions
\begin{eqnarray}
g_\sigma' &=& g_\sigma\tilde{g}^{-1}\\
g_\pi'&=&g_\pi g_\sigma\tilde{g}g_\sigma^{-1}
\end{eqnarray}
we find
\begin{equation}
\frac{1}{|G|}\sum_{\tilde{g}}L^{\tilde{g}}_{\subfont{G}+}\ket{\sigma_{mn},h,\pi_{ij},k_\subfont{L}}=\frac{\sqrt{d_\pi d_\sigma}}{|G|^2}\sum_{g_\pi', g_\sigma',\tilde{g}}\sum_s[\pi(g_\pi'g_\sigma')]_{ij}[\sigma(g_\sigma')]_{ms}[\sigma(\tilde{g})]_{sn}\ket{g_\sigma',g_\sigma'^{-1}hg_\sigma',g_\pi',k_\subfont{L}}
\end{equation}

The grand orthogonality theorem (\Eref{eq-orthogonality}) can then be used over $\tilde{g}$ to give
\begin{eqnarray}
\frac{1}{|G|}\sum_{\tilde{g}}L^{\tilde{g}}_{\subfont{G}+}\ket{\sigma_{mn},h,\pi_{ij},k_\subfont{L}}
&=& \delta_{\sigma I}\frac{\sqrt{d_\pi d_\sigma}}{|G|}\sum_{g_\pi', g_\sigma'}[\pi(g_\pi'g_\sigma')]_{ij}[\sigma(g_\sigma')]_{11}\ket{g_\sigma',g_\sigma'^{-1}hg_\sigma',g_\pi',k_\subfont{L}}\\
&=&\delta_{\sigma I}\ket{\sigma_{mn},h,\pi_{ij},k_\subfont{L}}
\end{eqnarray}
where $I$ is the trivial irrep of the group $G$. We can perform a similar treatment for the other $L$ projector, giving
\begin{equation}
\frac{1}{|G|}\sum_{\tilde{g}}L^{\tilde{g}}_{\subfont{G}-}\ket{\sigma_{mn},h,\pi_{ij},k_\subfont{L}}
=\frac{\sqrt{d_\pi d_\sigma}}{|G|^2}\sum_{g_\pi ', g_\sigma ,\tilde{g}}\sum_s[\pi(g_\pi 'g_\sigma )]_{is}[\pi(\tilde{g}^{-1})]_{sj}[\sigma(g_\sigma )]_{mn}\ket{g_\sigma ,g_\sigma ^{-1}hg_\sigma ,g_\pi ',k_\subfont{L}}
\end{equation}
with substitutions
\begin{eqnarray}
g_\pi '&=&g_\pi g_\sigma \tilde{g}g_\sigma ^{-1}
\end{eqnarray}

Again, this can be simplified through orthogonality to give
\begin{eqnarray}
\frac{1}{|G|}\sum_{\tilde{g}}L^{\tilde{g}}_{\subfont{G}-}\ket{\sigma_{mn},h,\pi_{ij},k_\subfont{L}}
&=&\delta_{\pi I}\frac{\sqrt{d_\pi d_\sigma }}{|G|}\sum_{g_\pi ', g_\sigma }[\pi(g_\pi 'g_\sigma )]_{11}[\sigma(g_\sigma )]_{mn}\ket{g_\sigma ,g_\sigma ^{-1}hg_\sigma ,g_\pi ',k_\subfont{L}}\\
&=&\delta_{\pi I}\ket{\sigma_{mn},h,\pi_{ij},k_\subfont{L}}
\end{eqnarray}

To treat the $T_{G\pm}$ operators analogously, we need to look at another basis in which the gauge state takes a particular goup element as opposed to an element of a representation. We define this gauge element basis as
\begin{equation}
\ket{\sigma_{mn},h,k_\subfont{G},k_\subfont{L}} \equiv \frac{\sqrt{d_\sigma }}{\sqrt{|G|}}\sum_{g_\sigma  \in \mathcal{G}}[\sigma(g_\sigma )]_{mn}\ket{g_\sigma ,g_\sigma ^{-1}hg_\sigma ,k_\subfont{G},k_\subfont{L}}
\end{equation}
i.e. instead of using a representation for the gauge degree of freedom, here a single group element is used. In terms of the physical basis, the $T^{1}_{G\pm}$ projectors act in the following way:
\begin{eqnarray}
T^{1}_{\subfont{G}+}\left|h_a,h_b,k_{\subfont{G}},k_{\subfont{L}}\right>&=&\delta_{1k_{\subfont{G}}}\left|h_a,h_b,k_{\subfont{G}},k_{\subfont{L}}\right> \label{eq:gadgetmagparts1}\\
T^{1}_{\subfont{G}-}\left|h_a,h_b,k_{\subfont{G}},k_{\subfont{L}}\right> &=& \delta_{(h_ah_bh_a^{-1})k_{\subfont{G}}}\left|h_a,h_b,k_{\subfont{G}},k_{\subfont{L}}\right> \label{eq:gadgetmagparts2}
\end{eqnarray}
where $1$ is the identity group element. We are now in a position to demonstrate that the $T$ operators explicitly project to single states in the gauge element basis:
\begin{eqnarray}
T^{1}_{\subfont{G}+}\ket{\sigma_{mn},h,k_\subfont{G},k_\subfont{L}}&=&\delta_{1k_{\subfont{G}}}\frac{\sqrt{d_\sigma }}{\sqrt{|G|}}\sum_{g_\sigma }[\sigma(g_\sigma )]_{mn}\ket{k_\subfont{G},\sigma_{mn},h,k_\subfont{L}}\\
&=&\delta_{ek_{\subfont{G}}}\ket{R_{mn},h,k_\subfont{G},k_\subfont{L}}\\
T^{1}_{\subfont{G}-}\ket{\sigma_{mn},h,k_\subfont{G},k_\subfont{L}}&=&\delta_{(g_\sigma g_\sigma ^{-1}hg_\sigma g_\sigma ^{-1})k_{\subfont{G}}}\frac{\sqrt{d_\sigma }}{\sqrt{|G|}}\sum_{g_\sigma }[\sigma(g_\sigma )]_{mn}\ket{g_\sigma ,g_\sigma ^{-1}hg_\sigma ,k_\subfont{G},k_\subfont{L}}\\
&=&\delta_{hk_{\subfont{G}}}\ket{\sigma_{mn},h,k_\subfont{G},k_\subfont{L}}
\end{eqnarray}

\subsection{Solving the Hamiltonian}
The Hamiltonian of interest is given by
\begin{equation}
H=-\sum_{g}L_{\subfont{G}+}^g-\sum_gL_{\subfont{G}-}^g-T^1_{\subfont{G}+}-T^1_{\subfont{G}-}
\end{equation}

Given that we now know explicitly what states each of these terms project to, we can write instead
\begin{eqnarray}
H &=& -\sum_{\pi_{ij},h,k_\subfont{L}}\proj{I_{11},h,\pi_{ij},k_\subfont{L}}-\sum_{\sigma_{mn},h,k_\subfont{L}}\proj{\sigma_{mn},h,I_{11},k_\subfont{L}}\\
&&\qquad\qquad-\sum_{\sigma_{mn},h,k_\subfont{L}}\proj{\sigma_{mn},h,1,k_\subfont{L}} -\sum_{\sigma_{mn},h,k_\subfont{L}}\proj{\sigma_{mn},h,h,k_\subfont{L}}\nonumber
\end{eqnarray}

Each of these terms are pairwise orthogonal for different values of $h$ or $k_\subfont{L}$. We can then immediately block diagonalize the Hamiltonian by these two variables. As these labels will not participate overly in the calculation, we will suppress them henceforth for clarity. We can also block diagonalize by a representation and one of its indices as follows
\begin{equation}
H^{h,k_\subfont{L}}_{\sigma,n} = -\sum_{m}\Big[\proj{I_{11},\sigma_{mn}}+\proj{\sigma_{mn},I_{11}}+\proj{\sigma_{mn},1}+\proj{\sigma_{mn},h}\Big]
\end{equation}

That projectors in each block are orthogonal to those in different blocks can be verified by examining the inner products presented in \Sref{sec-nagsproofIP}.

Looking only in one block of the Hamiltonian, we can split the terms into two vectors:
\begin{equation}
-H^{h,k_\subfont{L}}_{\sigma,n} = \Big[\sum_m\proj{I_{11},\sigma_{mn}}+\sum_m\proj{\sigma_{mn},1}\Big]+\Big[\sum_m\proj{\sigma_{mn},I_{11}}+\sum_m\proj{\sigma_{mn},h}\Big]
\end{equation}

We can bound the length of each of these vectors individually by considering them as subhamiltonians
\begin{eqnarray}
-H_A &=& \sum_m\proj{I_{11},\sigma_{mn}}+\sum_m\proj{\sigma_{mn},1}\\
-H_B&=&\sum_m\proj{\sigma_{mn},I_{11}}+\sum_m\proj{\sigma_{mn},h}
\end{eqnarray}
each of these subhamiltonians can be further block diagonalized by $m$:
\begin{eqnarray}
-H_A^m &=& \proj{I_{11},\sigma_{mn}}+\proj{\sigma_{mn},1}\\
-H_B^m &=& \proj{\sigma_{mn},I_{11}}+\proj{\sigma_{mn},h}
\end{eqnarray}

We can solve these subhamiltonians in each block by calculating the overlap of the two projectors (these calculations are shown in \Sref{sec-nagsproofIP}:
\begin{eqnarray}
\braket{I_{11},\sigma_{mn}}{\sigma_{mn},1} &=& \frac{1}{\sqrt{|G|}}\\
\braket{\sigma_{mn},I_{11}}{\sigma_{mn},h} &=& \frac{1}{\sqrt{|G|}}
\end{eqnarray}

It can be shown that the eigenvalues of these subhamiltonians will then be
\begin{equation}
\lambda_m = -1\pm|C|
\end{equation}
where $C$ is the inner product calculated above. As such, we can say that the norm of the subhamiltonians are equal, independent of $m$ block and have value $1+\frac{1}{\sqrt{|G|}}$. Importantly, their length is independent of the choice of $\sigma_{mn}$, $\pi_{ij}$, $h$ and $k_\subfont{L}$.

In terms of the full Hamiltonian and block structure, we have two vectors whose lengths are constant for a given group. We can then bound the eigenvalues of the full Hamiltonian as follows. We can use a triangle inequality argument to give
\begin{equation}
\Big\|H^{h,k_\subfont{L}}_{\sigma,n}\Big\| \leq \Big\|\sum_m\proj{I_{11},\sigma_{mn}}+\sum_m\proj{\sigma_{mn},1}\Big\|+\Big\|\sum_m\proj{\sigma_{mn},I_{11}}+\sum_m\proj{\sigma_{mn},h}\Big\|
\end{equation}

We know that each individual vector has a constant length. We also know that this inequality will only be saturated when the two vectors are parallel (which will only happen when they are equal). This is also when $\sigma=I$, $n=1$, and $h=e$. As desired, there is no dependence on $k_\subfont{L}$. 

Since the eigenvectors of $H$ are non-positive, the largest magnitude eigenvalue will correspond to the ground space. The triangle inequality argument amounts to showing that the ground state must rest in the block defined by $H^{1,k_\subfont{L}}_{I,1}$. By inspection, there will be 2 independent eigenvectors in this block for each value of $k_\subfont{L}$ (we return the $h$ index, but the $k_\subfont{L}$ index remains suppressed here)
\begin{equation}
\ket{\psi_\pm} = \ket{I_{11},1,1} \pm \ket{I_{11},1,I_{11}}
\end{equation}
with eigenvalues
\[
\lambda_{\pm} =-2\left(1\pm\frac{1}{\sqrt{|G|}}\right)
\]

Clearly $\ket{\psi_+}$ defines a $|G|$-fold degenerate ground space of our system. It is simple to verify that the stabilizers have the desired relations with this state
\begin{eqnarray}
\sum_gS_L^g\ket{\psi_+}&=&\ket{\psi_+}\\
S_T^1\ket{\psi_+}&=&\ket{\psi_+}
\end{eqnarray}
and thus the Hamiltonian \Eref{eqnahamunpet} behaves as claimed.
\qedhere\end{proof}

\subsection{Useful Inner Products}\label{sec-nagsproofIP}
Here we wish to calculate inner products between states in the gauge representation basis and the gauge element basis. In general, we have
\begin{eqnarray}
\braket{\sigma_{mn},h,\pi_{ij},k_\subfont{L}}{\sigma'_{m'n'},h',k_\subfont{G},k_\subfont{L}'}&=&\frac{\sqrt{d_\pi  d_{\sigma'}d_\sigma }}{|G|^{3/2}}\sum_{g_\pi , g_\sigma , g_{\sigma'}}[\pi(g_\pi g_\sigma )]^*_{ij}[\sigma(g_\sigma )]^*_{mn}[\sigma'(g_{\sigma'})]_{m'n'}\nonumber\\
&&\qquad\qquad \times\braket{g_\sigma ,g_\sigma ^{-1}hg_\sigma ,g_\pi ,k_\subfont{L}}{g_{\sigma'},g_{\sigma'}^{-1}h'g_{\sigma'},k_\subfont{G},k_\subfont{L}'}\\
&=&\frac{\sqrt{d_\pi  d_{\sigma'}d_\sigma }}{|G|^{3/2}}\sum_{g_\pi , g_\sigma , g_{\sigma'}}[\pi(g_\pi g_\sigma )]^*_{ij}[\sigma(g_\sigma )]^*_{mn}\sigma'(g_{\sigma'})]_{m'n'}\nonumber\\
&&\qquad\qquad\times\delta_{g_\sigma g_{\sigma'}}\delta_{hh'}\delta_{g_\pi k_\subfont{G}}\delta_{k_\subfont{L}k_\subfont{L}'}\\
&=&\frac{\sqrt{d_\pi  d_{\sigma'}d_\sigma }}{|G|^{3/2}}\delta_{hh'}\delta_{k_\subfont{L}k_\subfont{L}'}\sum_{g_\sigma }[\pi(k_\subfont{G}g_\sigma )]^*_{ij}[\sigma(g_\sigma )]^*_{mn}[\sigma'(g_{\sigma})]_{m'n'}\label{eq-generalresult}
\end{eqnarray}

In particular, the special cases we are interested can be simplified. With $I$ the trivial representation, we have
\begin{eqnarray}
\braket{\sigma_{mn},h,I_{11},k_\subfont{L}}{\sigma'_{m'n'},h',k_\subfont{G},k_\subfont{L}'}
&=&\frac{\sqrt{d_{\sigma'}d_\sigma }}{|G|^{3/2}}\delta_{hh'}\delta_{k_\subfont{L}k_\subfont{L}'}\sum_{g_\sigma }[\sigma(g_\sigma )]^*_{mn}[\sigma'(g_{\sigma})]_{m'n'}\\
&=&\frac{1}{|G|^{1/2}}\delta_{hh'}\delta_{k_\subfont{L}k_\subfont{L}'}\delta_{\sigma \sigma'}\delta_{mm'}\delta_{nn'}\label{eq-overlap1}\qquad\forall k_\subfont{G}
\end{eqnarray}
and
\begin{eqnarray}
\braket{I_{11},h,\pi_{ij},k_\subfont{L}}{\sigma'_{m'n'},h',k_\subfont{G},k_\subfont{L}'}
&=&\frac{\sqrt{d_\pi  d_{\sigma'}}}{|G|^{3/2}}\delta_{hh'}\delta_{k_\subfont{L}k_\subfont{L}'}\sum_{g_\sigma }[\pi(k_\subfont{G}g_\sigma )]^*_{ij}[\sigma'(n_{R})]_{m'n'}\\
&=&\frac{\sqrt{d_\pi  d_{\sigma'}}}{|G|^{3/2}}\delta_{hh'}\delta_{k_\subfont{L}k_\subfont{L}'}\sum_{g_\sigma }\sum_s[\pi(k_\subfont{G})]^*_{is}[\pi(g_\sigma )]^*_{sj}[\sigma'(n_{R})]_{m'n'}\\
&=&\frac{1}{|G|^{1/2}}\delta_{hh'}\delta_{k_\subfont{L}k_\subfont{L}'}\delta_{\pi\sigma'}\delta_{jn'}[\pi(k_\subfont{G})]^*_{im'}\label{eq-badresult}
\end{eqnarray}
particularly,
\begin{eqnarray}
\braket{I_{11},h,\pi_{ij},k_\subfont{L}}{\sigma'_{m'n'},h',1,k_\subfont{L}'}
&=&\frac{1}{|G|^{1/2}}\delta_{hh'}\delta_{k_\subfont{L}k_\subfont{L}'}\delta_{GR'}\delta_{im'}\delta_{jn'}\label{eq-overlap2}\qquad\forall k_\subfont{G}
\end{eqnarray}

The orthonormality of the gauge representation basis is also useful when written in the form
\begin{equation}
\braket{I_{11},h,\pi_{ij},k_\subfont{L}}{\sigma_{mn},h',I_{11},k_\subfont{L}'}=\delta_{\pi \sigma}\delta_{\sigma I}\delta_{i1}\delta_{nj}\delta_{m1}\delta_{n1}\delta_{hh'}\delta_{k_\subfont{L}k_\subfont{L}'}
\end{equation}

%
%

\section{Error Operations in General Quantum Double Models}\label{secerrorops}

In the case of cyclic quantum double models, we were able to provide a simple general argument as to why many of the terms in the effective Hamiltonian vanished. In the more general case, this kind of simple argument is no longer applicable, and so we are forced to take an exhaustive survey of the terms that may arise in the effective Hamiltonian. We aim to show that any undesired terms will vanish in the ground space.

In order to undertake this study, it is useful to note the following relations:
\begin{eqnarray}
Z_{\pm}^{\pi_{ij}}L_{\pm}^g &=& \sum_k [\pi(g)]_{ik}L_{\pm}^gZ_{\pm}^{\pi_{kj}}\\
Z_{\pm}^{\pi_{ij}}L_{\mp}^g &=& \sum_k L_{\mp}^gZ_{\pm}^{\pi_{ik}}[\pi(g^{-1})]_{kj}
\end{eqnarray}

The kinds of operators we consider will take the form:
\begin{align}
z_a =& \siteNXL{Z^{\pi_{ij}}_-}{Z^{\sigma_{kl}}_+}{I}{I} \ , & 
z_b =& \siteNXL{I}{I}{Z^{\sigma_{kl}}_+}{Z^{\pi_{ij}}_-} \ , &
z_c =& \siteNXL{I}{Z^{\pi_{ij}}_+}{Z^{\sigma_{kl}}_+}{I} \ , &
z_d =& \siteNXL{Z^{\pi_{ij}}_-}{I}{I}{Z^{\sigma_{kl}}_-}\nonumber \\
l_a =& \siteN{L^{g}_+}{L^{g'}_+}{I}{I} \ , &
l_b =& \siteN{I}{I}{L^{g'}_-}{L^{g}_-} \ , &
l_c =& \siteN{I}{L^{g}_+}{L^{g'}_-}{I} \ , &
l_d =& \siteN{L^{g}_+}{I}{I}{L^{g'}_-}
\end{align}

We will refer to these as ``error terms'' or ``error operators''. For each error term $\hat{E}$ we will calculate $\Upsilon \hat{E}\Upsilon$ where $\Upsilon$ is the projector on to the ground state. Note that $\Upsilon = S\Upsilon$ for either of the stabilizers $S_T$ or $S_{L}$. First consider the error term $z_c$:
\begin{eqnarray}
\Upsilon z_c\Upsilon &=& \Upsilon z_cS_L\Upsilon\\
&=&\frac{1}{|G|}\sum_g\Upsilon z_cS^g_L\Upsilon\\
&=&\frac{1}{|G|}\sum_g\Upsilon \siteGIGANTOR{L^{g^{-1}}_{-}}{Z^{\pi_{ij}}_+L^{g^{-1}}_{-}}{Z^{\sigma_{kl}}_+L^{g^{-1}}_{+}}{L^{g^{-1}}_{+}}\Upsilon\\
&=&\frac{1}{|G|}\sum_g\Upsilon (S^{g^{-1}}_L)^{\dagger}\sum_{m,n}[\pi(g)]_{mj}[\sigma(g^{-1})]_{kn}\siteNXL{I}{Z^{\pi_{im}}_+}{Z^{\sigma_{nl}}_+}{I}\Upsilon
\end{eqnarray}

Now, because the ground state is stabilized by the projector $S_{L}$, it is easy to check that it must also be stabilized by each of $S_L^g$ individually. This allows us to write:
\begin{eqnarray}
\Upsilon z_c\Upsilon &=& \frac{1}{|G|}\Upsilon \sum_{m,n,g}[\pi(g)]_{mj}[\sigma(g^{-1})]_{kn}\siteNXL{I}{Z^{\pi_{im}}_+}{Z^{\sigma_{nl}}_+}{I}\Upsilon\\
&=& \frac{1}{|G|}\Upsilon \sum_{m,n,g}[\pi(g)]_{mj}[\sigma(g)]^*_{nk}\siteNXL{I}{Z^{\pi_{im}}_+}{Z^{\sigma_{nl}}_+}{I}\Upsilon
\end{eqnarray}
where we have used the unitarity of the representation $\sigma$ in the last line. Now we make use of the Grand Orthogonality Theorem for unitary representations:
\begin{equation}\label{eq-orthogonality}
\sum_g[\pi(g)]^*_{ij}[\sigma(g)]_{kl} = \frac{|G|}{d_{\pi}}\delta_{\pi\sigma}\delta_{ik}\delta_{jl}
\end{equation}
and we can write
\begin{eqnarray}
\Upsilon z_c\Upsilon &=&\Upsilon \sum_{m,n}\frac{1}{d_{\pi}}\delta_{\pi\sigma}\delta_{mn}\delta_{jk}\siteNXL{I}{Z^{\pi_{im}}_+}{Z^{\sigma_{nl}}_+}{I}\Upsilon\\
&=& \Upsilon \sum_{m}\frac{1}{d_{\pi}}\delta_{\pi\sigma}\delta_{jk}\siteNXL{I}{Z^{\pi_{im}}_+}{Z^{\sigma_{ml}}_+}{I}\Upsilon\label{zerroreq1}
\end{eqnarray}

This gives us the result that all errors of this type will vanish from the effective Hamiltonian unless they satisfy the conditions $\pi = \sigma$ and $j=k$. It also allows us to rewrite an operator of the form $z_c$ satisfying these conditions as an average over the shared index in the ground state. This will become important when we undertake the perturbation calculations.

We can find analogous results for the other $z$ error terms,
\begin{eqnarray}
\Upsilon z_d\Upsilon &=& \Upsilon \sum_{m,n,g}[\pi(g^{-1})]_{im}[\sigma(g)]_{nl}\siteNXL{Z^{\pi_{mj}}_-}{I}{I}{Z^{\sigma_{kn}}_-}\Upsilon\\
&=& \Upsilon \sum_{m}\frac{1}{d_{\pi}}\delta_{\pi\sigma}\delta_{il}\siteNXL{Z^{\pi_{mj}}_-}{I}{I}{Z^{\sigma_{km}}_-}\Upsilon\label{zerroreq2}
\end{eqnarray}
\begin{eqnarray}
\Upsilon z_a\Upsilon &=& \Upsilon \sum_{m,n,g}[\pi(g^{-1})]_{im}[\sigma(g)]_{nl}\siteNXL{Z^{\pi_{mj}}_-}{Z^{\sigma_{kn}}_+}{I}{I}\Upsilon\\
&=& \Upsilon \sum_{m}\frac{1}{d_{\pi}}\delta_{\pi\sigma}\delta_{il}\siteNXL{Z^{\pi_{mj}}_-}{Z^{\sigma_{km}}_+}{I}{I}\Upsilon
\end{eqnarray}
\begin{eqnarray}
\Upsilon z_b\Upsilon &=& \Upsilon \sum_{m,n,g}[\pi(g)]_{mj}[\sigma(g^{-1})]_{kn}\siteNXL{I}{I}{Z^{\sigma_{nl}}_+}{Z^{\pi_{im}}_-}\Upsilon\\
&=& \Upsilon \sum_{m}\frac{1}{d_{\pi}}\delta_{\pi\sigma}\delta_{jk}\siteNXL{I}{I}{Z^{\sigma_{ml}}_+}{{Z^{\pi_{im}}_-}}\Upsilon\label{zerroreq4}
\end{eqnarray}

The $l$ error terms require slightly different treatment. Consider first $l_a$:
\begin{eqnarray}
\Upsilon l_a\Upsilon &=& \Upsilon l_aS^1_{T}\Upsilon\\
&=&\Upsilon \sum_{g_1g_2g_3g_4 = 1}\siteGIGANTOR{L^{g}_+T_{-}^{g_1}}{L^{g'}_+T_{+}^{g_2}}{T_{+}^{g_3}}{T_{-}^{g_4}}\Upsilon\\
&=&\Upsilon \sum_{g_1g_2g_3g_4 = 1} \siteNXL{T_{-}^{g_1g^{-1}}}{T_{+}^{g'g_2}}{T_{+}^{g_3}}{T_{-}^{g_4}}l_a\Upsilon\\
&=&\Upsilon \sum_{g_1gg'^{-1}g_2g_3g_4 = 1} \siteNXL{T_{-}^{g_1}}{T_{+}^{g_2}}{T_{+}^{g_3}}{T_{-}^{g_4}}l_a\Upsilon
\end{eqnarray}

The first operator is now orthogonal to the projector $S_T^{1}$ unless $g'g^{-1}=1$. Because the ground space is stabilized by $S_T^{1}$, we can say that this vanishes unless $g=g'$. This gives:
\begin{equation}
\Upsilon l_a\Upsilon = \Upsilon\delta_{gg'}\siteN{L^{g}_+}{L^{g'}_+}{I}{I}\Upsilon
\end{equation}
i.e. error terms of the form $l_a$ will vanish unless they obey $g=g'$. The procedure for $l_b$ proceeds similarly to yield:
\begin{eqnarray}
\Upsilon l_b\Upsilon &=& \Upsilon \sum_{g_1g_2g_3g'g^{-1}g_4 = 1} \siteNXL{T_{-}^{g_1}}{T_{+}^{g_2}}{T_{+}^{g_3}}{T_{-}^{g_4}}l_b\Upsilon\\
&=&\Upsilon \delta_{gg'}\siteN{I}{I}{L^{g'}_-}{L^{g}_-}\Upsilon
\end{eqnarray}

The error operators $l_c$ and $l_d$ are more complicated.
\begin{eqnarray}
\Upsilon l_c\Upsilon &=& \Upsilon l_c\sum_{g_1g_2g_3g_4 = 1} \siteNXL{T_{-}^{g_1}}{T_{+}^{gg_2}}{T_{+}^{g_3g'^{-1}}}{T_{-}^{g_4}}\Upsilon
\end{eqnarray}

We can imagine the second operator acting on state $\ket{h_a,h_b,k_{\subfont{G}},k_{\subfont{L}}}$. This leads to this operator vanishing unless
\[
h_a^{-1}g^{-1}k_\subfont{L}g'k_\subfont{L}^{-1}h_ah_b=1
\]

In the ground subspace, we know $h_b=1$, this implies
\begin{equation}
g = k_\subfont{L}g'k_\subfont{L}^{-1}
\end{equation}
i.e. the nonvanishing operators of the form $l_c$ depend explicitly on the logical state. A similar result is obtained for the error term $l_d$:
\begin{eqnarray}
\Upsilon l_d\Upsilon &=& \Upsilon l_d\sum_{g_1g_2g_3g_4 = 1} \siteNXL{T_{-}^{g_1g^{-1}}}{T_{+}^{g_2}}{T_{+}^{g_3}}{T_{-}^{g'g_4}}\Upsilon\\
\end{eqnarray}

The second operator vanishes unless 
\[
h_a^{-1}gk_\subfont{L}g'^{-1}k_\subfont{L}^{-1}h_ah_b=1
\]

In the codespace this is equivalent to:
\begin{eqnarray}
g = k_\subfont{L}g'k_\subfont{L}^{-1}
\end{eqnarray}

There are two remaining kinds of terms which will arise in our perturbation treatment. They may take a diagonal form, e.g. 
\[
\hat{E}\sim \siteN{\hat{A}}{I}{\hat{B}}{I}
\]
in which case they will not contribute to low order terms (and will have little impact on the form of higher order terms). A similar treatment to \Aref{secerrorops} and Eq.(\ref{eqopstudyfirst}-\ref{eqopstudylast}) reveals that these operators act as both gauge and logical operators, just as in the cyclic case.

Alternatively, error operators may consist of mixed $L$ and $Z$ operators, e.g.
\[
\hat{E}\sim  \siteNXL{Z^{\pi_{ij}}_-}{L^g_+}{I}{I}
\]

It is a simple calculation to show that these kinds of operators will vanish except when $\pi=I$ (the trivial representation) and $g=1$. That is, the only non-vanishing operator of this form is the identity operator. These results show that the encoding we use will indeed prevent undesirable excitations from being permitted, leaving only the terms from the target Hamiltonian to arise in our perturbative treatment.

\bibliographystyle{bibstyle}
\bibliography{main}

\begin{thebibliography}{10}

\bibitem{Dennis2002}
E. Dennis, A. Kitaev, A. Landahl, and J. Preskill, ``Topological quantum
  memory,'' J. Math. Phys. {\bf 43}, 4452 (2002).

\bibitem{Wang2003}
C. Wang, J. Harrington, and J. Preskill, ``Confinement-Higgs transition in a
  disordered gauge theory and the accuracy threshold for quantum memory,''
  Annals of Physics {\bf 303}, 31  (2003).

\bibitem{Duclos-Cianci2010}
G. Duclos-Cianci and D. Poulin, ``Fast Decoders for Topological Quantum
  Codes,'' Phys. Rev. Lett. {\bf 104}, 050504 (2010).

\bibitem{Raussendorf98}
R. Raussendorf and J. Harrington, ``Fault-Tolerant Quantum Computation with
  High Threshold in Two Dimensions,'' Phys. Rev. Lett. {\bf 98}, 190504 (2007).

\bibitem{Raussendorf2006}
R. Raussendorf, J. Harrington, and K. Goyal, ``A fault-tolerant one-way quantum
  computer,'' Annals of Physics {\bf 321}, 2242  (2006).

\bibitem{Raussendorf2007}
R. Raussendorf, J. Harrington, and K. Goyal, ``Topological fault-tolerance in
  cluster state quantum computation,'' New Journal of Physics {\bf 9}, 199
  (2007).

\bibitem{KitaevTC97}
A.~Y. {Kitaev}, ``{Fault-tolerant quantum computation by anyons},'' Ann. Phys.
  {\bf 303}, 2 (2003).

\bibitem{Mochon2003}
C. Mochon, ``Anyons from nonsolvable finite groups are sufficient for universal
  quantum computation,'' Phys. Rev. A {\bf 67}, 022315 (2003).

\bibitem{Mochon2004}
C. Mochon, ``Anyon computers with smaller groups,'' Phys. Rev. A {\bf 69},
  032306 (2004).

\bibitem{Preskill1997}
J. Preskill,  in {\em Introduction to Quantum Computation}, H. Lo, S. Popescu,
  and T. Spiller, eds., (World Scientific, 1998).

\bibitem{Ogburn1999}
R. Walter~Ogburn and J. Preskill, ``Topological Quantum Computation,''  in {\em
  Quantum Computing and Quantum Communications}, Vol.~1509 of {\em Lecture
  Notes in Computer Science}, C. Williams, ed., (Springer Berlin / Heidelberg,
  1999), \ pp.\ 341--356.

\bibitem{LevinSN05}
M.~A. Levin and X.-G. Wen, ``String-net condensation: A physical mechanism for
  topological phases,'' Phys. Rev. B {\bf 71}, 045110 (2005).

\bibitem{Kitaev-06}
A. Kitaev, ``Anyons in an exactly solved model and beyond,'' Ann. Phys. {\bf
  321}, 2  (2006).

\bibitem{Kempe2006}
J. Kempe, A. Kitaev, and O. Regev, ``The Complexity of the Local Hamiltonian
  Problem,'' SIAM J. Comput. {\bf 35}, 1070 (2006).

\bibitem{Bravyi2008a}
S. Bravyi, D.~P. DiVincenzo, D. Loss, and B.~M. Terhal, ``Quantum Simulation of
  Many-Body Hamiltonians Using Perturbation Theory with Bounded-Strength
  Interactions,'' Phy. Rev. Lett. {\bf 101}, 070503 (2008).

\bibitem{Jordan2008}
S.~P. Jordan and E. Farhi, ``Perturbative gadgets at arbitrary orders,'' Phys.
  Rev. A {\bf 77}, 062329 (2008).

\bibitem{Oliveira2008}
R. Oliveira and B.~M. Terhal, ``The complexity of quantum spin systems on a
  two-dimensional square lattice,'' Quant. Inf. Comp. {\bf 8}, 0900 (2008).

\bibitem{Bravyi2010}
S. Bravyi, M.~B. Hastings, and S. Michalakis, ``Topological quantum order:
  Stability under local perturbations,'' Journal of Mathematical Physics {\bf
  51}, 093512 (2010).

\bibitem{Bravyi2010a}
S. Bravyi and M.~B. Hastings, ``A short proof of stability of topological order
  under local perturbations,'' arXiv:1001.4363  (2010).

\bibitem{Nayak2008}
C. Nayak, S.~H. Simon, A. Stern, M. Freedman, and S. Das~Sarma, ``Non-Abelian
  anyons and topological quantum computation,'' Rev. Mod. Phys. {\bf 80}, 1083
  (2008).

\bibitem{Aguado2010}
M. Aguado, ``From entanglement renormalisation to the disentanglement of
  quantum double models,''  arXiv/1101.0527 (2011).

\bibitem{Affleck1988}
I. Affleck, T. Kennedy, E.~H. Lieb, and H. Tasaki, ``Valence bond ground states
  in isotropic quantum antiferromagnets,'' Comm. Math. Phys. {\bf 115}, 477
  (1988).

\bibitem{Fannes1992}
M. Fannes, B. Nachtergaele, and R.~F. Werner, ``Finitely correlated states on
  quantum spin chains,'' Comm. Math. Phys. {\bf 144}, 443 (1992).

\bibitem{Hastings2006}
M.~B. Hastings, ``Solving gapped Hamiltonians locally,'' Phys. Rev. B {\bf 73},
  085115 (2006).

\bibitem{Perez-Garcia2007a}
D. P\'{e}rez-Garc\'{i}a, F. Verstraete, M.~M. Wolf, and J.~I. Cirac, ``Matrix
  Product State Representations,'' Quant. Inf. Comp. {\bf 7}, 401 (2007).

\bibitem{Verstraete2006}
F. Verstraete, M. Wolf, D. P\'{e}rez-Garc\'{i}a, and J.~I. Cirac, ``Projected
  Entangled States: Properties and Applications,'' Int. J. Mod. Phys. B {\bf
  20}, 5142 (2006).

\bibitem{Vidal2003a}
G. Vidal, ``Efficient Classical Simulation of Slightly Entangled Quantum
  Computations,'' Phys. Rev. Lett. {\bf 91}, 147902 (2003).

\bibitem{Hastings2007a}
M.~B. Hastings, ``An area law for one-dimensional quantum systems,'' J. Stat.
  Mech. {\bf 08}, P08024 (2007).

\bibitem{Verstraete-2006}
F. Verstraete, M.~M. Wolf, D. P\'{e}rez-Garc\'{i}a, and J.~I. Cirac,
  ``Criticality, the Area Law, and the Computational Power of Projected
  Entangled Pair States,'' Phys. Rev. Lett. {\bf 96}, 220601 (2006).

\bibitem{Schuch2010}
N. Schuch, I. Cirac, and D. P\'{e}rez-Garc\'{i}a, ``PEPS as ground states:
  Degeneracy and topology,'' Ann. Phys. {\bf 325}, 2153  (2010).

\bibitem{Gu-09}
Z.-C. Gu, M. Levin, B. Swingle, and X.-G. Wen, ``Tensor-product representations
  for string-net condensed states,'' Phys. Rev. B {\bf 79}, 085118 (2009).

\bibitem{Buerschaper-09}
O. Buerschaper, M. Aguado, and G. Vidal, ``Explicit tensor network
  representation for the ground states of string-net models,'' Phys. Rev. B
  {\bf 79}, 085119 (2009).

\bibitem{Buerschaper2009}
O. Buerschaper and M. Aguado, ``Mapping Kitaev's quantum double lattice models
  to Levin and Wen's string-net models,'' Phys. Rev. B {\bf 80}, 155136 (2009).

\bibitem{Buerschaper2010}
O. Buerschaper, M. Christandl, L. Kong, and M. Aguado, ``Electric-magnetic
  duality and topological order on the lattice,'' arXiv:1006.5823  (2010).

\bibitem{VandenNest-08}
M. Van~den Nest, K. Luttmer, W. D\"ur, and H.~J. Briegel, ``Graph states as
  ground states of many-body spin-$1/2$ Hamiltonians,'' Phys. Rev. A {\bf 77},
  012301 (2008).

\bibitem{Koenig2010}
R. Koenig, ``Simplifying quantum double Hamitonians using perturbative
  gadgets,'' Quant. Inf. Comp. {\bf 10}, 292 (2010).

\bibitem{Bartlett-2006}
S.~D. Bartlett and T. Rudolph, ``Simple nearest-neighbor two-body Hamiltonian
  system for which the ground state is a universal resource for quantum
  computation,'' Phys. Rev. A {\bf 74}, 040302 (2006).

\bibitem{Griffin-08}
T. Griffin and S.~D. Bartlett, ``Spin lattices with two-body Hamiltonians for
  which the ground state encodes a cluster state,'' Phys. Rev. A {\bf 78},
  062306 (2008).

\bibitem{Raussendorf2001}
R. Raussendorf and H.~J. Briegel, ``A One-Way Quantum Computer,'' Phys. Rev.
  Lett. {\bf 86}, 5188 (2001).

\bibitem{Shor1995}
P.~W. Shor, ``Scheme for reducing decoherence in quantum computer memory,''
  Phys. Rev. A {\bf 52}, R2493 (1995).

\bibitem{Steane1996}
A.~M. Steane, ``Error Correcting Codes in Quantum Theory,'' Phys. Rev. Lett.
  {\bf 77}, 793 (1996).

\bibitem{Gottesman1997}
D. Gottesman, {\em Stabilizer codes and quantum error correction}, Ph.D.
  thesis, Caltech, quant-ph/9705052, 1997.

\bibitem{Kribs2005}
D. Kribs, R. Laflamme, and D. Poulin, ``Unified and Generalized Approach to
  Quantum Error Correction,'' Phys. Rev. Lett. {\bf 94}, 180501 (2005).

\bibitem{Kribs2006}
D.~W. Kribs, R. Laflamme, D. Poulin, and M. Lesosky, ``Operator quantum error
  correction,'' Quant. Inf. Comp. {\bf 6}, 383 (2006).

\bibitem{Bombin2010}
H. Bombin, ``Topological subsystem codes,'' Phys. Rev. A {\bf 81}, 032301
  (2010).

\bibitem{Bombin2009}
H. Bombin, M. Kargarian, and M.~A. Martin-Delgado, ``Quantum 2-body Hamiltonian
  for topological color codes,'' Fortschr. Phys. {\bf 57}, 1103 (2009).

\bibitem{Kargarian2010}
M. Kargarian, H. Bombin, and M.~A. Martin-Delgado, ``Topological color codes
  and two-body quantum lattice Hamiltonians,'' New J. Phys. {\bf 12}, 025018
  (2010).

\bibitem{Sarvepalli2010}
P. Sarvepalli, ``{Topological color codes over higher alphabet},'' 2010 IEEE
  Information Theory Workshop \ pp.\ 1--5 (2010).

\bibitem{Suchara2010}
M. Suchara, S. Bravyi, and B. Terhal, ``{Constructions and noise threshold of
  topological subsystem codes},'' J. Phys. A {\bf 44}, 155301 (2011).

\bibitem{Hamma2005}
A. Hamma, R. Ionicioiu, and P. Zanardi, ``Bipartite entanglement and entropic
  boundary law in lattice spin systems,'' Phys. Rev. A {\bf 71}, 022315 (2005).

\bibitem{Kitaev2006a}
A. Kitaev and J. Preskill, ``Topological Entanglement Entropy,'' Phys. Rev.
  Lett. {\bf 96}, 110404 (2006).

\bibitem{Levin2006}
M. Levin and X.-G. Wen, ``Detecting Topological Order in a Ground State Wave
  Function,'' Phys. Rev. Lett. {\bf 96}, 110405 (2006).

\bibitem{Flammia2009b}
S.~T. Flammia, A. Hamma, T.~L. Hughes, and X.-G. Wen, ``Topological
  Entanglement R\'{e}nyi Entropy and Reduced Density Matrix Structure,'' Phys.
  Rev. Lett. {\bf 103}, 261601 (2009).

\bibitem{Nussinov2008}
Z. Nussinov and G. Ortiz, ``Autocorrelations and thermal fragility of anyonic
  loops in topologically quantum ordered systems,'' Phys. Rev. B {\bf 77},
  064302 (2008).

\bibitem{Castelnovo2007}
C. Castelnovo and C. Chamon, ``Entanglement and topological entropy of the
  toric code at finite temperature,'' Phys. Rev. B {\bf 76}, 184442 (2007).

\bibitem{Gils2009}
C. Gils, S. Trebst, A. Kitaev, A.~W.~W. Ludwig, M. Troyer, and Z. Wang,
  ``Topology-driven quantum phase transitions in time-reversal-invariant
  anyonic quantum liquids,'' Nat Phys {\bf 5}, 834 (2009).

\bibitem{Dusuel2008}
S. Dusuel, K.~P. Schmidt, and J. J.~Vidal, ``Creation and Manipulation of
  Anyons in the Kitaev Model,'' Phys. Rev. Lett. {\bf 100}, 177204 (2008).

\bibitem{Schmidt2008}
K. Schmidt, S. Dusuel, and J. Vidal, ``{Emergent Fermions and Anyons in the
  Kitaev Model},'' Physical Review Letters {\bf 100}, 057208 (2008).

\bibitem{Vidal2008}
J. Vidal, K. Schmidt, and S. Dusuel, ``{Perturbative approach to an exactly
  solved problem: Kitaev honeycomb model},'' Physical Review B {\bf 78}, 245121
  (2008).

\bibitem{Yang2007}
S. Yang, D.~L. Zhou, and C.~P. Sun, ``Mosaic spin models with topological
  order,'' Phys. Rev. B {\bf 76}, 180404 (2007).

\bibitem{BaconThesis}
D. Bacon, {\em Decoherence, Control, and Symmetry in Quantum Computers}, Ph.D.
  thesis, 2003.

\bibitem{Bergman2007}
D.~L. Bergman, R. Shindou, G.~A. Fiete, and L. Balents, ``Degenerate
  perturbation theory of quantum fluctuations in a pyrochlore
  antiferromagnet,'' Phys. Rev. B {\bf 75}, 094403 (2007).

\end{thebibliography}

\end{document}